\newtheorem{definition}{Definition}
\newtheorem{lemma}{Lemma}
\newtheorem{assumption}{Assumption}
\newtheorem{prop}{Proposition}
\newtheorem{cor}{Corollary}
\newcommand{\mS}{{\cal M}}
\newcommand{\setR}{{\cal R}}
\newcommand{\setC}{{\cal C}}
\newcommand{\setCp}{{\cal C^+}}
\newcommand{\C}{{\left (I_C,I_C \right ) }}
\newcommand{\sCNash}{{\cal C}^*}
\newcommand{\commStrucNash}{\left ( \sCNash, \{\aSCs(y)\}_{y \in \setR},  \{\bSCs(\cdot|y)\}_{y \in \setR} \right )}
\newcommand{\as}{\alpha^*}
\newcommand{\aC}{\alpha_C}
\newcommand{\asC}{\alpha^*_C}
\newcommand{\aSCs}{{\alpha^*_{\setC}}}
\newcommand{\bs}{\beta^*}
\newcommand{\bC}{\beta_C}
\newcommand{\bsC}{\beta^*_C}
\newcommand{\bSCs}{{\beta^*_{\setC}}}
\newcommand{\xs}{x^*}
\newcommand{\xsC}{x^*_C}
\newcommand{\xsy}{x_y^*}
\newcommand{\Dl}{\Delta}
\newcommand{\Ds}{\Delta^*}
\newcommand{\Dsy}{\Delta_y^*}
\newcommand{\QsC}{Q^*_C}
\newcommand{\UdC}{U^{(d)}_C}
\newcommand{\FdC}{F^{(d)}_C}
\newcommand{\UsC}{U^{(s)}_C}
\newcommand{\FsC}{F^{(s)}_C}
\newcommand{\Bbl}{\Big (}
\newcommand{\Bbr}{\Big )}
\newcommand{\Bsbl}{\Big [}
\newcommand{\Bsbr}{\Big ]}
\newcommand{\Bsl}{\Big \{}
\newcommand{\Bsr}{\Big \}}
\title{Modeling and Analysis of Information Communities}
\author{Peter Marbach\\ Department of Computer Science\\ University of Toronto}
\begin{document}

\maketitle

\begin{abstract}
In this paper we propose a mathematical model to study communities in social networks where agents share/exchange information. We refer to such a social network as an information network. More precisely, we assume that there is a population of agents who are interested in sharing content. Agents differ in the type of content they are interested in, as well as their ability to produce content. Agents can form/join communities in order to maximize their utility for obtaining and producing content. We use the framework of Nash equilibrium to characterize the community structures that emerge under this model. In our analysis, we show that there always exists a Nash equilibrium. In addition, we characterize the community structure, as well as the content that is being produced, under a  Nash equilibrium.
\end{abstract}

\section{Introduction}\label{section:introduction}

Communities play an important role in social networks. However while there exists a large body of work that formally models and studies macroscopic properties of social networks such as the degree distribution and diameter, less work is available on mathematical models for microscopic properties of communities in social networks. Creating such a model is the topic of this paper. Ideally this model should be simple enough to allow a formal analysis, yet be expressive enough to provide insights into important microscopic properties of communities in social networks.

For our model we focus on a particular type of social networks, to which we refer to as \emph{information networks}, where agents (individuals) share/exchange information. Sharing/exchanging of information is an important aspect of the social networks, both for social networks that we form in our everyday lives, as well as for online social networks such as for example Twitter. 

To model communities in information networks, to which we refer to as \emph{information communities}, we assume that all content (information)  that is being shared is being produced by agents in the network. Furthermore, we assume that different agents have different abilities to produce (generate) content, and different interests in consuming (obtaining) content. Using this setup, we study the situation where agents form communities in order to share/exchange content. We formally model this situation through utility functions that captures the benefit that agents obtain from being in a particular information community, where we assume that agents obtain an utility both from consuming (obtaining) content as well as from producing content. Using this model we study the following questions: a) which community structures will emerge under this model, i.e. can we characterize the set of content producers/consumers that form a community, and b) which content will individual agents produce in a given community.

The model that we use to characterize information networks has three important components: how we model the space of content that is being produced and consumed in the, how we model agent's interests and ability to produce content, and how we measure the utility that agents obtain for consuming and producing content.

For our analysis we assume that each content item is being produced in an information network can be associated with a particular content type. We might think of a content type as a topic, or particular interest that agents have. In addition, we assume that the space of possible content types has a certain structure. In particular,  we assume that  there exists a measure of ``closeness'' (similarity) between content types that characterizes how strongly related two content types are. For example, ``basketball'' and ``baseball'' are both sports and one would assume that these two topics are stronger related with each other than for example ``basketball'' and ``mathematics''. The assumption that there exists such a measure of closeness is an important feature of our model.

Another important aspect of the proposed model is how we characterize the agent's interest, and agent's ability to produce content. For our analysis, we assume that agents in a community might have different interests with respect to the content they are interested in, and different abilities to produce content. The model which type of content a given agent is interested in is based on the following intuition. We assume that each agents has a main interest, i.e. there exists a content type (topic) that the agent is interested in the most. However, we assume that  that agents are not only interested in getting information for their main interest but also other type of content. Moreover, the further away a given content type is from their center of interest, the less interested they are in this content type. Or in other words, the less similar a given content type is with respect to the agent's main interest, the less interested the agent is in this content type. For example, an agent that is interested in ``baseball'' might also be somewhat interested in ``basketball'', but not necessarily in ``mathematics''.

Similarly, we assume that agent's have different abilities to produce content. Here we assume that agent's have the highest ability to produce content for their center of interest. Or in other words, if an agent produces content items for the topic they are most interested in, then this content item is very likely to be relevant for this topic (content type). In addition, the further away (dissimilar) a given content type is from the agent's center of interest, the less likely it is that a content item that the agent produces for this topic is indeed relevant.

Finally, we model the utilities that agents obtain for consuming and producing content as follows. We assume that for each content item that an agent consumes, the agent receives a reward if that content item is of interest to the agent. In addition, each content item that the agent consumes that agents pays a cost.  This cost accounts for  the effort/time required by an agent to consume a single content item (and decide whether it is of interest or not). Similarly, we assume that for each content item that an agent produces it receives a benefit that is equal to the number of agents that consume the content item, and find it to be of interest to them. In addition, each content item that an agent produces incurs a cost that is proportional to the number of agents that consume the content item. Under this model, an agent receives a high utility for content consumption if it receives a lot of content item that are of interest, and very few content items that are not of interest. Similarly, an agent receives a high utility for content production if most agents that consume the content items the agent produces find the content item of interest. These definitions imply that an information network is efficient (i.e. agents receive a high utility), if agents receive all content that is of interest to them, and no content that is not of interest. Of in other words, the information network has a structure such that the content each agent receives is content that is of interest to the agent.

In our analysis we focus on a particular content space, namely a one dimensional content space with the torus metrics. In addition, we focus on the limiting case of a dense agent population where we characterize the agent population through a density function. These assumptions simplify the analysis, and lead to results that are easier to interpret and obtain insights into the community structure in information networks.

An interesting outcome of the analysis is that the proposed model, albeit being very simple, indeed seems to be able to provide interesting insights into the microscopic structure of information communities. For example, the characterization of how content is being produced, i.e. which content each agent in a community produces, indeed matches what has been experimentally observed in real-life social networks. We discuss this connection in more details in Section~\ref{section:related}~and~\ref{section:results}.

In summary, the contributions of the paper are as follows. We propose a mathematical model to study information communities that is simple enough to allow a formal analysis. In particular, using a game-theoretic framework, we characterize the community structure that arises under a Nash equilibrium. Furthermore, the model is able to capture important microscopic properties of information communities that have been observed in real-life social networks, such as for example the structure of content production in information communities. 

Whereas this paper focuses on the analysis of the proposed model, there are several possible extension, as well as applications of the model to study additional properties of information networks such as content filtering and dissemination, as well as the connectivity structure within an information community. We discuss this research directions in more details in Section~\ref{section:conclusions}.

The rest of the paper is structured as follows. In Section~\ref{section:related} we discuss existing work that is most closely related to the model and analysis presented in this paper. In Sections~\ref{section:content_model}~-~\ref{section:nash}  we define our mathematical model for information communities, and in Section~\ref{section:results} we present our result on community structures that emerge under a Nash equilibrium.

\section{Related Work}\label{section:related}
We note that there is a large body work on the experimental and theoretical study of social networks and their properties; due to space constraints  we highlight here only the work that is most relevant to the discussion presented in this paper.

There exists extensive work, both experimental and theoretical, on the macroscopic properties of social network graphs such as the small world phenomena, shrinking diameter, and power-law degree distribution. By now there are several mathematical models that describe well these properties; examples of such models are Kroenecker graphs~\cite{Kronecker} and geometric protean graphs~\cite{geo-p}. The difference between this body of work and the model presented here is that these models on the social network graph a) do not explicitly model and analyze community structures, and b) focus on macroscopic properties of the social network graph rather than microscopic properties of communities in social networks. 

 There is also a large body of work on community detection algorithms (see for example~\cite{survey_clustering} for a survey) including minimum-cut methods, hierarchical clustering, Givran-Newman algorithm, modularity maximization, spectral clustering, and many more. In this paper we are not so much concerned with detecting communities (clusters) in a social network (complex graph), but with modeling and characterizing the community structure that emerges in an information network. An interesting approach to community detection is taken in~\cite{game} by Chen et al. who use a game-theoretic approach to detect overlapping communities in social networks. In~\cite{game} it is assumed that there already exists an underlying social graph for the social network, and the utility that agents obtain when joining a community depends on a) which other agents joined the community and b) the underlying social graph. This is different from the approach in this paper where the utility depends on content production and consumption. In addition, the goal of the work in~\cite{game} is to develop an algorithm to detect overlapping communities, whereas in this paper we aim at characterizing the microscopic structure of information communities.


 Related to the analysis in this paper is the work on content forwarding and filtering in social networks~\cite{goel,filtering,hegde}. In particular the work by Zadeh, Goel and Munagala~\cite{goel}, and the work by Hegde, Massoulie, and Viennot~\cite{hegde}. In~\cite{goel}, Zadeh, Goel and Munagala consider the problem of information diffusion in social networks under a broadcast model where content forwarded (posted) by a user is seen by all its neighbors (followers, friends) in the social graph. For this model, the paper~\cite{goel} studies whether there exists a network structure and filtering strategy that lead to both high recall and high precision. High recall means that all users receive all the content that they are interested in, and high precision means that all users only receive content they are interested in. The main result in~\cite{goel} shows that this is indeed the case under suitable graph models such as for example Kronecker graphs. In~\cite{hegde},  Hegde, Massoulie, and Viennot study the problem where users are interested in obtaining content on specific topics, and study whether there exists a graph structure and filtering strategy that allows users to obtain all the content they are interested in. Using a game-theoretic framework (flow games), the analysis in~\cite{hegde} shows that under suitable assumptions there exists a Nash equilibrium, and selfish dynamics converge to a Nash equilibrium. The main difference between the model and analysis in~\cite{goel,hegde} and the approach in this paper is that model and analysis in~\cite{goel,hegde} does not explicitly consider and model community structures,  and the utility obtained by users under the models in~\cite{goel,hegde} depends only on the content that agents receive, but not on the content agents produce.

There exists an interesting connection between the modeling assumption made by Zadeh, Goel and Munagala in~\cite{goel} and by Hegde, Massoulie, and Viennot in~\cite{hegde}, and a result obtained in this paper (Proposition~\ref{prop:nash}, Section~\ref{section:nash}). Both papers~\cite{goel,hegde} make the modeling assumption that users produce content only on a small subset of content that they are interested in receiving. Zadeh, Goel and Munagala support this assumption in~\cite{goel} through experimental results obtained on Twitter data that shows that Twitter users indeed tend to produce content on a narrower set of topics than they consume. Proposition~\ref{prop:nash} in Section~\ref{section:results} in this paper provides a formal validation/explanation for this assumption as it shows that under the proposed model it is optimal for agents (users) to produce content on a small subset of the content type that they are interested in consuming. This result illustrates that the proposed model is able to capture and explain important microscopic properties of information networks and communities.

\newpage

\section{Content Production and Consumption}\label{section:content_model}
In this section we present the mathematical model we use for our analysis. In particular,  we describe the model that we use to characterize the content that agents produce and consume in an information network. In addition, we introduce how we characterize agents' interest in a particular type of content, as well as their abilities to produce a particular type of content.

For our analysis, we model the content that agents produce in an information community as follows. We assume that each content item that is being produced by an agent in the information network belongs to a particular content type. One might think of a content type as a topic, or particular interest that agents have. Furthermore, we assume that there exists  a structure that relates the different content types with each other. In particular, we assume that  there exists a measure of ``closeness'' between content types that characterizes how strongly related two content types are. For example ``basketball'' and ``baseball'' are both sports, and one would assume that these two topics are stronger related with each other than for example ``basketball'' and ``mathematics''.

More formally, we model the type of an content item as follows. We assume that the type of a content item is given by a point $x$ in a metric space $\mS$. The assumption that content types lie in a metric space allows for a natural way to compare the closeness between content types. In particular, the distance between two content types $x,y \in \mS$ is then given by the distance measure $d(x,y)$, $x,y \in \mS$, of the metric space $\mS$.


Having defined the space $\mS$ of content types, we next describe agent's interest and ability to produce content. Recall that we assume that agents in a community might have different interests with respect to the content they are interested in, as well as  different abilities to produce content. 

We first model the content that a given agent is interested in. Our model is based on the following intuition. We assume that agents have a main interest, i.e. for each agent there exists a content type $x \in \mS$ that they are most interested in. Moreover, agents are interested in more than one content type, i.e. agents are not only interested in getting information for their main interest $x \in \mS$ but also other type of content. However, the further away a given content type is from their center of interest, the less interested they are in this content type.

To model this situation, we associate with each agent that consumes content a center of interest $y \in \mS$. The center of interest of a given agent is the content type (topic) that an agent is most interested in. Given the center of interest  $y \in \mS$ of an agent, the interest of agent $y$ in content of type  $x \in \mS$ is then given by
$$ p(x|y) = f(d(x,y)),$$
where $d(x,y)$ is the distance between the center of interest $y$ and the content type $x$, and $f:[0,\infty) \mapsto [0,1]$ is a non-increasing function. The interpretation of the function $p(x|y)$ is as follows: when agent $y$ consumes (reads) a  content item of type $x$, then it finds it interesting with probability $p(x|y)$ given by
$$p(x|y) = f(d(x,y)).$$
As the function $f$ is non-increasing, this model captures the intuition that agent an $y \in \mS$ is more interested in content that is close to its center of interest $y$. 

We will refer in the following to an agent by its center of interest $y \in \mS$, i.e. when we refer to an agent $y \in \mS$ we refer to the agent whose center of interest is equal to $y$. 

Next consider a given agent that is producing content, and let $y \in \mS$ be the center of interest of the agent. The ability of agent $y$ to produce content of type $x \in \mS$ is then given by
$$ q(x|y) = g(d(x,y)),$$
where $g:[0,\infty) \mapsto [0,1]$ is a non-increasing function.  The interpretation of this function is as follows. If agent $y$ produces content of type $x$, then the content will  be relevant to content type $x$ with probability $q(x|y)$ given by
$$q(x|y) = g(d(x,y)).$$
As the function $g$ is non-increasing, this model captures the intuition that agent an $y \in \mS$ is better at producing content that is close to its center of interest $y$.

\subsection{Assumptions on the functions $f$ and $g$}
For our analysis we make additional assumptions on the functions $f$ and $g$ that we use to define the functions $p(\cdot | y)$ and   $q(\cdot | y)$. To state these assumptions, we introduce a few more definitions.

Throughout the paper, we use the following notation. Given a real-valued function $f: \mS \mapsto R$ on a metric space $\mS$, we define the support  $supp(f)$ by
$$supp(f) = \bar A$$
where
$$A = \{ x \in \mS | f(x) \neq 0\},$$
and $\bar A$ is the closure of $A$.

Given real-valued function $f: \mS \mapsto R$ on a metric space $\mS$, we say that $f$ is symmetric with respect to  $y \in \mS$ if for $x,x' \in \mS$ such that
$$ d(x,y) = d(x',y),$$
we have that
$$ f(x) = f(x').$$

Using these definitions, we make the following assumptions for the function $f$ and $g$.
\begin{assumption}\label{ass:fg}
  The function $f: [0,L] \mapsto [0,1]$ is strictly decreasing and twice continuously differentiable on $[0,L]$, and has a bounded first derivative on $[0,L]$.
Furthermore, the function $f$ is locally strictly concave, i.e. there exists a constant $b$, $0 < b \leq L$, such that 
$$f''(x) < 0, \qquad x \in [0,b].$$
The function $g: [0,L] \mapsto [0,1]$ is non-increasing on $[0,L]$, and strictly concave and twice  continuously differentiable on its support $supp(g)$ with
$$g(0) > 0$$
and
$$g'(0) = 0.$$
\end{assumption}
The assumptions that the functions $f$ and $g$ are twice differentiable and (locally) strictly concave, are a technical assumptions used in the proofs of our results.

\subsection{Content Space $\setR$}\label{section:setR}
In addition, we consider for our analysis a metric space that has a particular structure. More precisely, we consider a one dimensional metric space with the torus metric. The reason for using  this structure is that it simplifies the analysis and allows us to obtain simple expressions for our results, that can easily been interpreted. 

More formally, we consider in the following one-dimensional metric space  for our analysis. The metric space is given by an interval  $\setR = [-L, L) \in R$, $0 < L$, with the torus metric, i.e. the distance between two points $x,y \in \setR$ is given by 
$$d(x,y) = ||x-y|| = \min \{ |x-y|, 2L - |x-y|\},$$
where $| x |$ is the absolute value of $x \in (-\infty,\infty)$.

Note that we have that
$$||x-y|| \leq L, \qquad x,y \in \setR.$$
Furthermore, we have the following two properties,
\begin{enumerate}
\item for $x,y \in \setR$, the addition of $x$ and $y$ is given by
$$ x+ y =
\left \{ \begin{array}{ll}
x +y, & x+y \in [-L,L). \\
-2L + x + y, & x+y \geq L. \\
2L + x + y, & x+y < -L.
\end{array} \right .$$
\item  we have that
$$x < y, \qquad x,y \in \setR.$$
if there exists a point $b$, $b \in (0,L)$, such that
$$ x = y - b.$$
\end{enumerate}

Using the torus metric for the content space $\setR$ eliminates ``border effects'', in the sense are no points that have a ``special'' position as it would be for example the case if we would an interval $[-L,L]$ as the content space. This simplifies the analysis, and leads to simpler expressions for our results.

\subsection{Limiting Case of a Dense Agent Population}
Finally, we characterize the set of agents that participate in the information network, i.e. the set of agents consume and produce content in the information network. As we identify an agent by its center of interest $y \in \setR$, the set of agents that consume content, as well as the set of agents that produce content, are then given by a subset of $\setR$. There are two possible approaches to characterize these two sets of agents.

One approach is to assume that there exists a finite number of agents that produce, and consume, content. For example, the set of agents that consume content could be given by set of $K$ agents $y_k$, $k=1,...,K$ such that
$$ y_k = -L + k \delta,$$
where $\delta > 0$ is given by
$$ \delta = \frac{2L}{K}.$$
This approach would result of a set $K$ agents are ``uniformly'' spaced  over the set $\setR$ with  distance $\delta$.

Another  approach is to consider the limiting case of a dense agent population. In particular, we obtain this limiting case by letting  the above distance $\delta$ between agents to approach 0, and consider the resulting density $\mu(y)$, $y \in \setR$, of agents in $\setR$. We refer to this case as the continuous agent population model. 

In this paper we consider the second approach and characterize the distribution of agents on $\setR$ by a density $\mu(y)$, where we make the following assumption on the density $\mu(y)$. 

\begin{assumption}\label{ass:agents}
Both agents that consume content, and agent that produce content, are uniformly distributed on $\setR$ with density
$$\mu(y) = 1, \qquad y \in \setR.$$ 
\end{assumption}

\newpage

\section{Information Communities}~\label{section:communities}
Having defined the content space, and the agent's interest and ability to produce content, we next model a community in an information network. Here we assume that a community is given by a set of users that produce content, and a set of users that consumes content.  

More formally, for our analysis we assume that an information community $C = (C_d,C_s)$ is given by a set $C_d \subseteq \setR$ of agents that consume content in the community $C$, and a set $C_s \subseteq \setR$ of agents that produce content in the community $C$. The subscript in $C_d$ in $C_s$ refer to ``demand'', and ``supply'', respectively. In the following we define more precisely how content is being produced and consumed, and the resulting utility functions for content consumption and production. 

For a given a community $C= (C_d,C_s)$, we assume that each agent $y \in C_s$  can decide how much effort they put into producing content of type $x$. We model this situation as follows. We let $\beta_C(x|y)$, $x \in \setR$,  be the rate at which an agent $y \in C_s$ generates content of type $x$ in the community $C$. The function $\beta_C(\cdot|y)$ is a non-negative function, and we have that
$$\beta_C(x|y) \geq 0, \qquad x \in \setR.$$
The total rate (over all agents $y \in C_s$) at which content of type $x$ is generated in the community $C$ is then given by
$$\beta_{C}(x) = \int_{C_s}  \beta_C(x|y) dy.$$
Recall that content of type $x$ that is generated by agent $y \in C_s$ is relevant to $x$ with probability $q(x|y)$, and that total rate (over all agents $y \in C_s$) at which relevant content is generated in $C$ is equal to
$$Q_C(x) = \int_{C_s}  \beta_C(x|y) q(x|y) dy.$$
We refer to $Q_C(x)$ as the content supply function for community $C$. 

Furthermore,  we assume that each agent $y \in C_d$ can decide on how much time it allocates to read (consume) content that is being produced in community $C$. Let $\alpha_C(y)$,
$$ 0 \leq \alpha_C(y) \leq 1,$$
be the the fraction of time that agent $y \in C_d$ allocates for consuming content in community $C$, and let the function $P_C(x)$ be given by
$$P_{C}(x) = \int_{C_d}  \alpha_C(y) p(x|y) dy.$$
We refer to $P_C(x)$ as the content demand function of  community $C$.

Using these definitions, we next characterize the the utilities that agents obtain in a given community $C= (C_d,C_s)$ for consuming, and producing, content. 
We first define the utility for content consumption. For this, we assume that consuming a content item incurs a cost $c$. The cost  $c$ is a (per item) processing cost that reflects the effort/time required by an agent to  read a single content item (and decide whether it is of interest or not). If the content item is of interest then the agent receives a reward equal to 1; otherwise the agent receives a reward equal to 0.

Using this cost and reward structure, we then obtain the following utility function for agents $y \in C_d$.  Note that if the fraction of time that agent $y \in C_d$ is consuming content in community $C$ is equal to $\aC(y)$,
$$ 0 \leq \aC(y) \leq 1,$$
then the (time-average) rate $\mu_C(x|y)$ at which an agent $y \in C_d$ receives rewards in community $C$ is given by
\begin{eqnarray*}
\mu_C(x|y) &=& \alpha_C(y) \int_{C_s} \beta_C(x|z) q(x|z) p(x|y) dz \\
&=& \alpha_C(y) Q_C(x) p(x|y).
\end{eqnarray*}
The (time-average) rate at which an agent $y \in C_d$ consumes content of type $x$ in the community $C$ is given by 
$$\alpha_{C}(y)\beta_C(x),$$
and the (time-average) cost rate on agent $y$ is given by
$$\alpha_{C}(y)\beta_C(x)c.$$
Combining these results, the (time-average) utility  rate for content consumption $U^{(d)}_C(y)$ of an agent $y \in C_d$ in community $C$ is given by
\begin{equation}~\label{eq:UdC}
  U^{(d)}_C(y) = \alpha_{C}(y) \int_{\setR} \Big [ Q_C(x) p(x|y) - \beta_C(x) c \Big ] dx.
\end{equation}

Similarly, the (time-average) utility rate for content production  $U^{(s)}_C(y)$ of an agent $y \in C_s$ in community $C$ is given by
\begin{eqnarray} \nonumber
U^{(s)}_C(y) = \int_{x \in \setR} \beta_{C}(x|y) \int_{z\in C_d} \alpha_C(z) \Big [  q(x|y) p(x|z)- c \Big ] dz dx \\~\label{eq:UsC}
= \int_{\setR} \beta_{C}(x|y) \Big [q(x|y) P_C(x) - \alpha_C c \Big ] dx,
\end{eqnarray}
where
$$\aC = \int_{I_C} \aC(y) dy.$$
This utility rate has the following interpretation. Note that $[q(x|y) p(x|z)- c]$ is the expected reward that agent $z \in C_d$ receives from content of type $x$ that is being produced by agent $y \in C_s$. Therefore in an economic setting  $[q(x|y) p(x|z)- c]$ is the amount (price) that agent $z$ is willing to pay agent $y$ for obtaining from agent $y$ content of type $x$. In this sense, one interpretation of the utility rate of a content producer $y \in C_s$ is that it reflects  the revenue that $y$ would obtain for the content that $y$ produces in $C$. An alternative interpretation, and the one we adopt in this paper, is that the utility rate of content producer $y \in C_s$ reflects the reputation, or ``reputation score'', of agent $y$ in the community  $C$, i.e. it captures how beneficial the contributions of  a content producer $y$ are for the community $C$. 
 
An interesting aspect of this model is that that total utility rates over all agents for content consumption and content production is the same, i.e. one can show that
\begin{eqnarray*}
\int_{C_d} U^{(d)}_C(y) dy = \int_{C_s} U^{(s)}_C(y)dy 
= \left (\int_{\setR} P_C(x)Q_C(x) dx \right ) - \alpha_C \beta_C c,
\end{eqnarray*}
where
$$\beta_C = \int_{\setR} \beta_C(x) dx.$$

\subsection{Interpretation of the Model}

In our model we assume that there are agents that produce content as well as agents that consume content. There are two possible ways to interpret/apply this model. First we can assume that there exist two sets of agents: one set of agents that exclusively consumes content, and a second set of agents that exclusively produces content. In this model, the set of agents that produce and consume content are disjoint. Second, we can assume that there exists a single set of agents where each agent both consumes and produces content, and agents are best at producing the content type which they are most interested in.

For the model that we consider here, i.e. under the assumption that there exists an infinite number of agents as given by Assumption~\ref{ass:agents} both interpretation lead to identical results, i.e. our results in Section~\ref{section:results} are valid for both cases.


\newpage
\section{Community Structure and Nash Equilibrium}\label{section:nash}
In the previous section  we characterized the utility that agents obtain when consuming  and producing content  in a given community. In the following, we assume that there are several communities, and each agent can decide which community to join, and how  much time/effort to allocate to produce and consume content in a community they join. Moreover, we assume that agents make these decisions  in order to maximize their utility rates. For this situation we use a game-theoretic approach to characterize the community structures that emerge in an information network. In particular, we  characterize the community structure that emerges using the concept of  a Nash equilibrium. To do that we describe  in the following how we define a community structure, and a Nash equilibrium, in an information network.

For our analysis, we characterize  a community structure in an information network by a triplet $(\setC, \{\alpha_{\setC}(y)\}_{y \in \setR},  \{\beta_{\setC}(\cdot|y)\}_{y \in \setR})$  where $\setC$ is the set of communities $C$ that exist in the structure, and 
$$\alpha_{\setC}(y) = \{ \alpha_C(y) \}_{C \in \setC}$$  and
$$\beta_{\setC}(y) = \{ \beta_C(\cdot |y) \}_{C \in \setC}$$
indicate how individual agents consume and produce content in the different communities $C=(C_d,C_s) \in \setC$. In the following, we require that for a given community structure   $(\setC, \{\alpha_{\setC}(y)\}_{y \in \setR},  \{\beta_{\setC}(\cdot|y)\}_{y \in \setR})$, we have for each community
$$C=(C_d,C_s) \in \setC$$
that
$$ \alpha_C(y) > 0, \qquad y \in C_d,$$
and 
$$ ||\beta_{\setC}(y)|| > 0, \qquad y \in C_s,$$
i.e. each agent  $y \in C_d$ allocates a strictly positive fraction of time  $\alpha_C(y)$ to community $C$, and each agent $y \in C_s$ has a positive total rate $|| \beta_{\setC}(y)||$ in community $C$. 

Furthermore, we require that the total fraction of time for content consumption over all communities is bounded by a constant $E_p$, $0 < E_p \leq 1$, and we have that
$$|| \alpha_{\setC}(y) || = \sum_{C \in \setC} \alpha_C(y) \leq E_p, \qquad y \in \setR,$$
where we use the convention that for a community $C = (C_d,C_s)$ we have that
$$\aC(y) = 0, \qquad \mbox{ if } y \notin C_d.$$

Similarly, we assume that the total content production rates of each agent can not exceed a given threshold $E_q$, $0 < E_q$, and we have that
$$|| \beta_{\setC}(y) || = \sum_{C \in \setC} || \beta_C(\cdot|y) || \leq E_q, \qquad y \in \setR,$$
where
$$ || \beta_C(\cdot|y) || = \int_{\setR}  \beta_C(x|y) dx,$$
and we use the convention that for a community $C = (C_d,C_s)$ we have that
$$   || \beta_C(\cdot|y) || = 0, \qquad \mbox{ if } y \notin C_s.$$
These two requirements capture the intuition that each agent has limited resource (time) to consumer, and produce, content. 

In the following, we are interested in a community structure where each agent belongs to at least one community. We refer to such a community structure as a covering community structure. More precisely, we use the following definition.

\begin{definition}
We call a community structure  $(\setC, \{\alpha_{\setC}(y)\}_{y \in \setR},  \{\beta_{\setC}(\cdot|y)\}_{y \in \setR})$ a covering community structure if we have that
$$ \cup_{C=(C_d,C_s) \in \setC} C_d = \cup_{C=(C_d,C_s) \in \setC} C_s = \setR.$$
\end{definition}

Using this definition, a Nash equilibrium is then given as follows. A community structure  $(\setC^*, \{\alpha^*_{\setC}(y)\}_{y \in \setR},  \{\beta^*_{\setC}(\cdot|y)\}_{y \in \setR})$  is a Nash equilibrium if for all $y \in \setR$ we have that
\begin{align*}
&\alpha^*_{\setC}(y) = \\
&\underset{\alpha_{\setC}(y): || \alpha_{\setC}(y)  || \leq E_p}{\arg\max} \sum_{C \in \setC}  \alpha_{C}(y) \int_{\setR} [Q_C(x) p(x|y) - \beta_C(x) c] dx
\end{align*}
and
\begin{align*}
&\beta^*_{\setC}(\cdot|y) = \\
&\underset{\beta_{\setC}(\cdot|y): || \beta_{\setC}(\cdot|y)  || \leq E_q}{\arg\max} \sum_{C \in \setC} \int_{\setR} \beta_{C}(x|y) [ q(x|y) P_C(x) - \alpha_C c] dx.
\end{align*}
Note that under a Nash equilibrium each agent maximizes its own utility rate, i.e. a Nash equilibrium characterizes a community structure that would emerge in an information network where each agent tries to maximize its own utility.

In the following, we study whether there always exists a Nash equilibrium. Furthermore, if there exists a Nash equilibrium we want to characterize the  community structure  $(\setC^*, \{\alpha^*_{\setC}(y)\}_{y \in \setR},  \{\beta^*_{\setC}(\cdot|y)\}_{y \in \setR})$ under a Nash equilibrium. 

For our analysis we make the following assumption on the relation of the processing cost $c$ with respect to the functions $f$ and $g$ of Assumption~\ref{ass:fg}. 
\begin{assumption}~\label{ass:fgc}
Let $f$ and $g$ be as given by Assumption~\ref{ass:fg}, and let $c$ be the processing cost as given in Eq.~\eqref{eq:UdC}~and~\eqref{eq:UsC}. Then we have that 
$$ f(0) g(0) - c > 0.$$
\end{assumption}
Note that if the assumption is not true then consumption utility under any community structure would either be equal to 0, or would be negative. As a result, this assumption excludes the (trivial) Nash equilibrium where agents do not join any information community.

\newpage
\section{Results}~\label{section:results}
In this section we present the main results of our analysis of the above model.  The proofs for the results are given in the appendix.

\subsection{Existence of a Nash Equilibrium}
Our first result establishes that there always exists a Nash equilibrium.
\begin{prop}~\label{prop:existence}
Let Assumption~\ref{ass:fg}~-~\ref{ass:fgc} be true, then there always exists a Nash equilibrium $(\setC^*, \{\alpha^*_{\setC}(y)\}_{y \in \setR},  \{\beta^*_{\setC}(\cdot|y)\}_{y \in \setR})$ that is a covering community structure.
\end{prop}
Proposition~\ref{prop:existence} states that there always exists a Nash equilibrium that is a covering community structure, i.e. a community structure for which each agent is a member of at least one community. This result is an important result as it states that (under the above model) communities indeed emerge as an optimal structure to share content in an information network.

\newpage
While Proposition~\ref{prop:existence} states that there always exists a Nash equilibrium, it does not provide a characterization of the community structure under a Nash equilibrium. In order to do that, we focus in the following on a particular family of Nash equilibria, and provide for this family a  detailed characterization of the community structure under a Nash equilibrium. 

To characterize the family of Nash equilibria we use the following definition.
\begin{definition}
We call a community $C=(C_d,C_s)$ an {\em interval community} if we have that
$$ C_d = C_s = I_C \subset \setR,$$
where $I_C$ is an interval in $\setR$ with length $|I_C|$ such that
$$ 0< |I_C|.$$
The interval $I_C$ can be closed or open, or be closed at one end and open at the other end.

For a given interval community  $C=(C_d,C_s) = (I_C,I_C)$, $ I_C \subset \setR$,  we refer to $|I_C|$ as the length of the interval, to 
$$mid(I_C) = \frac{1}{|I_C|} \int_{I_C} y dy$$
as the midpoint of the interval, and to
$$ L_C = \frac{|I_C|}{2}$$ 
as the half-length of the interval. 
\end{definition}
Given in interval community $C=(I_C,I_C)$, we refer to $mid(I_C)$ as the center of interest of community $C$.

The next result states that there always exists a Nash equilibrium  $(\setC^*, \{\alpha^*_{\setC}(y)\}_{y \in \setR},  \{\beta^*_{\setC}(\cdot|y)\}_{y \in \setR})$ such that each community $C \in \setC^*$ is an interval community.
\begin{prop}\label{prop:nash}
  Let Assumption~\ref{ass:fg}~-~\ref{ass:fgc} be true, then there exists a Nash equilibrium $(\setC^*, \{\asC{\setC}(y)\}_{y \in \setR},  \{\beta^*_{\setC}(\cdot|y)\}_{y \in \setR})$ such that
\begin{enumerate}
\item[(a)] each community $C=(C_d,C_s) \in \setC^*$ is an interval community, i.e. we have that
$$ C_d = C_s = I_C,$$
where $I_C$ is an interval in $\setR$. 
\item[(b)] the set $\{I_C\}_{C \in \setC^*}$ is a set of mutually non-overlapping intervals that covers $\setR$, i.e. we have that
  $$ I_C \cap I_{C'} = \emptyset, \qquad C,C' \in \setC^*, C \neq C',$$
  and 
  $$\cup_{C \in \setC^*} I_C = \setR.$$
\item[(c)] there exists a constant $L_I > 0$ such that
$$ L_C = L_I, \qquad C \in \setC*$$
where
$$ 2 L_I < \min \{ b, L \}$$
and  $b$ is the constant of Assumption~\ref{ass:fg}.
\end{enumerate}
Furthermore, for each community $C=(I_C,I_C) \in \setC^*$ we have that
\begin{enumerate}
\item[(a)] $\alpha^*_C(y) = E_p$, $y \in I_C$.
\item[(b)] $ \beta^*_C(x|y) = E_q \delta(\xs(y) - x)$, $y \in I_C$ and $x \in \setR$, where
$$ \xs(y) = \arg \max_{x \in \setR} q(x|y)P_{I_C}(x).$$
  and $\delta$ is the Dirac delta function.
\item[(c)] $\UdC(u) > 0$, $y \in I_C$.
\item[(d)]  $\UsC(u) > 0$, $y \in I_C$.
\end{enumerate}
\end{prop}
We provide the proof for Proposition~\ref{prop:existence}~and~\ref{prop:nash} in Appendix~\ref{app:nash}. Below we highlight some properties of the community structure given in Proposition~\ref{prop:nash}.

Proposition~\ref{prop:nash}  states that there always exists a Nash equilibrium $(\setC^*, \{\alpha^*_{\setC}(y)\}_{y \in \setR},  \{\beta^*_{\setC}(\cdot|y)\}_{y \in \setR})$ consisting of interval communities that cover $\setR$. Recall that in an interval community $C = (C_d,C_s) = (I_C,I_C)$ we have that the content producers and content consumers cover the same interests, i.e. we have that
$$ C_d = C_s = I_C.$$
This result captures that intuition that the content that is being produced and a consumed in a community should be ``aligned'' in the sense that the content producers in the community  should have a high ability to produce content that is of interest to the content consumers.

Furthermore, Proposition~\ref{prop:nash} states that under a Nash equilibrium as given in Proposition~\ref{prop:nash}, we have that in each community $C = \C$ a content producer $y \in I_C$ focuses on producing a single type of content given by
$$\xs(y) =  \arg \max_{x \in \setR} q(x|y)P_C(x).$$
As we show in Appendix~\ref{app:xs}, exclusively producing content type $\xs(y)$ indeed maximizes the content production utility of an agent. Moreover, experimental results suggest that this property indeed holds in real-life information networks. We will discuss this in more details in Subsection~\ref{subsection:discussion_results}.

Finally, Proposition~\ref{prop:nash} states that the interval communities in Proposition~\ref{prop:nash} are non-overlapping, and each agent belongs to a single community. The fact that communities are non-overlapping seems not to be aligned with what one observes in real-life information networks. We will discuss this in more details in Section~\ref{section:conclusions}.

In the following we characterize in more details the properties of a Nash equilibrium as given in Proposition~\ref{prop:nash}. In particular, we characterize the optimal content production rate of each agent under such a Nash equilibrium, the properties of the demand function and supply function in a community in under such a Nash equilibrium, as well as the resulting utility functions.

\newpage

\subsection{Optimal Content Type $\xs(y)$}~\label{subsection:xs}
Under the community structure given in Proposition~\ref{prop:nash}, each content producer $y \in I_C$ in a given community $C \in \setC^*$ produces the single content type given by
$$\xs(y) =  \arg \max_{x \in \setR} q(x|y)P_{I_C}(x).$$

In this subsection we characterize in more details the function $\xs(y)$, $y \in I_C$ for a given community $C \in \setC^*$.
In particular, we have the following result.
\begin{prop}\label{prop:xs}
  Let  $C=(C_d,C_s) = (I_C,I_C)$ be an interval community under a Nash equilibrium  $(\setC^*, \{\alpha^*_{\setC}(y)\}_{y \in \setR},  \{\beta^*_{\setC}(\cdot|y)\}_{y \in \setR})$ as given by Proposition~\ref{prop:nash}.
Then the function $\xs(y)$ given by
$$\xs(y) = \arg\max_{x \in \setR} q(x|y) P_C(x), \qquad y \in I_C,$$
is continuously differentiable and strictly increasing on $I_C$.
Furthermore we have that
$$ \xs(y) \in [y,mid(I_C)] \cap  supp(q(\cdot|y)), \qquad y \in [mid(I_C) - L_C, mid(I_C],$$
  and
$$ \xs(y) \in [mid(I_C),y] \cap  supp(q(\cdot|y)), \qquad y \in [mid(I_C), mid(I_C) + L_C].$$
\end{prop}
We provide a proof for Proposition~\ref{prop:xs} in Appendix~\ref{app:xs}.

One important result of Proposition~\ref{prop:xs} is that the function $\xs(y)$ is strictly increasing on $I_C$. This result implies that two different agents $y,y' \in I_C$, $y \neq y'$, will produce different types of content, i.e. we have that
$$\xs(y) \neq \xs(y'), \qquad y \neq y'.$$
This is an interesting result as it states that it is optimal for each agent to specialize on producing the content type $x^*(y)$ that is not produced by any other agent in the community, i.e. agent $y$ is the unique expert for content  of type $x^*(y)$ in the information community that $y$ belongs to.

Another important aspect of Proposition~\ref{prop:xs} is that the function $\xs(y)$ has the property that
$$ \xs(y) \in [y,mid(I_C)] \cap  supp(q(\cdot|y)), \qquad y \in [mid(I_C) - L_C, mid(I_C),$$
  and
$$ \xs(y) \in [y,mid(I_C)] \cap  supp(q(\cdot|y)), \qquad y \in [mid(I_C) - L_C, mid(I_C).$$
This result states that each agent $y \in I_C$  produces content that is either equal to their center of interest $y$, or closer to the center of interest $mid(I_C)$ of the community $C$ than their center of interest $y$. To get a more detailed understanding of how agents adapt the type of content that they produce towards the center of interest $mid(I_C)$ of the community $C$, we next study the function $ \Ds(y)$ given by
$$ \Delta^*(y) = || y - x^*(y)||, \qquad y \in I_C.$$
The function $\Ds(y)$  characterizes the absolute value of the ``displacement'' of the optimal content $x^*(y)$ that agent $y$ produces, and content $y$ that the agent is best at producing which is equal to content type $y$. Or in other words, the function $\Ds(y)$ characterizes by  how much an agent $y$ adapts its content $\xs(y)$ towards the center of interest of  the community $C$, i.e. by how much agent $y$ produces content $\xs(y)$ that is closer to the center of interest $mid(I_C)$ of the community than its own center of interest given by $y \in I_C$. 

In addition, the function $\Delta^*(y)$ can be used to characterize the quality of the optimal content $x^*(y)$ that agent $y$ produces as we have that
$$ q(x^*(y)|y) = g( || y - x^*(y)||) = g(\Delta^*(y)).$$

We have the following result for the function $\Ds(y)$.

\begin{prop}\label{prop:Ds}
  Let  $C=(C_d,C_s) = (I_C,I_C)$ be an interval community under a Nash equilibrium  $(\setC^*, \{\alpha^*_{\setC}(y)\}_{y \in \setR},  \{\beta^*_{\setC}(\cdot|y)\}_{y \in \setR})$ as given by Proposition~\ref{prop:nash}.
  Then  the function $\Delta^*(y)$ given by
$$ \Delta^*(y) = ||y - x^*(y)||, \qquad y \in I_C,$$
  is strictly decreasing  on  $[mid(I_C) - L_C, mid(I_C)]$, and strictly increasing on $[mid(I_C), mid(I_C)]+L_C]$ with
$$\Delta^*(mid(I_C)) = 0.$$
\end{prop}
We provide a proof for Proposition~\ref{prop:Ds} in Appendix~\ref{app:Ds}.

Proposition~\ref{prop:Ds} provides several interesting insights how agents adapt the type of content they produce towards the community $C$ they belong to.

First, Proposition~\ref{prop:Ds} states that
$$ \Ds(y) > 0, \qquad y \in I_C \backslash\{mid(I_C)\},$$
and hence
$$ \xs(y) \neq y, \qquad y \in I_C \backslash\{mid(I_C)\}.$$
This implies that agent $y \in I_C \backslash \{ mid(I_C))$ does not produce the type of content content that it can produce with the highest quality $q(x|y)$, i.e. the content that is equal to their center of interest $y$. But  agent $y  I_C \backslash\{mid(I_C)\}$ does adapt the content it produces towards the center of interest of community $C$, i.e. $y$ produces content of type $\xs(y)$ that is closer to the center of the community $mid(I_C)$ than their own center of interest $y$.

Moreover,  Proposition~\ref{prop:Ds} states that  $\Ds(y)$ is strictly decreasing  on  $[mid(I_C) - L_C, mid(I_C))$, and strictly increasing on $(mid(I_C), mid(I_C)]+L_C]$. This implies that the further away an agent is from the center of interest $mid(I_C)$, the more it will ``adapt'' the content it produces towards to the center of interest $mid(I_C)$ of the community $C$.

\newpage
\subsection{Properties of Demand $P_C(x)$ and Supply Function $Q^*_C(x)$}~\label{subsection:P_CandQ_C}
We next we study the properties of the demand function $P_C(x)$ given by
$$ P_C(x) = \int_{I_C}  \alpha_C(y) p(x|y) dy, \qquad x \in \setR, y \in I_C,$$
and the optimal supply function $Q^*_C(x)$ given by
$$Q^*_C(x) = \int_{I_C}  \beta^*_C(x|y) q(x|y) dy, \qquad \qquad x \in \setR, y \in I_C,$$
for an interval community $C = (C_d,C_s) = (I_C,I_C) \in \setC^*$  under a  Nash equilibrium $(\setC^*, \{\alpha^*_{\setC}(y)\}_{y \in \setR},  \{\beta^*_{\setC}(\cdot|y)\}_{y \in \setR})$ as given by Proposition~\ref{prop:nash}.

We first characterize the demand function   $P_C(x)$.

\begin{prop}\label{prop:P_C}
Let  $C=(C_d,C_s) = (I_C,I_C)$ be an interval community under a Nash equilibrium  $(\setC^*, \{\alpha^*_{\setC}(y)\}_{y \in \setR},  \{\beta^*_{\setC}(\cdot|y)\}_{y \in \setR})$ as given by Proposition~\ref{prop:nash}.
Then the demand function $P_C(x)$ is given by
$$P_C(x) =  E_p \int_{y \in I_C}  p(x|y) dy, \qquad x \in \setR, y \in I_C.$$
Furthermore, the function $P_C(x)$ has the following properties,
\begin{enumerate}
\item[(a)] $P_C(x)$ is symmetric around $y_0 = mid(I_C)$.
\item[(b)] $P_C(x)$ is continuously differentiable on $\setR$.
\item[(c)] $P_C(x)$ is strictly increasing on the interval $( mid(I_C) - L, mid(I_C))$, and strictly decreasing on the interval $(mid(I_C), mid(I_C) + L)$.
\item[(d)] $P_C(x)$ is twice continuously differentiable and strictly concave in $x$ on the interval $[y_0 - L_C, y_0 + L_C]$.
\item[(e)] $mid(I_C) = \arg \max_{x \in \setR} P_C(x)$
\end{enumerate}
\end{prop}
We provide a proof for Proposition~\ref{prop:P_C} in Appendix~\ref{app:P_C}.

The results of Proposition~\ref{prop:P_C} are intuitive. First, the result that
$$ mid(I_C) = \arg \max_{x \in \setR} P_C(x)$$
states that the content type that is the most popular is the content type that is identical to the the center of interest $mid(I_C)$ of the interval community $C$. Moreover, the result that $P_C(x)$ is strictly increasing on $( mid(I_C) - L, mid(I_C))$, and strictly decreasing $(mid(I_C), mid(I_C) + L)$, implies that the further a  content type $x$ is from the center of interest $mid(I_C)$ of community $C$, the less popular it is. Or in other words, Proposition~\ref{prop:P_C} states that 'fringe' content, i.e. content that is far away from the center of interest of the community, is less in demand/popular compared with content that is close to the center of interest.

\newpage
Next we characterize the properties of the optimal content supply function $Q^*_C(x)$.

\begin{prop}\label{prop:Q_C}
  Let  $C=(C_d,C_s) = (I_C,I_C)$ be an interval community under a Nash equilibrium  $(\setC^*, \{\alpha^*_{\setC}(y)\}_{y \in \setR},  \{\beta^*_{\setC}(\cdot|y)\}_{y \in \setR})$ as given by Proposition~\ref{prop:nash}. Then the optimal content supply function $Q^*_C(x)$ is given by
  $$Q^*_C(x) = E_q \int_{I_C}  \delta(x - \xs(y)) q(\xs(y)|y) dy, \qquad x \in \setR, y \in I_C,$$
  where $\delta(\cdot)$ is the Dirac delta function, and
$$x^*(y)= \arg\max_{x \in \setR} q(x|y) P_C(x), \qquad y \in I_C.$$
Furthermore we have that
\begin{enumerate}
\item[a)] the support $supp(Q_C^*(x))$ of $Q^*_C(x)$, $x \in \setR$  is given by
 $$supp(Q_C^*(x)) = [mid(I_C) - L^*_C, mid(I_C) + L^*_C]$$
 with
 $$ 0 < L^*_C < L_C,$$
\item [b)] $Q^*_C(x) = \bsC(x) q(x|s(x))$, $x \in I^*_C$,
where $\bsC(x)$, $x \in I^*_C$, is as defined in Lemma~\ref{lemma:bsC}, and
\item[c)] $Q^*_C(x)$ is continuous, and symmetric with respect to $mid(I_C)$, on $I^*_C$.
\end{enumerate}
\end{prop}
We provide a proof for Proposition~\ref{prop:Q_C} in Appendix~\ref{app:Q_C}.

Proposition~\ref{prop:Q_C} states that
$$supp(Q_C^*(x)) = [mid(I_C) - L^*_C, mid(I_C) + L^*_C]$$
where
$$0 < L^*_C < L_C,$$
This result implies  that the content type that is being produced by agents $y \in C_s = I_C$ in the community $C$ is a strict subset of $I_C$, i.e. the content that is being produced does not cover the center of interests of all agents $y \in C_d = I_C$. Or in other words, the content that is being produced is concentrated around to center of interest $mid(I_C)$ of the community, and not spread out over the interval $I_C$. This implies that under a Nash equilibrium there is no overlap between the content that is being produced in different communities, i.e. each community produces a distinct set of content types. 

Unlike the properties of the function $P_C(x)$ as given by Proposition~\ref{prop:P_C}, the function $Q^*(x)$ is not necessarily non-decreasing on $( mid(I_C) - L, mid(I_C))$, and non-increasing $(mid(I_C), mid(I_C) + L)$. Obtaining  these properties would require  additional assumptions on the functions $f$ and $g$.

\newpage
\subsection{Properties of Utility Functions $\UdC(y)$ and $\UsC(y)$}~\label{subsection:utilities}
We next study the properties of the utility rate function for content consumption $\UdC(y)$,  and the utility rate function  for content production $\UsC(y)$ for an interval community $C = (C_d,C_s) = (I_C,I_C) \in \setC^*$  under a  Nash equilibrium $(\setC^*, \{\alpha^*_{\setC}(y)\}_{y \in \setR},  \{\beta^*_{\setC}(\cdot|y)\}_{y \in \setR})$ as given by Proposition~\ref{prop:nash}.

We first study the properties of the utility rates for content consumption.

\begin{prop}~\label{prop:UdC}
  Let  $C=(C_d,C_s) = (I_C,I_C)$ be an interval community under a Nash equilibrium  $(\setC^*, \{\alpha^*_{\setC}(y)\}_{y \in \setR},  \{\beta^*_{\setC}(\cdot|y)\}_{y \in \setR})$ as given by Proposition~\ref{prop:nash}, and let  $Q^*_C(x)$ be the corresponding content supply function as given in Proposition~\ref{prop:Q_C}. 
Then the maximal utility rate function for content consumption $U_C^{(d)}(y)$ in community $C$ by an agent $y \in \setR$ is given by
$$ \UdC(y) =  \max \Bsl 0,\FdC(y)\Bsr, \qquad y \in \setR,$$
where 
$$\FdC(y)=  E_p \left [ \int_{\setR} p(x|y) Q^*_C(x) dx - 2L_CE_q c   \right ], \qquad y \in \setR,$$\
and we have that
$$ mid(I_C) = \arg \max_{y \in \setR} \UdC(y).$$
Furthermore, the function $\UdC(y)$, $y \in \setR$, has the properties that
\begin{enumerate}
\item[a)]  $\UdC(y)$ is symmetric with respect to $mid(I_C)$, and we have that
$$\frac{d}{dy} \UdC(mid(I_C)) = 0.$$
\item[b)]  $\UdC(y)$ is strictly increasing on $[mid(I_C) - L_C,mid(I_C))$, and we have that
$$\frac{d}{dy} \UdC(y) > 0, \qquad y \in [mid(I_C) - L_C,mid(I_C)),$$
\item[c)]  $\UdC(y)$ is strictly decreasing on $(mid(I_C), mid(I_C) + L_C]$, and we have that
$$\frac{d}{dy} \UdC(y) < 0, \qquad y \in  (mid(I_C), mid(I_C) + L_C],$$
\item[b)]  $\UdC(y)$ is non-decreasing on $(mid(I_C) - L,mid(I_C))$,
\item[c)]  $\UdC(y)$ is non-increasing on $(mid(I_C), mid(I_C) + L)$.
\end{enumerate}
\end{prop}
We provide a proof for Proposition~\ref{prop:UdC} in Appendix~\ref{app:UdC}.

Note that the function $\UdC(y)$ in Proposition~\ref{prop:UdC} is not only defined for agents $y \in C_s = I_C$, but for all agents $y \in \setR$. For agents $y \notin I_C$, the function $\UdC(y)$ characterizes the maximal utility rate for content consumption that agent $y$ would obtain in community $C$. Understanding the properties of the function $\UdC(y)$ is therefore important in establishing that a community  $(\setC^*, \{\alpha^*_{\setC}(y)\}_{y \in \setR},  \{\beta^*_{\setC}(\cdot|y)\}_{y \in \setR})$ as given in Proposition~\ref{prop:nash} is indeed a Nash equilibrium.

The results of Proposition~\ref{prop:UdC} are intuitive. First, the result that
$$ mid(I_C) = \arg \max_{y \in \setR} \UdC(y)$$
states that the agent at the center of interest $mid(I_C)$ of the interval community $C$ obtains the highest utility for content consumption. Moreover, Proposition~\ref{prop:UdC} states that the further away an agent $y$ is from the center of interest $mid(I_C)$ of the interval community $C$, the smaller a utility for content consumption agent $y$ receives.

As the interval communities given in Proposition~\ref{prop:nash} are of equal length, Proposition~\ref{prop:UdC} implies that the community structure $(\setC^*, \{\alpha^*_{\setC}(y)\}_{y \in \setR},  \{\beta^*_{\setC}(\cdot|y)\}_{y \in \setR})$ as given by Proposition~\ref{prop:nash} is indeed a Nash equilibrium for agents that consume content. Indeed as the utility for content consumption decreases the further away an agent is from the center of interest $mid(I_C)$ of an interval community, it is best for an agent $y \in \setR$ to consume content in the interval community $C$ for which we have that
$$y \in I_C.$$
More specifically as the interval communities in Proposition~\ref{prop:nash} are of equal length, Proposition~\ref{prop:UdC} implies that the agent ``on the border'' between two communities $C$ and $C'$, $C,C' \in \setC^*$, obtains the same utility for content consumption in both communities $C$ and $C'$, and hence is indifferent on which community to join. All the other agents $y \in C$, will obtain a higher utility in community $C$ compared with community $C'$; and vice versa.

\newpage

We next study the properties of the utility rates for content production $U_C^{(s)}(y)$.

\begin{prop}~\label{prop:UsC}
Let  $C=(C_d,C_s) = (I_C,I_C)$ be an interval community under a Nash equilibrium  $(\setC^*, \{\alpha^*_{\setC}(y)\}_{y \in \setR},  \{\beta^*_{\setC}(\cdot|y)\}_{y \in \setR})$ as given by Proposition~\ref{prop:nash}, and let $P_C(x)$, $x \in \setR$, be the corresponding content demand function as given in Proposition~\ref{prop:P_C}.
Then the maximal utility rate function for content production $\UsC(y)$ for  community $C$ by an agent $y \in \setR$ is given by
$$U^{(s)}_C(y) = \max \Big \{ 0,\FsC(y) \Big \}, \qquad y \in \setR,$$
where
$$ \FsC(y) =  E_q \Big [q(\xs(y)|y) P_C(\xs(y)) - 2 E_p L_C c \Big ],$$
and we have that
$$ mid(I_C) = \arg \max_{y \in \setR} \UsC(y).$$
Furthermore, the function $\UsC(y)$, $y \in \setR$,  has the properties that
\begin{enumerate}
\item[a)]  $\UsC(y)$ is symmetric with respect to $y_0 = mid(I_C)$ on $[mid(I_C) - L_C,mid(I_C) + L_C]$, and we have that
$$\frac{d}{dy} \UsC(mid(I_C) = 0.$$
\item[b)]  $\UsC(y)$ is strictly increasing on $[mid(I_C) - L_C,mid(I_C))$, and we have that
$$\frac{d}{dy} \UsC(y) > 0, \qquad y \in [mid(I_C) - L_C,mid(I_C)).$$
\item[c)]  $\UsC(y)$ is strictly decreasing on $(mid(I_C), mid(I_C) + L_C]$, and we have that
$$\frac{d}{dy} \UsC(y) < 0, \qquad y \in  (mid(I_C), mid(I_C) + L_C].$$
\item[b)]  $\UsC(y)$ is non-decreasing on $(mid(I_C) - L,mid(I_C)-L_C]$.
\item[c)]  $\UsC(y)$ is non-increasing on $[mid(I_C)+L_C, mid(I_C) + L)$.
\end{enumerate}
\end{prop}
We provide a proof for Proposition~\ref{prop:UsC} in Appendix~\ref{app:UsC}.

Similar to Proposition~\ref{prop:UdC}, as we have that
$$ mid(I_C) = \arg \max_{y \in \setR} \UsC(y)$$
Proposition~\ref{prop:UsC} states that  the agent at the center of interest $mid(I_C)$ of the interval community $C$ obtains the highest utility for content production. Moreover, Proposition~\ref{prop:UsC} states that the further away an agent $y$ is from the center of interest $mid(I_C)$ of the interval community $C$, the smaller a utility for content production agent $y$ receives.

As the interval communities given in Proposition~\ref{prop:nash} are of equal length, Proposition~\ref{prop:UsC} implies that the community structure $(\setC^*, \{\alpha^*_{\setC}(y)\}_{y \in \setR},  \{\beta^*_{\setC}(\cdot|y)\}_{y \in \setR})$ as given by Proposition~\ref{prop:UsC} is a Nash equilibrium for agents that produce content. Indeed as the utility for content production decreases the further away an agent is from the center of interest $mid(I_C)$ of an interval community, it is best for an agent $y \in \setR$ to produce content in the interval community $C$ for which we have that
$$y \in I_C.$$
More specifically, as the interval communities in Proposition~\ref{prop:nash} are of equal length, Proposition~\ref{prop:UsC} implies that the agent ``on the border'' between two communities $C$ and $C'$, $C,C' \in \setC^*$, obtains the same utility for content consumption in both communities $C$ and $C'$, and hence is indifferent on which community to join. All the other agents $y \in C$, will obtain a higher utility in community $C$ compared with community $C'$; and vice versa.

By combining Proposition~\ref{prop:UdC}~and~\ref{prop:UsC}, we obtain that the community structure  $(\setC^*, \{\alpha^*_{\setC}(y)\}_{y \in \setR},  \{\beta^*_{\setC}(\cdot|y)\}_{y \in \setR})$ given in Proposition~\ref{prop:nash} is indeed a Nash equilibrium, provided that the length of the intervals is not too long and all agents $y \in C_s = I_C$ and $y \in C_d = I_C$, $C \in \setC^*$ indeed obtain a positive utility in community $C$. We provide a formal presentation of this argument in Appendix~\ref{app:nash} where we prove Proposition~\ref{prop:nash}.

\newpage
\subsection{Maximal Community Size}~\label{subsection:size}
Finally, we characterize the maximal size of an interval community under a Nash equilibrium as given by Proposition~\ref{prop:nash}. 
Intuitively, in order for a community structure $(\setC^*, \{\alpha^*_{\setC}(y)\}_{y \in \setR},  \{\beta^*_{\setC}(\cdot|y)\}_{y \in \setR})$ as given in Proposition to be a Nash equilibrium, the lengths of the intervals can not be too long, i.e. the length $L_I$ given in Proposition~\ref{prop:nash} can not be too long, as otherwise the utilities of some of the agents would become negative. This intuition is indeed true. In particular we have the following results which shows that the maximal size depends on the content processing cost $c$, as well as the function $f$ and $g$ of Assumption~\ref{ass:fg}.
\begin{prop}\label{prop:size}
If  $(\setC^*, \{\alpha^*_{\setC}(y)\}_{y \in \setR},  \{\beta^*_{\setC}(\cdot|y)\}_{y \in \setR})$ is a Nash equilibrium as given by Proposition~\ref{prop:nash} and $L_I$ is the length of the intervals  $\{I_C\}_{C \in \setC^*}$,  then we have that
$$L_I < \max \left \{ l \in [0,L] \Big | \int_{0}^l (g(0)f(x) - c) dx \geq 0\right 
\},$$
where the functions $f$ and $g$ are as given by Assumption~\ref{ass:fg}.
\end{prop}
We provide a proof for Proposition~\ref{prop:size} in Appendix~\ref{app:size}.

Proposition~\ref{prop:size} states that the higher the processing cost $c$, the smaller $g(0)$ and the faster the function $f(x)$ decays, the smaller the size of a community will be. Note that a fast decay in $f$ implies that agents are interested in a very narrow range of content topics; and hence the result of Proposition~\ref{prop:size} is a property that one would expect for information communities.

\newpage

\subsection{Discussion}~\label{subsection:discussion_results}
In this section we presented the results that we obtained for the the existence, and characteristics, of community structures that are a Nash equilibrium. An interesting aspect of the obtained result is that they indeed provide insights into properties of communities in real-life information networks. In addition, the obtained result provide insight into structure properties of information community, that might can be used to design new algorithms to analyze information networks and communities. 

Proposition~\ref{prop:nash} states that each content producer in a community focuses on generating exactly one type of content, i.e we have that that an agent $y \in C_s = I_C$ produces the content unique content type $\xs(y)$ given by
$$\xs(y) =  \arg\max_{x \in \setR} q(x|y) P_C(x), \qquad y \in I_C.$$
This is an interesting results as this property/behavior has indeed been observed in real-life social networks. In particular, Zadeh, Goel and Munagala provide experimental results using data obtain from  Twitter that shows  Twitter users produce content on a very narrow set of topics, and consume content on a large set of topics~\cite{goel}. The interpretation of this experimental result based on the analysis presented in this paper is that this behavior is indeed optimal, and that the topics for which a given Twitter user produces content is the optimal content for this user to produce. In particular, based on the result in Proposition~\ref{prop:nash}, the topic(s) for which Twitter users produce content in a given community should be a very small subset of all the topics that are being consumed in this community. In this sense, the Proposition~\ref{prop:nash} provides a theoretical explanation for this property/behavior that has been observed in real-life social networks.

Proposition~\ref{prop:xs}~and~Proposition~\ref{prop:Ds} state that agents adapt the content they produce towards  center of interest of a community they belong to. This is an interesting result, as  it provides insight regarding the issue of ``homophily versus adaption'' in social networks. One aspect of communities in social networks is that members of the community tend to have similar interests, and behave similarly. One question is whether is due to homophily, i.e. do individuals form a community because they have the same interest and behave similarly, or whether this is due to adaption, i.e. is due that through the interaction in a community its members develop similar interests and behave similarly. The results of this paper suggest that both mechanism are at work in information communities. In particular, homophily (similar interests) prompts individuals to join the same community. This is captured by the result of Proposition~\ref{prop:xs} which states the communities  of a Nash equilibrium as given in Proposition~\ref{prop:nash} consists of interval communities, and hence the center of interests of agents in a given community belong to the same interval and are close together. Moreover, when an individual joins a community then adaptation prompts the individual to conform (adapt) away from their center of interest towards the center of interest of the community. This is captured by the result of Proposition~\ref{prop:xs} which states that an agent in community $C$ does not produce content $y$ for which they have the best ability, i.e. content type that is the center of  interest of the agent, but agent $y$ adapts the content type $\xs(y)$ that is produces towards the center of interest $mid(I_C)$ of the community.

Proposition~\ref{prop:P_C}  states that the content type that is the most popular is the content type that is identical to the the center of interest of the community. Moreover, it states that the further away a  content type is from the center of interest of a community, the less popular it is in the community. This result is interesting as it confirms in a formal way the intuition that communities have a ``center of interest'', i.e. there is a focus (center of interest) of the community and the closer a type of content (topic) is to that center of interest, the more popular (in demand) the content will be. While this property of a community may seem ``intuitive obvious'', it is interesting to observe that this property is indeed recovered (confirmed) using the proposed model.

Proposition~\ref{prop:Q_C}  implies that under a Nash equilibrium there is no overlap between the content that is being produced in different communities, i.e. each community produces a distinct set of content types. This is an interesting result as it suggest that each information community can be identified by  unique ``core content'' that is only produced in this community. In particular, this result suggests that each  real-life community has a ``core interest'' that is unique to the community, and hence identifies this community. Moreover, it seems that this property should be useful when trying to discover (identify) communities in information networks, i.e. one should be able to exploit this property to potential design more efficient algorithms to discover communities in information networks. How, and whether, this is possible is future research.

Finally, Proposition~\ref{prop:UdC}~and~\ref{prop:UsC} state that  the agent at the center of interest of a  community  obtain the highest utility rate for content consumption and production; and the further away an agent $y$ is from the center of interest of the community, the smaller a utility it will receive. This result suggest that there is a ``natural'' way to rank members in a community, where a member is ranked higher if it receives a higher utility. In particular, if it is possible to observe the utilities that community members receive, then it is indeed possible to rank community members in this way. How, and when, it is
possible to observe (estimate)  utilities that community members receive, and how this information can be used to design algorithms to analysis information communities,  is future research.

\newpage

\section{Conclusions}\label{section:conclusions}
We proposed a mathematical model to study communities in information networks. The model is intentionally kept as simple as possible to allow for a formal analysis. We used the model do derive properties of the community structure and communities under a Nash equilibrium. In particular, we show that there always exists a Nash equilibrium. Moreover, we focused on Nash equilibria where that consist of interval communities, i.e. the center of interest of agents that belong to the community all lie within an interval on $\setR$. For such Nash equilibria, we characterized the content type that agents produce in a given community, the content  demand and supply functions, as well as the utility functions for content consumption and production. Finally, we derive a bound on the maximal size that an interval community under a Nash equilibrium can have. One interesting question is whether all Nash equilibria under the proposed model consist of interval communities of equal length as given in Proposition~\ref{prop:nash}.

An interesting direction for future research is to investigate whether the model can be used to study additional properties of information communities. In particular, one direction of future research that we are working on is to investigate whether the proposed model can be used to study and characterize how content is being filtered and distributed within an information community. 

Another interesting future research is to study whether the family of Nash equilibria consider in this paper are the only community structures that can emerge in a Nash equilibrium, or whether there are other Nash equilibria. Or in other words, to study whether a Nash equilibria always consists of a set of interval communities of equal length, or whether it is possible to have a Nash equilibrium where the communities are of different sizes.

While the model captures important properties that have been observed in real-life information networks as discussed in the previous section, the model has limitations and fails to capture some properties of information communities. For example, it fails to capture that communities in information networks tend to have a hierarchical and overlapping structure. The reason for this is that the current model and analysis is limited to a one-dimensional metric space (for characterizing the content types) with homogeneous agents whose interest are uniformly distributed. Extending the analysis to more general settings of the is interesting future research.

\bibliography{paper}{}
\bibliographystyle{plain}

\appendix

\newpage







\section{Feasible Communities}~\label{app:feasible_community}
In this appendix we introduce additional definitions that we use in our analysis.

To do that, we first note that for a given a community $C= (C_d,C_s)$ in a community structure  $(\setC^*, \{\alpha^*_{\setC}(y)\}_{y \in \setR},  \{\beta^*_{\setC}(\cdot|y)\}_{y \in \setR})$ that is a Nash equilibrium,
we have that
$$ U_C^{(d)}(y) = \aC(y) \int_{\setR} \Bsbl p(x|y) \QsC(x) - \bsC(x) c \Bsbr \geq 0, \qquad y \in C_d,$$
where
$$\QsC(x) = \int_{C_s} \bsC(x|y)  q(x|y) dy$$
and
$$\bsC(x) = \int_{C_s} \bsC(x|y) dy,$$
as well as 
$$ U_C^{(s)}(y) =\int_{\setR} \bsC(x|y) \Bsbl q(x|y) P_C(x) - \aC c \Bsbr   \geq 0, \qquad y \in C_s,$$
$$P_C(x) = \int_{C_d} \asC(y) p(x|y) dy$$
and
$$\asC = \int_{C_d} \asC(y) dy,$$
i.e. the utilities of all agents $y \in C_d$ and $y \in C_s$ are non-negative. 
Indeed if this is not the case, and the utility obtained by an agent in the community is negative, then the agent is better off by not participating in the community at all which results in an utility rate equal to 0.  Therefore, a necessary condition for a community $C$ with  allocation functions $\{\alpha_C(y)\}_{y \in C_d}$ and $\{\beta_C(\cdot|y)\}_{y \in C_s}$ to be part of a Nash equilibrium is that it is all agents obtain a non-negative utility in the community $C$. For our analysis we use a  slightly stronger condition, and we require that utilities of all agents are strictly positive. We call such communities feasible communities. Note that all communities under the structure 
 $(\setC^*, \{\alpha^*_{\setC}(y)\}_{y \in \setR},  \{\beta^*_{\setC}(\cdot|y)\}_{y \in \setR})$ as given by Proposition~\ref{prop:nash} have the property that the utilities that agents receive are strictly positive.

More precisely, we have the following definition.
\begin{definition}\label{def:feasible}
Consider  a community  $C=(C_d,C_s)$ and  allocations $\{\alpha_C(y)\}_{y \in C_s}$ and $\{\beta_C(\cdot|y)\}_{y \in C_s}$ such that
$$\alpha_C(y) > 0, \qquad y \in C_d,$$
and
$$ \int_{\setR} \beta_C(x|y) dx > 0, \qquad y \in C_s.$$
Then we call the community $C$ a feasible community under  $\{\alpha_C(y)\}_{y \in C_d}$ and $\{\beta_C(\cdot|y)\}_{y \in C_s}$, if we have that
$$ U_C^{(d)}(y) = \aC(y) \int_{\setR} \Bsbl p(x|y)Q_C(x) - \bC(x) c \Bsbr dx > 0, \qquad y \in C_d,$$
where
$$Q_c(x) = \int_{C_s} \beta_C(x|y)  q(x|y) dy$$
and
$$\bC(x) = \int_{C_s} \bC(x|y) dy,$$
as well as
$$  U_C^{(s)}(y) = \int_{\setR} \beta_C(x|y) \Bsbl q(x|y)P_C(x) - \alpha_C c \Bsbr dx > 0, \qquad y \in C_s,$$
where
$$P_c(x) = \int_{C_d} \alpha_C(y)  p(x|y) dy$$
and
$$\alpha_C = \int_{C_d} \alpha_C(y) dy.$$
\end{definition}

\newpage

\section{Proof of Proposition~\ref{prop:P_C}}~\label{app:P_C}
In this appendix we prove  Proposition~\ref{prop:P_C} which characterizes the properties of the demand function $P_C(x)$ of an interval community $C$ in a Nash equilibrium as given by Proposition~\ref{prop:nash}.

Recall that for  an interval community community $C=(C_d,C_s)$ with $ C_d = C_s = I_C \subset \setR$, the demand function $P_C(x)$ given by 
$$P_C(x) =  \int_{I_C}  \alpha_C(y) p(x|y) dy, \qquad x \in \setR.$$

Furthermore, recall that for a given interval $I_C \subset \setR$ we denote with $L_C$ the half-length of the interval $I_C$, i.e. we have that
$$ L_C = \frac{|I_C|}{2},$$
and  we use $mid(I_C)$ to denote the midpoint of an interval, i.e. we have that
$$ mid(I_C) = \frac{1}{|I_C|} \int_{I_C} y dy.$$
Also recall that the metric space $\setR$ that we use in our analysis is given by an interval  $[-L, L) \in R$, $L > 0$, with the torus metric as defined in Section~\ref{section:setR}.

We then have the following result.
\begin{lemma}\label{lemma:P_C}
Let $E_p$, $0< E_p \leq 1$, be a given constant, and let  $C=(C_d,C_s)$ be an interval community, i.e. we have that  $ C_d = C_s = I_C,$ where $I_C \subset \setR$ is an interval in $\setR$. If we have that that
$$\alpha_C(y) = E_p, \qquad y \in C_d,$$
and 
$$L_C < \frac{L}{2},$$
then the demand function 
$$P_C(x) =  E_p \int_{I_C}  p(x|y) dy, \qquad x \in \setR, y \in I_C,$$
has the properties that
\begin{enumerate}
\item[(a)] $P_C(x)$ is symmetric around $y_0 = mid(I_C)$ on $(-L,L)$. 
\item[(b)] $P_C(x)$ is continuously differentiable on $\setR$.
\item[(c)] $P_C(x)$ is twice continuously differentiable and strictly increasing on $[y_0 - L,y_0)$, and strictly decreasing on the interval $(y_0, y_0 + L)$.
\item[(d)] $P_C(x)$ is strictly concave on $[y_0 - L_C, y_0 + L_C]$.
\item[(e)] $y_0 = \arg \max_{x \in \setR} P_C(x)$
\end{enumerate}
\end{lemma}
Note that Proposition~\ref{prop:P_C} directly follows from Lemma~\ref{lemma:P_C} as the communities under a Nash equilibrium as given by Proposition~\ref{prop:nash} satisfy the assumptions made in Lemma~\ref{lemma:P_C}.

\begin{proof}
In order to prove Lemma~\ref{lemma:P_C}, without loss of generality we can assume that the interval $I_C$ is given by  
$$I_C = [-L_C,L_C], \qquad L_C > 0.$$
Using the fact that by assumption we have that
$$ \alpha_C(y) = E_p >0, \qquad y \in C_d = I_C,$$
it follows that
\begin{equation}\label{eqn:P_C}
P_C(x) =  E_p \int_{I_C} p(x|y) dy =  E_p \int_{-L_C}^{L_C} f(||y - x||) dy,
\end{equation}
where the function $f$ is as given in Assumption~\ref{ass:fg}. \\

We first show that $P_C(x)$ is symmetric around $y_0 = mid(I_C) = 0$. For this we note that for
$$x \in [0,L)$$
we have that
\begin{eqnarray*}
\int_{- L_C}^{L_C}  f(||y - x||) dy &=&  \int_{y = -L_C}^{L_C}  f(||-y + x||) dy\\
&=&  - \int_{y = L_C}^{-L_C}  f(||y + x||) dy \\
&=& \int_{- L_C}^{L_C}  f(||y + x||) dy.
\end{eqnarray*}
Combining the above result with Eq.~\eqref{eqn:P_C} it follows that
$$P_C(x) = P_C(-x),$$
and hence $P_C(x)$ is symmetric around $y_0 = mid(I_C) = 0$ on $(-L,L)$.

Next we show that $P_C(x)$ is strictly increasing on the interval $[- L, 0)$, and strictly decreasing on the interval $(0,L) $.
For this  we let
$$ s = y-x, \qquad y \in I_C, x \in (-L,L).$$
Using this definition, we obtain that
$$P_C(x) = E_p \int_{- L_C - x}^{L_C - x} f(||s||) ds$$
and 
\begin{equation}\label{eqn:diff_P_C}
\frac{d}{dx} P_C(x) = - E_pf(||L_C - x||) + E_pf(||L_C + x||).
\end{equation}
As the function  $||L_C +x||$, $x \in \setR$,  is continuous  in $x$, and by Assumption~\ref{ass:fg} the function $f$ is a continuous function on $[0,L]$, it follows that $P_C(x)$ is continuously differentiable in $x$ on $\setR$. 

Note that for $x = 0$ we have that
$$\frac{d}{dx} P_C(x=0) = 0.$$

Using the assumption that
$$L_C < \frac{L}{2},$$
we have that
$$ ||L_C - x|| < ||L_C + x||, \qquad x \in (0,L].$$
Combining this result with Assumption~\ref{ass:fg} which states that the function $f$ is strictly decreasing on $[0,L]$, we obtain that
$$f(||L_C - x||) > f(||L_C + x||),  \qquad x \in (0,L],$$
and
$$\frac{d}{dx} P_C(x) < 0, \qquad x \in (0,L).$$
It then follows that $P_C(x)$ is strictly decreasing on the interval $(mid(I_C), mid(I_C) + L)$.

Using the same argument, we can show that
$$\frac{d}{dx} P_C(x) > 0, \qquad x \in [-L,0),$$
and $P_C(x)$ is  strictly increasing on the interval $[- L,0)$.

Next we show that  $P_C(x)$ is twice continuously differentiable and strictly concave on $[-L_C, L_C]$. Using Eq.~\eqref{eqn:diff_P_C} and the assumption that
$$L_C < \frac{L}{2},$$
we obtain that
$$\frac{d^2}{dx^2} P_C(x) = E_pf'(||L_C - x||) + E_pf'(||L_C + x||), \qquad x \in [-L_C,L_C].$$
By Assumption~\ref{ass:fg}, we have that $f$ is continuously differentiable on $[0,L]$, it follows that  $P_C(x)$ is twice continuously differentiable on $[-L_c,L_c]$.
Furthermore, by Assumption~\ref{ass:fg} the function $f$ is strictly decreasing on $[0,L]$, and we obtain that 
\begin{equation}\label{eq:P_Cd2}
\frac{d^2}{dx^2} P_C(x) = E_pf'(||L_C - x||) + E_pf'(||L_C + x||) < 0, \qquad  x \in   [-L_C,L_C],
\end{equation}
and  $P_C(x)$ is strictly concave on $[-L_C, L_C]$.

Finally, combining results the that $P_C(x)$ is strictly increasing on $(-L,0)$ and strictly decreasing on $(0,L)$ with the fact that  $P_C(x)$ is continuous, 
it follows that
$$mid(I_C) = 0 = \arg \max_{x \in \setR} P_C(x).$$
\end{proof}

\newpage
\section{Optimal Content Production}\label{app:xs}
In this appendix we characterize the optimal content production rate under a Nash equilibrium  $(\setC^*, \{\alpha^*_{\setC}(y)\}_{y \in \setR},  \{\beta^*_{\setC}(\cdot|y)\}_{y \in \setR})$ as given by Proposition~\ref{prop:nash}. In particular, we show that it is optimal for an agent to produce a single content type $\xs(y)$ as indicated by Proposition~\ref{prop:xs}.

\subsection{Properties of the Optimal Content Production Rate $\bsC(x|y)$}
We start out by characterizing  the optimal contention production rate  $\beta^*_C(x|y)$ of an agent $y$ in an interval community. More precisely, given an interval community $C=(C_d,C_s)$ such that  $ C_d = C_s = I_C,$ where $I_C \subset \setR$ is an interval in $\setR$,  we characterize how an agent $y \in C_s = I_C$ optimally allocates its content production rate $\beta_C(\cdot|y)$ in community $C$. Recall that the optimal allocation $\bsC(\cdot|y)$ of an agent $y \in C_s$ in the community $C$ is given by
\begin{equation}\label{eq:optimal_production}
\bsC(\cdot|y) = \underset{\beta(\cdot|y): || \beta(\cdot|y)|| \leq \beta_C(y)}{\arg\max} \int_{\setR} \beta_C(x|y) \Bsbl q(x|y) P_C(x) - \alpha_C c \Bsbr dx,
\end{equation}
where $ \beta_C(y)>0$ denotes the maximal production rate of agent $y$ can allocate to community $C$, and the functions $P_C(x)$ and $\aC$ are given by
$$P_C(x) = \int_{I_C} \aC(y) p(x|y) dy$$
and
$$\aC = \int_{I_C} \aC(y) dy.$$

Note that if for given agent $y \in I_C$ we have that
$$\max_{x \in \setR} \Bsbl q(x|y) P_C(x) - \alpha_Cc \Bsbr \leq 0,$$
then the optimal production rate $\bsC(\cdot|y)$ of agent $y$ is given by
$$ \bsC(x|y) = 0, \qquad x \in \setR.$$
As a result we focus in the following on agents $y \in I_C$ for which we have that
$$\max_{x \in \setR} \Bsbl q(x|y) P_C(x) - \alpha_Cc \Bsbr > 0.$$
For this case we show that  it is optimal for agent $y$ to only produce content that maximizes the function $q(x|y) P_C(x)$. More precisely, we have the following result. 
\begin{lemma}\label{lemma:optimal_value_production}
Let $E_p$, $0< E_p \leq 1$, be a given constant, and let $C=(C_d,C_s)$ be an interval community, i.e. we have that  $ C_d = C_s = I_C,$ where $I_C \subset \setR$ is an interval in $\setR$. Furthermore assume that
$$\alpha_C(y) = E_p, \qquad y \in C_d=I_C,$$
and for $y \in C_s = I_C$ and $\beta_C(y) > 0$, let the function $\bsC(x|y)$, $x \in \setR$, be given by
$$\beta^*_{C}(\cdot|y) = \underset{\beta(\cdot|y): || \beta(\cdot|y)|| \leq \beta_C(y)}{\arg\max} \int_{\setR} \beta_C(x|y) \Bsbl q(x|y) P_C(x) - \alpha_C c \Bsbr dx,$$
and let
$$ b^* = \max_{x \in \setR} q(x|y) P_C(x).$$
Finally, let the set $A \subseteq \setR$ be given by
$$A = \Bsl x \in \setR | q(x|y) P_C(x) = b^*\Bsr.$$
If for agent $y \in I_C$ we have that 
$$ \max_{x \in \setR} \Bsbl q(x|y) P_C(x) - \alpha_C c \Bsbr > 0,$$
then  the optimal production rate $\bsC(\cdot|y)$  of agent $y$ has the property that
$$ \int_{A} \beta^*_C(x|y) dx =  \int_{\setR} \beta^*_C(x|y) dx.$$
\end{lemma}

\begin{proof}
We first observe that if for agent $y \in I_C$ we have that
$$ \max_{x \in \setR} \Bsbl q(x|y) P_C(x) - \alpha_C c \Bsbr > 0,$$
then optimizing the function
$$\int_{\setR} \beta_C(x|y) \Bsbl q(x|y) P_C(x) - \alpha_C c \Bsbr dx$$
is equivalent to optimizing the function
$$\int_{\setR} \beta_C(x|y) q(x|y) P_C(x)  dx,$$
as the term $\alpha_C c$ does not depend on $x$.
Furthermore, we observe that if  for agent $y \in I_C$ we have that
$$ \max_{\setR} \Bsbl q(x|y) P_C(x) - \alpha_C c \Bsbr > 0, \qquad y \in I_C,$$
then it is optimal for agent $y \in I_C$ to choose an  content production rate in community $C$ such that
$$ \int_{\setR} \beta^*_C(x|y) dx = \beta_C(y),$$
i.e. the total rate allocated in community $C$ is equal to the maximal allowed rate $\beta_C(y)$ for agent $y$.

Using these two results, we prove the lemma by contradiction as follows. Suppose that for a given agent $y \in I_C$ we have that
$$ \max_{\setR} \Bsbl q(x|y) P_C(x) - \alpha_C c \Bsbr  > 0, \qquad y \in I_C.$$
Furthermore, suppose that for this agent $y$ we have that  $\beta^*_C(x|y)$, $x \in \setR$, is an optimal solution to the optimization problem 
$$\beta^*_C(\cdot|y) = \underset{\beta(\cdot|y): || \beta(\cdot|y)|| \leq \beta_C(y)}{\arg\max} \int_{\setR} \beta_C(x|y) \Bsbl q(x|y) P_C(x) - \alpha_C c \Bsbr dx$$
such that 
$$ \int_{A} \beta^*_C(x|y) dx < \beta_C(y) = \int_{\setR} \beta^*_C(x|y) dx.$$
Note that this implies that for the set 
$$ B = \setR \backslash A$$
we have that
$$ \int_{B} \beta^*_C(x|y) dx > 0.$$
As by Assumption~\ref{ass:fg} and by Lemma~\ref{lemma:P_C} the function $ q(x|y) P_C(x)$ is continuous with respect to $x$ on $\setR$, there exists a closed set $B_0 \subseteq B$ such that 
$$ \max_{x \in B_0} \Bsbl q(x|y) P_C(x) \Bsbr= b' < b^*$$
and 
$$\int_{B_0} \beta^*_C(x|y) dx = \epsilon > 0.$$
To obtain the contradiction, we consider the production rate function $\beta'_C(\cdot|y)$, $x \in \setR$, given as follows.
We pick a point $x^* \in A$ and set
$$\beta'_C(x|y) = \beta^*_C(x|y) + \epsilon \delta(x - x^*), \qquad x \in A,$$
where $\delta$ is the Dirac delta function. Furthermore, we set
$$\beta'_C(x|y) = 0, \qquad x \in B_0,$$
and 
$$\beta'_C(x|y) = \beta^*_C(x|y), \qquad  x\notin A \cup B_0.$$
Note that for the function  $\beta'_C(\cdot|y)$  we have that
$$ \int_{\setR} \beta'_C(x|y) dx = \int_{\setR} \beta^*_C(x|y) dx.$$
Furthermore, comparing $\beta^*_C(x|y)$ with $\beta'_C(x|y)$ we obtain that 
\begin{eqnarray*}
&& \int_{\setR} \beta'_C(x|y) q(x|y) P_C(x) dx - \int_{\setR} \beta^*_C(x|y) q(x|y) P_C(x) dx \\
&& = \int_{x \in A}  \epsilon \delta(x-x^*) q(x|y) P_C(x) dx - \int_{B_0} \beta^*_C(x|y) q(x|y) P_C(x) dx \\
&& \geq  \epsilon b^* - \epsilon b'  = \epsilon (b^*-b') > 0,
\end{eqnarray*}
where we used the fact that by construction we have that
$$q(x|y) P_C(x) \leq b', \qquad x \in B_0,$$
and
$$\int_{B_0} \bsC(x|y) dx = \epsilon.$$
This leads to a contradiction, i.e. as we have that
$$\int_{\setR} \beta'_C(x|y) q(x|y) P_C(x) dx - \int_{\setR} \beta^*_C(x|y) q(x|y) P_C(x) dx > 0,$$
it follows that $\bsC(x|y)$ is not an optimal rate allocation for agent $y \in I_C$. The result of the lemma then follows, i.e. if 
$$\beta^*_C(\cdot|y) =  \underset{\beta(\cdot|y): || \beta(\cdot|y)|| \leq \beta_C(y)}{\arg\max} \int_{\setR} \beta_C(x|y) ( q(x|y) P_C(x) - \alpha_C c) dx,$$
then we have that
$$\int_{\setR} \beta^*_C(x|y)dx =  \int_{A} \beta^*_C(x|y)dx = \beta^*_C(y).$$
\end{proof}

The following corollary is obtained immediately from Lemma~\ref{lemma:optimal_value_production}.
\begin{cor}\label{cor:optimal_value_production}
Let $E_p$, $0< E_p \leq 1$, be a given constant, and let $C=(C_d,C_s)$ be an interval community, i.e. we have that  $ C_d = C_s = I_C,$ where $I_C \subset \setR$ is an interval in $\setR$. Furthermore assume that
$$\alpha_C(y) = E_p, \qquad y \in C_d=I_C$$
and
$$\bC(y) > 0, \qquad y \in C_s=I_C.$$
If for agent $y \in C_s = I_C$ we have that
$$ \max_{x \in \setR} \Bsbl q(x|y) P_C(x) - \alpha_C c \Bsbr > 0,$$
then the function 
$$\beta^*_C(x|y) = \beta_C(y) \delta(x-x^*(y)), \qquad x \in \setR,$$
where $\delta$ is the Dirac delta function and
$$\xs(y) = \arg \max_{x \in \setR} q(x|y)P_C(x), \qquad y \in I_C,$$
is a solution to
$$\beta^*_{C}(\cdot|y) = \underset{\beta(\cdot|y): || \beta(\cdot|y)|| \leq \beta_C(y)}{\arg\max} \int_{\setR} \beta_C(x|y) \Bsbl q(x|y) P_C(x) - \alpha_C c \Bsbr dx.$$
\end{cor}

In the next subsection we derive properties of the function
$$ q(x|y)P_C(x), \qquad x \in \setR, y \in I_C,$$
that we then use to show that the optimization problem
$$ \max_{x \in \setR} q(x|y)P_C(x), \qquad y \in I_C,$$
in Corollary~\ref{cor:optimal_value_production} has a unique solution $\xs(y)$.

\newpage
\subsection{Properties of the function $q(x|y)P_C(x)$}
In this subsection, we characterize the properties of the function
$$ q(x|y)P_C(x), \qquad x \in \setR, y \in I_C.$$
Our first result shows that for 
$$\xs(y) = \arg \max_{x \in \setR} q(x|y)P_C(x), \qquad y \in I_C,$$
we have that
$$  q(\xs(y)|y)P_C(\xs(y)) > 0.$$

\begin{lemma}\label{lemma:qP_C_1}
Let  $E_p$, $0< E_p \leq 1$, be a given constant, and let  $C=(C_d,C_s)$ be an interval  community, i.e. we have that $ C_d = C_s = I_C,$ where $I_C \subset \setR$ is an interval in $\setR$. If we have that
$$\alpha_C(y) = E_p, \qquad y \in C_d = I_C,$$
Then for
$$\xs(y) = \arg \max_{x \in \setR} q(x|y)P_C(x), \qquad y \in I_C,$$
we have that
$$  q(\xs(y)|y)P_C(\xs(y)) > 0.$$
\end{lemma}

\begin{proof}
As by assumption we have that
$$\alpha_C(y) = E_p > 0, \qquad y \in C_d = I_C,$$
it follows from Lemma~\ref{lemma:P_C} that
$$P_C(x) > 0, \qquad x \in \setR,$$
and from Assumption~\ref{ass:fg} that
$$ q(y|y) P_C(y) = g(0) P_C(y) > 0, \qquad y \in I_C.$$
Using this result, we obtain that
$$  q(\xs(y)|y)P_C(\xs(y)) \geq   q(y|y) P_C(y) > 0, \qquad y \in I_C,$$
and the result of the lemma follows.
\end{proof}

\newpage
Our next two results provides additional properties of the function $q(x|y) P_C(x)$ that we use in our analysis.

\begin{lemma}\label{lemma:qP_C_2}
Let  $E_p$, $0< E_p \leq 1$, be a given constant, and let  $C=(C_d,C_s)$ be an interval  community, i.e. we have that $ C_d = C_s = I_C,$ where $I_C \subset \setR$ is an interval in $\setR$. Furthermore assume that
$$\alpha_C(y) = E_p, \qquad y \in C_d = I_C,$$
and
$$L_C < \frac{L}{2}.$$
Then
the function $ q(x|y) P_C(x)$ has the properties that
\begin{enumerate}
\item[a)] for $y \in I_C$, 
the function  $ q(x|y) P_C(x)$ is continuously differentiable in $x$ on $supp(q(\cdot|y))$. 
\item[b)] for
$$y \in [mid(I_C) -L_C,mid(I_C)],$$
the function $ q(x|y) P_C(x)$ is strictly increasing in $x$ on $[mid(I_C) - L,y) \cap supp(q(\cdot|y))$, and strictly decreasing in $x$ on $(mid(I_C), mid(I_C) + L) \cap supp(q(\cdot|y))$.
\item[c)]for
$$y \in [mid(I_C),mid(I_C)+L_C],$$
the function  $q(x|y) P_C(x)$ is strictly increasing in $x$ on $[mid(I_C) - L,mid(I_C)) \cap supp(q(\cdot|y))$, and strictly decreasing in $x$ on $(y, mid(I_C) + L) \cap supp(q(\cdot|y))$.
\end{enumerate}
\end{lemma}

\begin{proof}
We first prove property a) of the function $q(x|y) P_C(x)$.
By Assumption~\ref{ass:fg}  we have that $q(x|y)$ is continuously differentiable in $x$ on $supp(q(\cdot|y))$. Furthermore, by Lemma~\ref{lemma:P_C}, we have that the function $P_C(x)$ is continuously differentiable on $\setR$. It then follows that for $y \in I_C$ we have that the function $ q(x|y) P_C(x)$ is continuously differentiable in $x$ on $supp(q(\cdot|y))$. 

Next we prove property b). 
By Assumption~\ref{ass:fg} the function $g(x)$ is strictly decreasing on $supp(g(\cdot))\cap (0,L]$, and by Lemma~\ref{lemma:P_C} the function $P_C(x)$ is strictly increasing in $x$ on $[mid(I_C) - L,mid(I_C))$, and strictly decreasing in $x$ on $(mid(I_C), mid(I_C) + L)$. It then follows that for
$$y \in [mid(I_C) -L_C,mid(I_C)],$$
we have that the function $ q(x|y) P_C(x)$ is strictly increasing in $x$ on $[mid(I_C) - L,y) \cap supp(q(\cdot|y))$, and strictly decreasing in $x$ on $[mid(I_C),mid(I_C) + L) \cap supp(q(\cdot|y))$.

Property c) of the function $q(x|y) P_C(x)$ can be shown using the same argument as for property b).

The result of the lemma then follows.
\end{proof}

\newpage
\begin{lemma}\label{lemma:qP_C_concave}
Let $E_p$, $0< E_p \leq 1$, be a given constant, and let   $C=(C_d,C_s)$ be an interval  community, i.e. we have that $ C_d = C_s = I_C,$ where $I_C \subset \setR$ is an interval in $\setR$. Furthermore assume that
$$\alpha_C(y) = E_p, \qquad y \in C_d = I_C,$$
and
$$L_C < \frac{L}{2}.$$
Then for
$$y \in [mid(I_C) -L_C,mid(I_C)],$$
we have that the function $ q(x|y) P_C(x)$ is strictly concave in $x$ on the interval
$$[y,mid(I_C)] \cap supp(q(\cdot|y)).$$

Similarly for $y \in I_C$ such that
$$y \in [mid(I_C),mid(I_C)+L_C],$$
we have that the function  $q(x|y) P_C(x)$ is strictly concave in $x$ on the interval
$$[mid(I_C),y] \cap supp(q(\cdot|y)).$$
\end{lemma}

\begin{proof}
We first consider the case where
$$y \in [mid(I_C) -L_C,mid(I_C)].$$
By Assumption~\ref{ass:fg} we have that the function $q(\cdot|y)$ is twice differentiable and strictly concave on the  $[y,mid(I_C)] \cap supp(q(\cdot|y))$, and by Lemma~\ref{lemma:P_C} we have that the function $P_C(x)$ is twice continuously differentiable and strictly concave on the  $[y,mid(I_C)] \cap supp(q(\cdot|y))$. Furthermore, by Assumption~\ref{ass:fg} the function  $q(\cdot|y)$ is strictly decreasing on $(y,mid(I_C)] \cap supp(q(\cdot|y))$, and by Lemma~\ref{lemma:P_C} the function $P_C(x)$ is strictly increasing on $[y,mid(I_C)) \cap supp(q(\cdot|y))$. If follows that for
$$ x \in [y,mid(I_C)] \cap supp(q(\cdot|y)),$$
we have that
$$  \frac{d^2}{dx^2} q(x|y) P_C(x) = q(x|y)'' P_C(x) +  2 q'(x|y) P'_C(x) +  q(x|y) P''_C(x) < 0.$$
This establishes the result for the case where $y \in [mid(I_C) -L_C,mid(I_C)]$.

The result for the case where
$$y \in [mid(I_C),mid(I_C)+L_C],$$
can be  obtained using the same argument. The result of the lemma then follows.
\end{proof}

\newpage
The next proposition summarizes the properties of the function  $q(x|y) P_C(x)$ that we obtained in this appendix, and that we will frequently  use  in our analysis.
\begin{prop}\label{prop:qP_C}
Let $E_p$, $0< E_p \leq 1$, be a given constant, and let  $C=(C_d,C_s)$ be an interval  community, i.e. we have that $ C_d = C_s = I_C,$ where $I_C \subset \setR$ is an interval in $\setR$. Furthermore assume that
$$\alpha_C(y) = E_p, \qquad y \in C_d = I_C,$$
and
$$L_C < \frac{L}{2}.$$
Then for
$$y \in [mid(I_C) - L_C, mid(I_C)]$$
we have that  the function  $ q(x|y) P_C(x)$ is
\begin{enumerate}
\item[a)] strictly increasing in $x$ on $[mid(I_C) - L,y) \cap supp(q(\cdot|y))$.
\item[b)] strictly concave in $x$ on $[y,mid(I_C)] \cap supp(q(\cdot|y))$.
\item[c)] strictly decreasing in $x$ on $(mid(I_C), mid(I_C) + L) \cap supp(q(\cdot|y))$.
\end{enumerate}
Similarly, for
$$y \in [mid(I_C), mid(I_C)]+L_C$$
we have that  the function  $ q(x|y) P_C(x)$ is
\begin{enumerate}
\item[a)] strictly increasing in $x$ on $[mid(I_C) - L,mid(I_C)) \cap supp(q(\cdot|y))$.
\item[b)] strictly concave in $x$ on $[mid(I_C),y] \cap supp(q(\cdot|y))$.
\item[c)] strictly decreasing in $x$ on $(y, mid(I_C)+L) \cap supp(q(\cdot|y))$.
\end{enumerate}
\end{prop}

\newpage
Finally, we derive one additional property for the function  $q(x|y) P_C(x)$.
\begin{lemma}\label{lemma:qP_C_3}
Let  $E_p$, $0< E_p \leq 1$, be a given constant, and let $C=(C_d,C_s)$ be an interval  community, i.e. we have that $ C_d = C_s = I_C,$ where $I_C \subset \setR$ is an interval in $\setR$. Furthermore assume that
$$\alpha_C(y) = E_p, \qquad y \in C_d = I_C,$$
and
$$L_C < \frac{L}{2}.$$
Then the following is true.
If for
$$y \in [mid(I_C) -L_C,mid(I_C)]$$
we have  that
$$\sup \left \{ x \in \setR | q(x|y)P_C(x) > 0 \right \} > mid(I_C),$$
then  we have 
$$ q\big (mid(I_C)|y \big ) P_C\big (mid(I_C) \big ) > q(x|y) P_C(x), \qquad x \in [y+L,mid(I_C)+L).$$

Similarly if for
$$y \in [mid(I_C),mid(I_C)+L_C]$$
we have that
$$\inf \left \{ x \in \setR | q(x|y)P_C(x) > 0 \right \} < mid(I_C),$$
then we have 
$$ q \big (mid(I_C)|y \big ) P_C\big (mid(I_C) \big ) > q(x|y) P_C(x), \qquad x \in [mid(I_C)-L,y - L].$$
\end{lemma}

\begin{proof}
We prove the result for the case where
$$y \in [mid(I_C) -L_C,mid(I_C)].$$
The result for the case where
$$y \in [mid(I_C),mid(I_C)+L_C]$$
can be obtained using the same argument. 

Let $y$ be such that
$$y \in [mid(I_C) -L_C,mid(I_C)].$$
We first show for this case that under the assumption that
$$ 2 L_C < L,$$
it follows that when
$$\sup \left \{ x \in \setR | q(x|y)P_C(x) > 0 \right \} > mid(I_C),$$
then we obtain that
\begin{equation}\label{eq:smaller_q}
q(x|y) < q\big (mid(I_C)|y\big ), \qquad x \in [y + L,mid(I_C)+L).
\end{equation}
In order to show this result, we have to show that
$$||x-y|| > || mid(I_C)-y||, \qquad x \in [y + L,mid(I_C)+L).$$
Note that the function $|y-x||$ is decreasing in $x$ on $[y + L,mid(I_C)+L)$,and therefore in order to show it result if suffices to show that 
$$ || mid(I_C)+L -y|| >  || mid(I_C)-y||.$$
As for $y \in [mid(I_C) -L_C,mid(I_C)]$ we have that
$$ mid(I_C)+L -y > L,$$
we have from the definition of  the torus metric that
$$  || mid(I_C)+L -y|| = 2L -  mid(I_C)- L + y = y - [mid(I_C) -y].$$
To see that
$$  y - [mid(I_C) -y]>  || mid(I_C)-y||,$$
note the following.
As by assumption we have
$$ 2 L_C < L,$$
it follows that
\begin{eqnarray*}
|| mid(I_C) - y||
&<& L_C \\
&<& L - L_C\\
&<& L - [mid(I_C) - y].
\end{eqnarray*}

Combining Eq.~\eqref{eq:smaller_q} with the fact that  by Proposition~\ref{prop:P_C} the function $P_C(x)$ is  strictly decreasing in $x$ on $(mid(I_C),mid(I_C)+L)$, we obtain that for $y \in [mid(I_C) -L_C,mid(I_C)]$ we have that
$$ q(x|y) P_C(x) < q\big (mid(I_C)|y \big ) P_C\big (mid(I_C) \big ), \qquad x \in [y+L,mid(I_C)+L).$$
The result of the lemma then follows.
\end{proof}

\newpage

\subsection{Properties of the Optimal Content $\xs(y)$}

Using the result for the function $q(x|y)P_C(x)$ that we obtained in the previous subsection, we characterize in this subsection the properties of the optimal content $\xs(y)$ given by
$$\xs(y) = \arg \max_{x \in \setR} q(x|y)P_C(x).$$

Our first result follows directly from Proposition~\ref{prop:qP_C} and Lemma~\ref{lemma:qP_C_3}.
\begin{lemma}\label{lemma:xs_opt_int}
Let   $E_p$, $0< E_p \leq 1$, be a given constant, and let $C=(C_d,C_s)$ be an interval  community, i.e. we have that $ C_d = C_s = I_C,$ where $I_C \subset \setR$ is an interval in $\setR$. Furthermore assume that
$$\alpha_C(y) = E_p, \qquad y \in C_d = I_C,$$
and
$$L_C < \frac{L}{2}.$$
Let
$$\xs(y) = \arg \max_{x \in \setR} q(x|y)P_C(x), \qquad y \in I_C.$$
Then for
$$y \in [mid(I_C) -L_C,mid(I_C)]$$
we have that
$$\xs(y) \in [y,mid(I_C)]\cap supp(q(\cdot|y)).$$
Similarly, for
$$y \in [mid(I_C),mid(I_C)+L_C]$$
we have that
$$\xs(y) \in [mid(I_C),y]\cap supp(q(\cdot|y)).$$
\end{lemma}

\newpage
Next we derive conditions under which  optimization problem
$$ \xs(y) = \arg \max_{x \in \setR} q(x|y)P_C(x), \qquad y \in I_C,$$
has a unique solution.

\begin{lemma}\label{lemma:single_content_production}
Let    $E_p$, $0< E_p \leq 1$, be a given constant, and let $C=(C_d,C_s)$ be an interval community, i.e. we have that  $ C_d = C_s = I_C,$ where $I_C \subset \setR$ is an interval in $\setR$. 
If
$$\alpha_C(y) = E_p, \qquad y \in C_d = I_C,$$
and
$$L_C < \frac{L}{2},$$
then there exists a unique solution $\xs(y)$ to the optimization problem
$$\max_{x \in \setR}  q(x|y) P_C(x), \qquad y \in I_C,$$
which is the unique solution to the equation
$$  q'(x|y) P_C(x) +  q(x|y) P'_C(x) = 0, \qquad x \in I_C \cap supp(g(\cdot|y)).$$
Furthermore, $\xs(y)$ has the properties that
\begin{enumerate}
\item[a)] $\xs(mid(I_C)) = mid(I_C)$.
\item[b)] for
$$y \in [mid(I_C) - L_C,mid(I_C))$$
we have
$$\xs(y) \in (y,mid(I_C)) \cap  supp(q(\cdot|y)).$$
\item[c)] for
$$y \in (mid(I_C),mid(I_C)+L_C]$$
we have
$$\xs(y) \in (mid(I_C),y) \cap  supp(q(\cdot|y)).$$
\end{enumerate}
\end{lemma}

\begin{proof}
We prove the result for the case where
$$y \in [mid(I_C) -L_C,mid(I_C)].$$
The result for the case where
$$y \in [mid(I_C),mid(I_C)+L_C]$$
can be obtained using the same argument.

By Lemma~\ref{lemma:xs_opt_int}, we have that
$$\xs(y) \in [y,mid(I_C)] \cap  supp(q(\cdot|y)), \qquad y \in [mid(I_C) - L_C,mid(I_C)].$$
This implies that
$$\xs(mid(I_C)) = mid(I_C).$$
Furthermore, by Lemma~\ref{lemma:qP_C_concave} we have for
$$y \in [mid(I_C) -L_C,mid(I_C))$$
that the function $ q(x|y) P_C(x)$ is strictly concave on the interval $ [y,mid(I_C)] \cap  supp(q(\cdot|y))$, and it follows that the optimization problem
$$\max_{x \in \setR}  q(x|y) P_C(x), \qquad y \in [mid(I_C) -L_C,mid(I_C)).$$
has an unique solution. 

It remains to show that the unique solution $\xs(y)$ is an interior solution, and we have that 
$$\xs(y) \in (y,mid(I_C)) \cap  supp(q(\cdot|y)), \qquad y \in [mid(I_C) - L_C,mid(I_C))$$
as well as that  $\xs(y)$ is the unique solution to the equation
$$  q'(x|y) P_C(x) +  q(x|y) P'_C(x) = 0, \qquad x \in I_C \cap supp(g(\cdot|y)).$$
Note that for
$$y \in [mid(I_C) - L_C,mid(I_C))$$
and
$$x \in  [y,mid(I_C)] \cap  supp(q(\cdot|y))$$
we have that
$$  \frac{d}{dx} q(x|y) P_C(x) = q'(x|y) P_C(x) +  q(x|y) P'_C(x).$$

By Assumption~\ref{ass:fg} we have that
$$g(0) >0,$$
and by the assumption that
$$\alpha_C(y) = E_p > 0, \qquad y \in C_d = I_C,$$
have that
$$P_C(x) > 0, \qquad x \in \setR.$$
Furthermore by Assumption~\ref{ass:fg} we have have
$$g'(0) = 0$$
and by Lemma~\ref{lemma:P_C} we have 
$$ P'_C(y) > 0, \qquad y \in [mid(I_C) - L_C, mid(I_C)).$$
Combining these results, it then follows that for
$$y \in [mid(I_C) - L_C,mid(I_C))$$
we have that
$$q'(y|y) P_C(y) +  q(y|y) P'_C(y) =  g(0) P'_C(y) > 0.$$
Hence we have that
$$\xs(y) > y.$$

Next we show that for
$$y \in [mid(I_C) - L_C,mid(I_C))$$
we have that
$$\xs(y) < mid(I_C).$$
To do this, we consider two separate cases.
For the first case, we assume that for
$$y \in [mid(I_C) - L_C,mid(I_C))$$
we have
$$\sup \left \{ x \in \setR | q(x|y)P_C(x) > 0 \right \} \geq mid(I_C).$$
For this case, we obtain for
$$y_0 = mid(I_C)$$
that
$$q'(y_0|y) P_C(y_0) +  q(y_0|y) P'_C(y_0) = q'(y_0|y) P_C(y_0) < 0,$$
where we used the fact that by Assumption~\ref{ass:fg} we have have
$$q'(x|y) < 0, \qquad x \in (y,y_0] \cap supp(q(\cdot|y)),$$
and by Lemma~\ref{lemma:P_C} we have 
$$ P'_C(mid(I_C)) = 0.$$
Hence we have for this case that
$$\xs(y) < mid(I_C).$$
For the second case, we assume that for
$$y \in [mid(I_C) - L_C,mid(I_C))$$
we have 
$$x_s = \sup \left \{ x \in \setR | q(x|y)P_C(x) > 0 \right \} < mid(I_C).$$
For this case, we obtain that
$$q'(x_s|y) P_C(x_s) +  q(x_s|y) P'_C(x_s) = q'(mid(I_C)|y) P_C(y) < 0,$$
where we used the fact that for
$$x_s = \sup \left \{ x \in \setR | q(x|y)P_C(x) \right \}$$
we have that
$$ q(x_s|y) = 0$$
and
$$q'(x_s|y) < 0.$$
It follows that
$$\xs(y) < x_s.$$
The above results state that
$$\xs(y) > y$$
and
$$\xs(y) < \min\{y_0, x_s\},$$
and hence we have that $\xs(y)$ is an interior solution, and we have that $\xs(y)$ is the unique solution to the equation
$$  q'(x|y) P_C(x) +  q(x|y) P'_C(x) = 0, \qquad x \in I_C \cap supp(g(\cdot|y)).$$

The result of the lemma then follows.

\end{proof}

\newpage
The following proposition summarizes Lemma~\ref{lemma:optimal_value_production} ~to~\ref{lemma:single_content_production} that we obtained in this appendix.

\begin{prop}\label{prop:optimal_production}
Let     $E_p$, $0< E_p \leq 1$, be a given constant, and let  $C=(C_d,C_s)$ be an interval community, i.e. we have that  $ C_d = C_s = I_C,$ where $I_C \subset \setR$ is an interval in $\setR$. Furthermore assume that
$$\alpha_C(y) = E_p, \qquad y \in C_d = I_C,$$
and
$$L_C < \frac{L}{2}.$$
If for agent $y \in I_C$ we have that
$$\max_{x \in \setR} \Bsbl q(x|y) P_C(x) - \alpha_Cc \Bsbr > 0,$$
then the optimal content production rate allocation 
$$\beta^*_{C}(\cdot|y) = \underset{\beta(\cdot|y): | \beta(\cdot|y)| \leq \beta_C(y)}{\arg\max} \int_{\setR} \beta_C(x|y) ( q(x|y) P_C(x) - \alpha_C c) dx$$
of agent $y$ is given by
$$\beta^*_C(x|y) = \beta_C(y) \delta(x-x^*(y)), \qquad x \in \setR,$$
where $\delta$ is the Dirac delta function and $x^*(y)$ is the unique solution to the optimization problem
$$\max_{x \in \setR}  q(x|y) P_C(x).$$
Furthermore, $x^*(y)$ has the properties that 
\begin{enumerate}
\item[(a)] $ q(x^*(y)|y) > 0, \qquad y \in I_C$.
\item[(b)] $\xs(y)$, $y \in I_C$ is given by the unique solution to the equation
$$  q'(x|y) P_C(x) +  q(x|y) P'_C(x) = 0, \qquad x \in I_C \cap supp(q(\cdot|y)).$$
\item[(c)] we have 
$$\xs(mid(I_C)) = mid(I_C).$$
\item[(d)] we have
$$ \xs(y) \in (y,mid(I_C))\cap  supp(q(\cdot|y)), \qquad y \in [mid(I_C) - L_C, mid(I_C)),$$
  and
$$ \xs(y) \in (mid(I_C),y) \cap  supp(q(\cdot|y)), \qquad y \in (mid(I_C), mid(I_C) + L_C].$$
\end{enumerate}
\end{prop}

\newpage
Our last  result in this subsection characterizes the optimal solution $\xs(y)$ for agents $y$ outside the interval $I_C$. In particular, we have the following result.

\begin{lemma}\label{lemma:\xs_2}
Let    $E_p$, $0< E_p \leq 1$, be a given constant, and let $C=(C_d,C_s)$ be an interval community, i.e. we have that  $ C_d = C_s = I_C,$ where $I_C \subset \setR$ is an interval in $\setR$.
If we have that
$$ \alpha_C(y) = E_p, \qquad y \in I_C,$$
and 
$$L_C < \frac{L}{2},$$
then the following is true for the function
$$\xs(y) = \arg\max_{x \in \setR} q(x|y) P_C(x).$$
For
$$ y \in [mid(I_C) - L, mid(I_C) - L_C)$$
we have that
$$ \xs(y) \leq \xs(mid(I_C) - L_C),$$
and for
$$ y \in (mid(I_C) + L_C, mid(I_C) + L)$$
we have that
$$ \xs(y) \geq \xs(mid(I_C) + L_C).$$
\end{lemma}

\begin{proof}
We prove the lemma for the case where
$$ y \in [mid(I_C) - L, mid(I_C) - L_C).$$
The case where $ y \in (mid(I_C) + L_C, mid(I_C) + L)$ can be shown using the same argument.

Let $y$ be such that
$$y \in [mid(I_C) - L, mid(I_C) - L_C),$$
and let
$$y_l = mid(I_C) - L_C.$$
Recall that by Lemma~\ref{lemma:single_content_production} we have that
$$ \xs(y_l) \in [mid(I_C)-L_C,mid(I_C)].$$
Furthermore, by the same argument as given in the proof of Lemma~\ref{lemma:single_content_production}, we have that
$$ \xs(y) \in [y, mid(I_C)] \cap supp(q(\cdot|y), \qquad y \in [mid(I_C) - L, mid(I_C) - L_C).$$
Therefore, in order to prove the result of the lemma it suffices to show that if
$$ \xs(y) \in  [mid(I_C)-L_C,mid(I_C)], \qquad y \in [mid(I_C) - L, mid(I_C) - L_C),$$
then we have that
$$\xs(y) \leq \xs(y_l).$$

To do this, we note that by the same argument as given in the proof of Lemma~\ref{lemma:qP_C_concave},
we have that the function
$$ q(x|y) P_C(x), \qquad y \in [mid(I_C) - L, mid(I_C) - L_C),$$
is strictly concave in $x$ on  $[mid(I_C) - L_C,mid(I_C)]$. It then follows that if for
$$ y \in [mid(I_C) - L, mid(I_C) - L_C)$$ 
we have that
$$ \xs(y) \in  [mid(I_C)-L_C,mid(I_C)],$$ 
then the optimization problem
$$\xs(y)  \arg \max_{x \in  [mid(I_C) - L_C,mid(I_C)]} q(x|y) P_C(x), \qquad y \in [mid(I_C) - L, mid(I_C) - L_C),$$
has a unique solution, given by
$$g'(\xs(y)-y)P_C(\xs(y)) + g(\xs(y)-y)P'_C(\xs(y)) = 0,$$
where $g$ is the function of Assumption~\ref{ass:fg}.

Therefore, in order to show that if
$$ \xs(y)  \in  [mid(I_C)-L_C,mid(I_C)],  \qquad y \in  [mid(I_C)-L_C,mid(I_C)],$$
then we have that
$$ \xs(y) \leq \xs(y_l) = \xs(mid(I_C) - L_C),$$
it suffices to show that
$$ g'(\xs(y_l)-y)P_C(\xs(y_l)) + g(\xs(y_l)-y)P'_C(\xs(y_l)) \leq 0.$$
Note that by Assumption~\ref{ass:fg} we have for
$$y \in  [mid(I_C)-L_C,mid(I_C)]$$
that
$$g'(\xs(y)-y) \leq 0, \qquad \xs(y) - y \in supp(g),$$
and by Lemma~\ref{lemma:P_C} we have that
$$P'_C(\xs(y)) \geq 0.$$
Furthermore, as the function $g$ is by Assumption~\ref{ass:fg} decreasing and strictly concave on $supp(g)$ and we have that
$$x^*(y_l) - y_l \leq x^*(y_l) - y,$$
it follows that
$$  g(\xs(y_l)-y) \leq  g(\xs(y_l)-y_l)$$
and
$$  g'(\xs(y_l)-y)   \leq  g'(\xs(y_l)-y_l).$$
Combining these results with the fact that
$$g(x) \geq 0, \qquad x \in supp(g),$$
and
$$P_C(x) \geq 0, \qquad x \in \setR,$$
we obtain that
\begin{eqnarray*}
&&  \hspace{-0.2in} g'(\xs(y_l)-y)P_C(\xs(y_l)) + g(\xs(y_l)-y)P'_C(\xs(y_l)) \\
&& \leq g'(\xs(y_l)-y_l)P_C(\xs(y_l)) + g(\xs(y_l)-y_l)P'_C(\xs(y_l)) 
\end{eqnarray*}
As by Proposition~\ref{prop:optimal_production} we have that
$$g'(\xs(y_l)-y_l)P_C(\xs(y_l)) + g(\xs(y_l)-y_l)P'_C(\xs(y_l))  = 0,$$
we obtain that
$$ g'(\xs(y_l)-y)P_C(\xs(y_l)) + g(\xs(y_l)-y)P'_C(\xs(y_l)) \leq 0.$$
The result of the lemma then follows. 
\end{proof}

\newpage

\section{Proof of Proposition~\ref{prop:xs}}

In the previous appendix we showed that for a given interval community $C=(C_d,C_s) = (I_C,I_C)$ such that
$$\alpha_C(y) = E_p > 0, \qquad y \in I_C,$$
and
$$ L_C < \frac{L}{2},$$
the optimization problem
$$x^*(y) = \arg\max_{x \in \setR} q(x|y) P_C(x), \qquad y \in I_C,$$
has a unique solution.

In this subsection we study the properties of the function $x^*(y)$, $y \in I_C$, i.e. we study the properties of the optimal content function $\xs(\cdot)$ as a function of $y \in I_C$, and prove the results of Proposition~\ref{prop:xs}.

We first show that the  function $x^*(y)$ is differentiable and strictly increasing on $I_C$.
\begin{lemma}\label{lemma:xs}
Let   $E_p$, $0< E_p \leq 1$, be a given constant, and let  $C=(C_d,C_s)$ be an interval community, i.e. we have that  $ C_d = C_s = I_C,$ where $I_C \subset \setR$ is an interval in $\setR$. Furthermore assume that
$$\alpha_C(y) = E_p, \qquad y \in I_C,$$
and
$$L_C < \frac{L}{2}.$$
Then  the function $x^*(y)$ given by
$$x^*(y)= \arg\max_{x \in \setR} q(x|y) P_C(x), \qquad y \in I_C,$$
 is continuously differentiable and strictly increasing on $I_C$.
\end{lemma}

\begin{proof}
Without loss of generality, we can assume that the interval $I_C$ is given by
$$ I_C = [-L_c,L_c].$$

We first consider the case where
$$y \in [0,L_C].$$
 Recall that by Proposition~\ref{prop:optimal_production} we have that $x^*(y)$, $y \in I_C$,  is the unique solution to the equation
$$  q'(x|y) P_C(x) +  q(x|y) P'_C(x) = 0, \qquad x \in I_C \cap supp((q(\cdot|y)).$$
As by Proposition~\ref{prop:optimal_production} we have that
$$ \xs(y) \in [0,y] \cap  supp(q(\cdot|y)), \qquad y \in [0,L_C],$$
we can re-write $\xs(y)$,  $y \in [0, L_C]$, as the unique solution to
$$x^*(y)= \arg\max_{x \in [0,y]} g(y-x) P_C(x),$$
where the function $g$ is as given in Assumption~\ref{ass:fg}. 
It then follows that $x^*(y)$, $y \in [0, L_C]$,  is the unique solution to the equation
$$  - g'(y-x) P_C(x) +  g(y-x) P'_C(x) = 0, \qquad y \in [0, L_C], x \in [0,y] \cap supp(g).$$

Let the function $F^+(x,y)$ be given by
$$ F^+(x,y) = -g'(y-x) P_C(x) +  g(y-x) P'_C(x), \qquad y \in [0,L_C], x \in [0,y].$$
Note that we have that 
$$F^+((\xs(y),y) = 0, \qquad y \in [0,L_C].$$
Furthermore we have that
$$\frac{d}{dx} F^+(x,y) =  g''(y-x) P_C(x)  -  2g'(y-x) P'_C(x) + g(y-x) P''_C(x), \qquad y \in [0,L_C], x \in [0,y]\cap supp(g),$$
and
$$\frac{d}{dy} F^+(x,y) =  - g''(y-x) P_C(x) + g'(y-x) P'_C(x), \qquad y \in [0,L_C], x \in [0,y]\cap supp(g).$$
As by Assumption we have $g(x)$ is strictly decreasing and concave in $x$ for $x \in [0,L_C]\cap supp(g)$, and  $P_C(x)$ is strictly decreasing and concave in $x$ for $x \in [0,L_C]$, it follows that for $y \in [0,L_C]$ we have for
$$ y \in [0,L_C] \mbox{ and }  x \in [0,y]\cap supp(g)$$
that 
$$ g''(y-x) P_C(x) < 0,$$
$$ - g'(y-x) P'_C(x) \leq 0,$$
and
$$ g(y-x) P''_C(x) < 0.$$
Combining the above results, it follows that
$$\frac{d}{dx} F^+(x,y) < 0 , \qquad y \in [0,L_C], x \in [0,y]\cap supp(g)$$
and
$$\frac{d}{dy} F^+(x,y) > 0 , \qquad y \in [0,L_C], x \in [0,y]\cap supp(g).$$
By Proposition~\ref{prop:optimal_production} we have that
$$\xs(y)) \in [0,y]\cap supp(g), \qquad y \in [0,L_C],$$
and it follows that
\begin{equation}\label{eqn:diff_xs_x}
\frac{d}{dx} F^+(x=\xs(y),y) < 0 , \qquad y \in [0,L_C],
\end{equation}
and
\begin{equation}\label{eqn:diff_xs_y}
\frac{d}{dy} F^+(x=\xs(y),y) > 0 , \qquad y \in [0,L_C].
\end{equation}
Using the implicit function theorem~\cite{rudin}, we then obtain that the function  $\xs(y)$ is continuous and differentiable on $[0,L_C]$ with 
\begin{eqnarray*}
\frac{d}{dy} \xs(y) &=& \frac{- \frac{d}{dy} F^+(x=\xs(y),y)}{\frac{d}{dx} F^+(x=\xs(y),y)}, \qquad y \in [0,L_C].
\end{eqnarray*}
Combining this result with Eq.~\eqref{eqn:diff_xs_x}~and~\eqref{eqn:diff_xs_y}, we obtain that
\begin{eqnarray*}
\frac{d}{dy} \xs(y) &=& \frac{- \frac{d}{dy} F^+(x=\xs(y),y)}{\frac{d}{dx} F^+(x=\xs(y),y)} > 0,
\end{eqnarray*}
and the function $\xs(y)$ is strictly increasing on $[0,L_C]$.

Next we consider the case where
$$y \in [-L_C,0].$$
For this case, let the function $F^-(x,y)$ be given by
$$ F^-(x,y) = g'(x-y) P_C(x) +  g(x-y) P'_C(x), \qquad y \in [-L_C,0], x \in [y,0].$$
Note that we have that 
$$F^-(\xs(y),y) = 0, \qquad y \in [-L_C,0].$$
Using the same argument as for the case where $y \in [0,L_C]$, we obtain from the above results that the function  $\xs(y)$ is continuous and differentiable on $[-L_C,0]$ with 
\begin{eqnarray*}
\frac{d}{dy} \xs(y) &=& \frac{- \frac{d}{dy} F^-(x=\xs(y),y)}{\frac{d}{dx} F^+(x=\xs(y),y)}, \qquad y \in [-L_C,0],
\end{eqnarray*}
that we have that
\begin{eqnarray*}
\frac{d}{dy} \xs(y) &=& \frac{- \frac{d}{dy} F^-(x=\xs(y),y)}{\frac{d}{dx} F^-(x=\xs(y),y)} > 0.
\end{eqnarray*}
It follows that the function $\xs(y)$ is strictly increasing on $[-L_C,0]$.

In order to show that the function $\xs(y)$ is differentiable in $I_C$, we have to show that
$$  \frac{- \frac{d}{dy} F^+(x=\xs(0),0)}{\frac{d}{dx} F^+(x=\xs(0),0)} =  \frac{- \frac{d}{dy} F^-(x=\xs(0),0)}{\frac{d}{dx} F^-(x=\xs(0),0)}.$$
Indeed, as by Assumption~\ref{ass:fg} we have that
$$ g'(0) = 0,$$
it follows from the above equations that
$$\frac{- \frac{d}{dy} F^+(x=\xs(0),0)}{\frac{d}{dx} F^+(x=\xs(0),0)} =  \frac{- \frac{d}{dy} F^-(x=\xs(0),0)}{\frac{d}{dx} F^-(x=\xs(0),0)} =
\frac{g''(0)P_C(0)}{g''(0)P_C(0) + g(0)P_C''(0)}.$$
It remains to prove that the function  $\xs(y)$ is continuously differentiable on $I_C$. By Assumption~\ref{ass:fg} we have that the function $g(x)$, $x \in [0,L]$, is twice continuously differentiable. Furthermore, by Proposition~\ref{prop:P_C}, we have that the function $P_C(x)$ is twice continuously differentiable on $\setR$. It then follows that the derivative
$$\frac{d}{dy} \xs(y), \qquad y \in I_C,$$
is continuous, and hence the function $\xs(y)$, $y \in I_C$, is continuously differentiable on $I_C$. 

The result of the lemma then follows.
\end{proof}

Proposition~\ref{prop:xs} then follows directly from Lemma~\ref{lemma:xs} and Proposition~\ref{prop:optimal_production}.

\newpage
\section{Proof of Proposition~\ref{prop:Ds}}\label{app:Ds}
In this appendix we prove  Proposition~\ref{prop:Ds} which characterizes the  the function $ \Ds(y)$ given by
$$ \Ds(y) = || y - x^*(y)||, \qquad y \in I_C.$$
Recall that the function $\Ds(y)$  characterizes the absolute value of the ``displacement'' of the optimal content $x^*(y)$ that agent $y$ produces, and content $y$ that the agent is best at producing which is equal to content type $y$.

We first establish the result of Proposition~\ref{prop:Ds} that the function $\Ds(y)$, $y \in I_C$, is symmetric with respect to $mid(I_C)$.

\begin{lemma}\label{lemma:Ds_symmetric}
Let    $E_p$, $0< E_p \leq 1$, be a given constant, and let  $C=(C_d,C_s)$ be an interval community, i.e. we have that  $ C_d = C_s = I_C,$ where $I_C \subset \setR$ is an interval $\setR$.
Furthermore assume that
$$\alpha_C(y) = E_p, \qquad y \in I_C,$$
and
$$L_C < \frac{L}{2}.$$
Then function  
$$ \Ds(y) = ||y - x^*(y)||, \qquad y \in I_C,$$
where
$$x^*(y) = \arg\max_{x \in \setR} q(x|y) P_C(x), \qquad y \in I_C,$$
is symmetric with respect to $mid(I_C)$ on $I_C$ with
$$\Ds(mid(I_C)) = 0.$$
\end{lemma}

\begin{proof}
Note that in order to show that $\Ds(y)$  symmetric on $I_C$ with respect to $mid(I_C)$ for $y \in I_C$, we have to show that
$$ \Ds(mid(I_C) + s) =  \Ds(mid(I_C) - s), \qquad s \in [0,L_C].$$
Note that from Proposition~\ref{prop:optimal_production} we have that
$$ \Ds(y) = \arg \max_{s \in [0,y]} g(s) P_C(y-s), \qquad y \in [mid(I_C),mid(I_C)+L_C],$$
and
$$ \Ds(y) = \arg \max_{s \in [0,y]} g(s) P_C(y + s), \qquad y \in [mid(I_C) - L_C,mid(I_C)].$$
Setting 
$$F^{(+)}(y,s) = g(s) P_C(y-s), \qquad y \in [mid(I_C),mid(I_C)+L_C],$$
and
$$F^{(-)}(y,s) = g(s) P_C(y+s), \qquad y \in [mid(I_C)-L_C,mid(I_C)],$$
we have that 
$$ \Ds(y) = \arg \max_{s \in [0,y]} F^{(+)}(y,s), \qquad y \in [mid(I_C),mid(I_C)+L_C],$$
and
$$ \Ds(y) = \arg \max_{s \in [0,y]} F^{(-)}(y,s), \qquad y \in [mid(I_C)-L_C,mid(I_C)].$$
Therefore in order to show that $\Ds(y)$  symmetric with respect to $mid(I_C)$ for $y \in I_C$, it suffices to show that
$$F^{(+)}(mid(I_C) - x,s) = F^{(-)}(mid(I_C) + x,s), \qquad x \in [0,L_C], s \in [0,x].$$ 
Note that the conditions that $ x \in [0,L_C]$ and $s \in [0,x]$ imply that
$$ x-s \geq 0.$$
Using the the result from Proposition~\ref{prop:P_C} that the function  $P_C(x)$ is symmetric with respect to $mid(I_C)$, we then obtain that for $ x \in [0,L_C]$ and $s \in [0,x]$ that 
\begin{eqnarray*}
F^{(+)}(mid(I_C) - x,s) &=& g(s) P_C(mid(I_C) + x -s)  \\ 
&=&  g(s) P_C(mid(I_C) - x + s) \\
&=& F^{(-)}(mid(I_C) -x ,s).
\end{eqnarray*}
It then follows that $\Ds(y)$ is symmetric with respect to $mid(I_C)$ on $I_C$.

Finally, from Proposition~\ref{prop:optimal_production} which states that
$$x^*(mid(I_C)) = I_C,$$
we obtain
$$\Ds(mid(I_C)) = 0.$$
The result of the lemma then follows.
\end{proof}

\newpage
We next establish the result of Proposition~\ref{prop:Ds} that the function $\Ds(y)$ is strictly decreasing on  $[mid(I_C) - L_C, mid(I_C)]$, and strictly increasing on  $[mid(I_C), mid(I_C)]+L_C]$. In addition, we characterize the derivative of the function $\Ds(y)$, $y \in I_C$, with respect to $y$, and show that
$$ \left | \Dsy(y) \right | < 1, \qquad y \in I_C \backslash \{mid(I_C)\},$$
where
$$ \Dsy(y) =  \frac{d}{dy}  \Ds(y).$$

\begin{lemma}\label{lemma:Ds}
Let    $E_p$, $0< E_p \leq 1$, be a given constant, and let $C=(C_d,C_s)$ be an interval community, i.e. we have that  $ C_d = C_s = I_C,$ where $I_C \subset \setR$ is an interval in $\setR$. Furthermore  assume that
$$\alpha_C(y) = E_p, \qquad y \in I_C,$$
and
$$L_C < \frac{L}{2}.$$
Then for the function $\Ds(y)$ given by
$$ \Ds(y) = || y - x^*(y)||, \qquad y \in  I_C,$$
where
$$x^*(y)= \arg\max_{x \in \setR} q(x|y) P_C(x), \qquad y \in I_C,$$
we have that
\begin{enumerate}
\item[a)] $\Ds(y)$ is continuously differentiable on $I_C \backslash \{mid(I_C)\}$. 
\item[b)]  $\Ds(y)$ strictly decreasing on  $[mid(I_C) - L_C, mid(I_C)]$, and strictly increasing on  $[mid(I_C), mid(I_C)]+L_C]$.
\item[c)] the absolute value of the derivative $\frac{d}{dy}  \Ds(y)$ of the function $\Ds(y)$ with respect to $y$ is less than 1 on $I_C \backslash \{mid(I_C)\}$, i.e. we have that
$$ \left | \Dsy(y) \right | < 1, \qquad y \in I_C \backslash \{mid(I_C)\},$$
where
$$ \Dsy(y) =  \frac{d}{dy}  \Ds(y).$$
\end{enumerate}
\end{lemma}

\begin{proof}
Without loss of generality, we can assume that the interval $I_C$ is given by
$$ I_C = [-L_c,L_c].$$
By Lemma~\ref{lemma:Ds_symmetric} we have that the function $\Ds(y)$ is  symmetric with respect to $mid(I_C)$, it suffices to prove part the lemma for the case where
$$y \in [0,L_C].$$ 
We first prove part a) and b) of the lemma.

Recall that $x^*(y)$ is given by
$$x^*(y) = \arg\max_{x \in \setR} q(x|y) P_C(x), \qquad y \in I_C,$$
and that by Proposition~\ref{prop:optimal_production} the necessary and sufficient condition for $x^*(y)$, $y \in [0,L_C]$,  to be the solution to the above optimization problem is given by
$$  q'(x|y) P_C(x) +  q(x|y) P'_C(x) = 0, \qquad x \in [0,y].$$

By Proposition~\ref{prop:optimal_production} which states that
$$ \xs(y) \in [0,y] \cap  supp(q(\cdot|y)), \qquad y \in [0, L_C],$$
we can rewrite $x^*(y)$ for
$$y \in [0,L_C]$$
as
$$x^*(y) = y - \Ds(y).$$

Using the result from  Proposition~\ref{prop:optimal_production} which states that $\xs(y)$ is the unique solution to the optimization problem
$$\max_{x \in \setR} q(x|y) P_C(x), \qquad y \in I_C,$$
we obtain that  $\Ds(y)$ is the unique solution to the optimization problem 
$$\Ds(y) = \arg\max_{s \in [0,y] \cap supp(g)} g(s) P_C(y-s), \qquad y \in [0,L_C].$$
Furthermore, from Lemma~\ref{lemma:qP_C_concave} we obtain that the necessary and sufficient condition for $\Ds(y)$ to be the solution to this  optimization problem is given by
$$ g'(\Ds(y)) P_C(y - \Ds(y)) -  g(\Ds(y)) P'_C(y - \Ds(y)) = 0, \qquad \Ds(y) \in [0,y] \cap supp(g) .$$
Let the function $F(s,y)$ be given by
$$ F(s,y) = g'(s) P_C(y -s) -  g(s) P'_C(y -s), \qquad y \in [0,L_C], s \in [0,y] \cap supp(g).$$
Note that we have that 
$$F((\Ds(y),y) = 0, \qquad y \in [0,L_C].$$
Furthermore we have that
$$\frac{d}{ds} F(s,y) =  g''(s) P_C(y - s)  -  2g'(s) P'_C(y - s) + g(s) P''_C(y - s), \qquad y \in [0,L_C], s \in [0,y] \cap supp(g),$$
and
$$\frac{d}{dy} F(s,y) =  g'(s) P'_C(y - s) - g(s) P''_C(y - s), \qquad y \in [0,L_C], s \in [0,y] \cap supp(g).$$
As by Assumption~\ref{ass:fg}  we have $g(s)$ non-increasing and strictly concave in $s$ for $s \in [0,y] \cap supp(g)$, and
by Proposition~\ref{prop:P_C} we have that the function $P_C(x)$ is non-increasing and strictly concave for $x \in [0,L_C]$, it follows that for
$$ y \in [0,L_C] \mbox{ and } s \in [0,y] \cap supp(g)$$
we have that 
$$ g''(s) P_C(y - s) < 0,$$
$$ g'(s) P'_C(y - s) \geq 0,$$
and
$$ g(s) P''_C(y - s) < 0.$$
Combining the above results, it follows that
$$\frac{d}{ds} F(s,y) < 0 , \qquad y \in [0,L_C], s \in [0,y] \cap supp(g) $$
and
$$\frac{d}{dy} F(s,y) > 0 , \qquad y \in [0,L_C], s \in [0,y] \cap supp(g).$$
By Proposition~\ref{prop:optimal_production} we have that
$$\xs(y)) \in [0,y], \qquad y \in [0,L_C],$$
and it follows that
\begin{equation}\label{eqn:diff_s_F}
\frac{d}{ds} F(s=\Ds(y),y) < 0 , \qquad y \in [0,L_C],
\end{equation}
and
\begin{equation}\label{eqn:diff_y_F}
\frac{d}{dy} F(s=\Ds(y),y) > 0 , \qquad y \in [0,L_C].
\end{equation}
Using the implicit function theorem~\cite{rudin}, we obtain that the function  $\Ds(y)$ is continuous and differentiable on $[0,L_C]$ with 
\begin{eqnarray*}
\frac{d}{dy} \Ds(y) &=& \frac{- \frac{d}{dy} F(s=\Ds(y),y)}{\frac{d}{ds} F(s=\Ds(y),y)}, \qquad y \in [0,L_C].
\end{eqnarray*}
As by Assumption~\ref{ass:fg} we have that the functions $f$ is continuously differentiable, and the function $g$ is continuously differentiable on $supp(g)$, it follows that the function  $\Ds(y)$ is continuously differentiable on $[0,L_C]$. Furthermore, using  Eq.~\eqref{eqn:diff_s_F}~and~\eqref{eqn:diff_y_F} we obtain that
\begin{eqnarray*}
\frac{d}{dy} \Ds(y) &=& \frac{- \frac{d}{dy} F(s=\Ds(y),y)}{\frac{d}{ds} F(s=\Ds(y),y)} > 0, \qquad y \in [0,L_C],
\end{eqnarray*}
and the function $\Ds(y)$ is strictly increasing on $[0,L_C]$. As by Lemma~\ref{lemma:Ds_symmetric} the function $\Ds(y)$ is symmetric with respect to $mid(I_C)$ on $I_C$, it follows that function $\Ds(y)$ is  continuously differentiable and strictly decreasing on $[-L_C,0]$.

Note that the above argument only shows that the function $\Ds(y)$ is continuously differentiable on the interval $[0,L_C]$ and $[-L_C,0]$, but does not show that the derivative is continuous at $y=0$. Hence the above proof does not show that the derivative at $y=0$ exists, and we can only conclude that
$$\left | \frac{d}{dy} \Ds(y) \right |  < 1, \qquad y \in  I_C \backslash \{mid(I_C)\}.$$
This establishes part a) and b) of the lemma.

It remains to prove part c) of the lemma.
Recall that
$$\frac{d}{ds} F(s,y) =  g''(s) P_C(y - s)  -  g'(s) P'_C(y - s) - \frac{d}{dy} F(s,y), \qquad y \in [0,L_C], s \in [0,y] \cap supp(g),$$
and we can re-write the derivative of $  \Ds(y)$, $y \in [0,L_C]$, as
$$\frac{d}{dy} \Ds(y) = \frac{ - \frac{d}{dy} F(s=\Ds(y),y)}
{ g''( \Ds(y)) P_C(x^*(y))  -  g'(\Ds(y)) P'_C(x^*(y)) -  \frac{d}{dy} F(s=\Ds(y),y)}, \qquad y \in [0,L_C].$$
By Proposition~\ref{prop:optimal_production} we have that
$$\xs(y)) \in [0,y], \qquad y \in [0,L_C].$$
Furthermore, from the above results we have that
$$g''( \Ds(y)) P_C(x^*(y)) < 0, \qquad y \in [0,L_C],$$
and
$$  -  g'(\Ds(y)) P'_C(x^*(y)) \leq 0, \qquad y \in [0,L_C],$$
as well as
$$- \frac{d}{dy} F(s=\Ds(y),y) < 0 , \qquad y \in [0,L_C].$$
Combining these results we obtain for $y \in [0,L_C]$ that
$$ g''( \Ds(y)) P_C(x^*(y))  -  g'(\Ds(y)) P'_C(x^*(y)) -  \frac{d}{dy} F(s=\Ds(y),y) <  -  \frac{d}{dy} F(s=\Ds(y),y) < 0,$$
and hence we have that
$$ \left | \frac{d}{dy} \Ds(y) \right |  < 1,  \qquad y \in [0,L_C].$$
\end{proof}

\newpage

Using the result of Lemma~\ref{lemma:Ds}, we derive an additional result for the function $\xs(y)$, $y \in I_C$, which characterizes the image of the function $x^*(y)$.
\begin{lemma}\label{lemma:production_value_interval}
Let   $E_p$, $0< E_p \leq 1$, be a given constant, and let  $C=(C_d,C_s)$ be an interval community, i.e. we have that  $ C_d = C_s = I_C,$ where $I_C \subset \setR$ is an interval on $\setR$. Furthermore assume that
$$\alpha_C(y) = E_p, \qquad y \in C_d,$$
and
$$L_C < \frac{L}{2}.$$
Then the image of the function $x^*(y)$,
$$x^*(y) = \arg\max_{x \in \setR} q(x|y) P_C(x), \qquad y \in I_C,$$
is given by an interval $I^*_C \subset \setR$ such that
$$ mid(I^*_C) = mid(I_C)$$
and
$$L^*_C = \frac{|I^*_C|}{2} < L_C.$$
\end{lemma}

\begin{proof}
As the interval $I_C$ is a connected set and by Lemma~\ref{lemma:xs} the function $x^*(y)$ is continuous on $I_C$, it follows that the image of  $x^*(y)$, $y \in I_C$,  is a  connected set and hence given by  an interval $I^*_C$. That means that the closure of the image is given by a closed interval. 

By Proposition~\ref{prop:optimal_production}, we have that
$$x^*(y) = y + \Ds(y), \qquad y \in [mid(L_C)-L_C,mid(L_C)],$$
and
$$x^*(y) = y - \Ds(y), \qquad y \in [mid(L_C),mid(L_C) + L_C],$$
where $\Ds$ is given by
$$ \Ds(y) = || y - \xs(y)||, \qquad y \in I_C.$$
Furthermore by Lemma~\ref{lemma:xs}, we have that the function $\xs(y)$ is strictly increasing on $I_C$.  Therefore in order to prove the lemma, we have to show that
$$\Ds(mid(I_C) - L_C) =  \Ds(mid(I_C) + L_C) > 0,$$

Note that by Lemma~\ref{lemma:Ds} we have that function  $\Ds(y)$ is symmetric with respect to $mid(I_C)$, and it follows that
$$ mid(I^*_C) = mid(I_C)$$
and
$$\Ds(mid(I_C) - L_C) =  \Ds(mid(I_C) + L_C).$$
Therefore, in order to prove the lemma it remains to show that
$$ \Ds(mid(I_C) + L_C) > 0.$$
To obtain this result, note that by Lemma~\ref{lemma:Ds} the function  $\Ds(y)$ is strictly increasing on $[mid(I_C), mid(I_C) + L_C]$,  and by  Lemma~\ref{lemma:Ds_symmetric} we have that
$$\Ds(mid(I_C)) = 0.$$
It then follows that
$$  \Ds(mid(I_C) + L_C) > 0,$$
which completes the proof of the lemma.
\end{proof}

\newpage
\section{Proof of Proposition~\ref{prop:Q_C}}\label{app:Q_C}
In this appendix we prove Proposition~\ref{prop:Q_C}. To do this, we characterize in the next subsection the density $\bsC(x)$ of the rate at which content of type $x$ is being produced in an interval community.


\subsection{Properties of $\bsC(x)$}
Having characterized in  Proposition~\ref{prop:optimal_production} the optimal content production rate $\bsC(\cdot|y)$ of an agent in a given interval community $C=(I_C,I_C)$, we characterize in this subsection  the resulting density $\bsC(x)$ of the rate at which of content of type $x$ that is being produced in the community $C$, i.e. the density $\bsC(x)$ given by
$$ \bsC(x) = \int_{I_C} \bsC(x|y) dy, \qquad x \in \setR.$$

We use the following definitions. Recall that by Lemma~\ref{lemma:production_value_interval} we have for a given interval community $C=(I_C,I_C)$, that the closure $I^*_C$ of the image of the function $\xs(y)$, $y \in I_C$ is given by an interval $I^*_C \subset \setR$ such that
$$ mid(I^*_C) = mid(I_C)$$
and
$$L^*_C = \frac{|I^*_C|}{2} < L_C.$$
Using $I^*_C$, let the function 
$$ y = s(x), \qquad x \in I^*_C,$$
be such that
$$ x^*(y) = x,$$
i.e. $s(x)$ identifies the agent $y \in I_C$ for which it is optimal to produce content type $x$.

Finally, let
$$\xsy(y) = \frac{d}{dy} \xs(y), \qquad y \in I_C,$$
be the derivative of the function $\xs(y)$ with respect to $y$. Note that by Lemma~\ref{lemma:xs} this derivative is well-defined.

We then have the following two results for $\bsC(x)$, $x \in I^*_C$. 
\begin{lemma}\label{lemma:bsC}
Let    $E_p$, $0< E_p \leq 1$, and $E_q$, $0< E_q$, be given constants, and let  $C=(C_d,C_s)$ be an interval community, i.e. we have that  $ C_d = C_s = I_C,$ where $I_C \subset \setR$ is an interval in $\setR$. Furthermore assume that
$$\alpha_C(y) = E_p, \qquad y \in I_C$$
and
$$L_C < \frac{L}{2},$$
as well as that
$$\max_{x \in \setR} \Bsbl q(x|y) P_C(x) - \alpha_Cc \Bsbr > 0, \qquad y \in I_C,$$
where $\alpha_C = 2E_pL_C$.
Assuming that
$$ \bC(y) = E_q, \qquad y \in I_C,$$
where $\bC(y)$ is the bound on the total content production rate of agent $y$ in community $C$, 
let $\bsC(\cdot|y)$ be the optimal content production rate of agent $y \in I_C$ as given in Proposition~\ref{prop:optimal_production}, i.e. we have that
$$\bsC(x|y) = E_q \delta(x-x^*(y)), \qquad y \in I_C, x \in \setR,$$
where
$$\xs(y)= \arg\max_{x \in \setR} q(x|y) P_C(x), \qquad y \in I_C.$$
Then the density
$$\bsC(x) = \int_{I_C} \bC(x|y) dy, \qquad x \in \setR,$$
of the rate at which of content of type $x$ that is being generated in community $C$ under the optimal contention production allocations $\bsC(\cdot|y)$, $y \in I_C$ is given by
$$\bsC(x) = \frac{1}{\xsy(s(x))} E_q, \qquad x \in I^*_C,$$
where $I^*_C$ is the image of the function $\xs(y)$, $y \in I_C$.
\end{lemma}

\begin{proof}
By Proposition~\ref{prop:xs} with have that the function $x^*(y)$ is strictly increasing on $I_C$, and hence it is a bijection on $I_C$. It then follows that the inverse function $s(x)$ of $x^*(y)$ is well defined on $I_C^*$.
Furthermore by Lemma~\ref{lemma:production_value_interval} we have that the closure $I^*_C$ of the  image  of $\xs(y)$, $y \in I_C$, is given
$$I^*_C = [mid(I_C) - L^*_C, mid(I_C) - L^*_C ]$$
where
$$ 0 < L_C^* < L_C.$$
Without loss of generality, we can assume that $mid(I_C) = 0$ and we have that
$$I_C = [- L_C,  L_C ]$$
and
$$I^*_C = [- L^*_C, L^*_C ].$$
As we have that
$$\alpha_C(y) = E_p > 0, \qquad y \in C_d$$
and
$$\int_{\setR} \bsC(x|y) dx = E_q > 0,   \qquad y \in C_d,$$
as well as
$$ \max_{x \in \setR} \Bsbl q(x|y) P_C(x) - \alpha_Cc \Bsbr > 0, \qquad y \in I_C,$$
it follows from Proposition~\ref{prop:optimal_production} that the optimal production rate allocation for an agent $y \in I_C$ is given by
$$\beta^*_C(x|y) = E_q \delta(x-x^*(y)), \qquad x \in \setR,$$
where $d$ is the Dirac delta function, $x^*(y)$ is the unique solution to the optimization problem
$$\max_{x \in \setR}  q(x|y) P_C(x).$$
Combing these results with the fact that $\xs(y)$ is strictly increasing on $I_C$,  we obtain that
$$\int_{-L^*_C}^x \bsC(\tau) d\tau = \int_{-L_C}^{s(x)} E_q dy$$
and
$$ \bsC(x)  = \frac{d}{dx} \int_{-L_C}^{s(x)} E_q dy = E_q \frac{d}{dx} s(x).$$
Using well-known results for the derivative of a function~\cite{rudin}, the density $\bsC(x)$ of the rate at which content of type $x$ that is being generated in community $C$ under the optimal contention production allocations $\bsC(\cdot|y)$, $y \in I_C$, is then given by
$$\bsC(x) = E_q \frac{1}{\xsy(y=s(x))}, \qquad x \in I^*_C,$$
where $\xsy(y=s(x))$ is the derivative of the function $x^*(y)$ with respect to $y$ evaluated for  $y = s(x)$. Note that by Proposition~\ref{prop:xs} the derivative $\xsy(y)$, $y \in I_C$, is well-defined.\\

This completes the proof of the lemma.
\end{proof}

\newpage

\begin{lemma}\label{lemma:bsC2}
Let   $E_p$, $0< E_p \leq 1$, and $E_q$, $0< E_q$, be given constants, and let  $C=(C_d,C_s)$ be an interval community, i.e. we have that  $ C_d = C_s = I_C,$ where $I_C \subset \setR$ is an interval in $\setR$. Furthermore assume that
$$\alpha_C(y) = E_p, \qquad y \in I_C$$
and
$$L_C < \frac{L}{2},$$
as well as that
$$\max_{x \in \setR} \Bsbl q(x|y) P_C(x) - \alpha_Cc \Bsbr > 0, \qquad y \in I_C,$$
where $\alpha_C = 2E_pL_C$.
Assuming that
$$ \bC(y) = E_q, \qquad y \in I_C,$$
where $\bC(y)$ is the bound on the total content production rate of agent $y$ in community $C$, 
let $\bsC(\cdot|y)$ be the optimal content production rate of agent $y \in I_C$ as given in Proposition~\ref{prop:optimal_production}, i.e. we have that
$$\bsC(x|y) = E_q \delta(x-x^*(y)), \qquad y \in I_C, x \in \setR,$$
where
$$x^*(y)= \arg\max_{x \in \setR} q(x|y) P_C(x), \qquad y \in I_C,$$
and let $I^*_C$ be the image of the function $\xs(y)$, $y \in I_C$.

Then the density
$$\bsC(x) = \int_{I_C} \bC(x|y) dy, \qquad x \in \setR,$$
of the rate at which of content of type $x$ that is being generated in community $C$ under the optimal contention production allocations $\bsC(\cdot|y)$, $y \in I_C$,  is continuous, and symmetric with respect to $mid(I_C)$, on $I^*_C$.

Furthermore, we have that
$$ \bsC(x) > E_q, \qquad x \in I^*_C.$$
\end{lemma}

\begin{proof}
Without loss of generality, we can assume that the intervals $I_C$ and $I^*_C$ are given by
$$ I_C = [ - L_C, L_C]$$
and
$$ I^*_C = [ - L^*_C, L^*_C].$$

We first show that the function $\bsC(x)$, $x \in I^*_C$, is symmetric with respect to $mid(I_C)$. 
Let the function $\Ds(y)$ be given by
$$\Ds(y) = | | y - \xs(y)||, \qquad y \in I_C.$$
By Lemma~\ref{lemma:Ds_symmetric} we have that the function $\Ds(y)$ is symmetric with respect to $mid(I_C)=0$ on $I_C$, as well as  strictly decreasing on $[ -L_C,0)$ and strictly increasing on $(0,L_C]$ with
$$ \Ds(0) = 0.$$
It then follows that for the derivative
$$  \Dsy(y) = \frac{d}{dy} \Ds(y), \qquad y \in  I_C\backslash\{ mid(I_C) \}$$
we have that
$$  \Dsy(-y) = -  \Dsy(y), \qquad y \in  I_C\backslash\{ mid(I_C) \}.$$
Using the definition of $\Ds(y)$, we obtain that
$$\xsy(y) = \left \{
\begin{array}{ll}
 1 -  \Dsy(y), & y \in [0, L_C], \\
 1 +  \Dsy(y), & y \in [- L_C, 0].
\end{array} \right .$$
Therefore in order to show that $\bsC(x)$ is symmetric with respect to $mid(I_C)$ on $I^*_C$, we have to show that
$$s(-x) = - s(x), \qquad x \in I^*_C.$$
To do this, recall that
$$ x^*(y) = \left \{
\begin{array}{ll}
y - \Ds(y), & y \in [0,L_C],\\
y + \Ds(y), & y \in [-L_C,0].
\end{array}
\right .$$
Using the result that $\Ds(y)$ is symmetric with respect to $mid(I_C)$ on $I_C$, we have that
$$ - x^*(y) = - y  + \Ds(y) = - y  + \Ds( - y) = x^*(-y), \qquad y \in [0,L_C].$$
It then follows that
$$  - x^*(y) = x^*(-y), \qquad y \in I_C$$
and
$$ s(-x^*(y)) = s(x^*(-y)) = -y, \qquad y  \in I_C.$$
As
$$ s(x^*(y)) = y,$$
we obtain  that
$$ s(-x) = - s(x), \qquad x \in I_C^*.$$
This proves  that $\bsC(x)$ is symmetric with respect to $mid(I_C)$ on $I^*_C$.

We next show show that the function $\bsC(x)$ is continuous on $I^*_C$. Recall that we have that
$$\bsC(x) = \frac{1}{\xsy(s(x))}E_q, \qquad x \in I^*_C,$$
where $\xsy(y)$, $y \in I_C$, is the derivative of the function
$$\xs(y) =  \arg\max_{x \in \setR} q(x|y) P_C(x), \qquad y \in I_C.$$
By Proposition~\ref{lemma:xs} we have that the function $\xs(y)$ is continuously differentiable on $I_C$ and that we have that
$$\xsy(y) > 0, \qquad y \in I_C.$$
This result implies that the function $\bsC(x)$ is continuous on $I^*_C$.

It remains to show that
$$ \bsC(x) > E_q, , \qquad x \in I^*_C.$$
From Lemma~\ref{lemma:Ds} we have that the function $\Ds(y)$ is  strictly decreasing on $[ -L_C,0)$ and strictly increasing on $(0,L_C]$, and
$$ | \Dsy(y)| < 1, \qquad y \in I_C\backslash\{mid(I_C)\}.$$
It then follows that
$$ \frac{1}{1 -  \Dsy(s(x))} > 1, \qquad x \in (mid(I_C), mid(I_C) + L^*_C]$$
and
$$\frac{1}{1 +  \Dsy(s(x))} > 1, \qquad x \in [mid(I_C) - L^*_C, mid(I_C)),$$
and we obtain that
$$ \bsC(x) > E_q,  \qquad x \in I^*_C\backslash\{mid(I_C)\}.$$

It remains to show that for
$$y_0 = mid(I_C)$$
we have that
$$\bsC(y_0) = \frac{1}{\xsy(s(y_0)}E_q > E_q,$$
or
$$\xsy(s(y_0)) < 1.$$
From the proof of Lemma~\ref{lemma:xs}, we have that
$$ \xsy(s(y_0)) = \frac{g''(0)P_C(0)}{g''(0)P_C(0) + g(0)P_C''(0)}.$$
From Assumption~\ref{ass:fg} we have that
$$g(0) > 0 \mbox{ and } g''(0) > 0,$$
and from Proposition~\ref{prop:P_C} we have that
$$P_C(0) > 0 \mbox{ and } P_C''(0) > 0.$$
Combining these results, it follows that
$$\xsy(s(y_0)) < 1.$$
This completes the proof of the lemma.
\end{proof}

\newpage
\subsection{Properties of $Q^*_C(x)$}
Having studied in the previous subsection the optimal content production rate allocation for a given interval community  $C=(I_C,I_C)$, we prove in this subsection the results of Proposition~\ref{prop:Q_C} which characterizes the  properties  of the content supply function $Q^*_C(x)$ given
$$ Q^*_C(x) = \int_{y \in I_C}  \beta^*_C(x|y) q(x|y) dy, \qquad x \in \setR,$$
where $\bsC(\cdot|y)$ is the optimal content production rate of agent $y$.

Recall that by Lemma~\ref{lemma:production_value_interval} we have for a given interval community $C=(I_C,I_C)$, that the image $I^*_C$ of the function $\xs(y)$, $y \in I_C$ is given by an interval
$$I^*_C = [mid(I_C) - L^*_C, mid(I_C) + L^*_C ] \subset I_C.$$
Furthermore, recall  that the function $s(x)$, $x \in I^*_C$,  is such that for $y = s(x)$ we have that
$$x^*(y) = x.$$

We have the following result.
\begin{lemma}\label{lemma:Q_C}
Let   $E_p$, $0< E_p \leq 1$, and $E_q$, $0< E_q$, be given constants, and let  $C=(C_d,C_s)$ be an interval community, i.e. we have that  $ C_d = C_s = I_C,$ where $I_C \subset \setR$ is an interval in $\setR$. Furthermore assume that
$$\alpha_C(y) = E_p, \qquad y \in I_C$$
and
$$L_C < \frac{L}{2},$$
as well as that
$$\max_{x \in \setR} \Bsbl q(x|y) P_C(x) - \alpha_Cc \Bsbr > 0, \qquad y \in I_C,$$
where
$$\alpha_C = 2E_pL_C.$$
Assuming that
$$ \bC(y) = E_q, \qquad y \in I_C,$$
where $\bC(y)$ is the bound on the total content production rate of agent $y$ in community $C$, 
let $\beta^*(\cdot|y)$ be the optimal content production rate of agent $y \in I_C$ as given in Proposition~\ref{prop:optimal_production}, i.e. we have that
$$\beta^*_C(x|y) = E_q \delta(x-x^*(y)), \qquad y \in I_C, x \in \setR,$$
where
$$x^*(y)= \arg\max_{x \in \setR} q(x|y) P_C(x), \qquad y \in I_C,$$
and let $I^*_C$ be the image of the function $\xs(y)$, $y \in I_C$.

The optimal content supply function $Q^*_C(x)$ is given by
  $$Q^*_C(x) = E_q \int_{y \in I_C}  \delta(x - \xs(y)) q(\xs(y)|y) dy, \qquad x \in \setR, y \in I_C,$$
and we have that
\begin{enumerate}
\item[a)] the support $supp(Q_C^*(x))$ of $Q^*_C(x)$, $x \in \setR$  is given by
 $$supp(Q_C^*(x)) =  I^*_C = [mid(I_C) - L^*_C, mid(I_C) + L^*_C]$$
 where
 $$ 0 < L^*_C < L_C.$$
\item [b)] the function $\QsC(x)$ is given by
$$Q^*_C(x) = \left \{
\begin{array}{lr}
\bsC(x) q(x|s(x)), &x  \in I^*_C,\\
0, & x \notin I^*_C,
\end{array} \right .$$
where
$$\bsC(x)  = \int_{I_C} \bsC(x|y) dy, \qquad x \in I^*_C.$$
\item[c)] the function $Q^*_C(x)$ is continuous, and symmetric with respect to $mid(I_C)$, on $I^*_C$.
\end{enumerate}
\end{lemma}
Note that Proposition~\ref{prop:Q_C} follows directly from Lemma~\ref{lemma:Q_C}  as the communities under a Nash equilibrium as given by Proposition~\ref{prop:nash} satisfy the assumptions made in Lemma~\ref{lemma:P_C}. In particular as for an interval community $C = (I_C,I_C) \in \setC^*$ under a Nash equilibrium as given by Proposition~\ref{prop:nash} we have that 
$$\UsC(y) > 0, \qquad y \in C_s = I_C,$$
and
$$\alpha_C(y) = E_p > 0, \qquad y \in C_d = I_C,$$
it follows that
$$\max_{x \in \setR} \Bsbl q(x|y) P_C(x) - \alpha_Cc \Bsbr > 0, \qquad y \in I_C,$$
where $\alpha_C = 2E_pL_C$.

Furthermore, by Proposition~\ref{prop:optimal_production} we have for an interval community $C = (I_C,I_C) \in \setC^*$ under a Nash equilibrium as given by Proposition~\ref{prop:nash} that
$$\int_{\setR} \bsC(x|y) dx = E_q > 0,   \qquad y \in C_s = I_C.$$

\begin{proof}
Note that by Proposition~\ref{prop:optimal_production}, we have that
$$\bsC(x|y) =  E_q\delta(x - \xs(y)), \qquad x \in \setR, y \in I_C,$$
where $\delta(\cdot)$ is the Dirac delta function, and
$$x^*(y)= \arg\max_{x \in \setR} q(x|y) P_C(x), \qquad y \in I_C.$$
Using the result that
$$  E_q\delta(x - \xs(y)) q(x|y) =   E_q\delta(x - \xs(y)) q(\xs(y)|y) , \qquad x \in \setR, y \in I_C,$$
we obtain the result that
 $$Q^*_C(x) = \int_{y \in I_C} \bsC(x|y) q(x|y) dy = 
 E_q \int_{y \in I_C}  \delta(x - \xs(y)) q(\xs(y)|y) dy, \qquad x \in \setR.$$

Let $I^*_C$ be the image of the function $\xs(y)$, $y \in I_C$. By Lemma~\ref{lemma:production_value_interval} we have that $I^*_C$ is given by an interval in $\setR$, such that
$$ mid(I^*_C) = mid(I_C)$$
and the half-length $L^*_C$ of the interval $I^*_C$ is strictly smaller than $L_C$, i.e. we have that 
$$L^*_C = \frac{|I^*_C|}{2} < L_C.$$
Combining this result with fact that 
$$\beta^*_C(x|y) = E_q \delta(x-x^*(y)), \qquad y \in I_C, x \in \setR,$$
and Proposition~\ref{prop:optimal_production} which states that
$$q(x^*(y)|y) > 0, \qquad y \in I_C,$$
we obtain that
 $$supp(Q_C^*(x)) = I^*_C = [mid(I_C) - L^*_C, mid(I_C) + L^*_C]$$
 with
 $$L^*_C < L_C.$$

Next we show that
$$Q^*_C(x) = \bsC(x) q(x|s(x)), \qquad x \in I^*_C.$$
Using the same argument as given in the proof of Lemma~\ref{lemma:bsC}, we have that
$$\int_{-L^*_C}^x Q^*_C(\tau) d\tau = \int_{-L_C}^{s(x)} E_q q(\xs(y)|y) dy$$
and
$$ Q^*_C(x)  = \frac{d}{dx} \int_{-L_C}^{s(x)} E_q q(\xs(y)|y) dy  = E_q q(x|s(x))\frac{d}{dx} s(x).$$
As we have that
$$\bsC(x) =  E_q q(x|s(x))\frac{d}{dx} s(x), \qquad x \in I^*_C,$$
it follows that
$$Q^*_C(x) = \bsC(x) q(x|s(x)), \qquad x \in I^*_C,$$
and
$$Q^*_C(x) = 0, \qquad x \notin I^*_C,$$

Next we show that the function  $Q^*_C(x)$ is continuous on $I^*_C$. As we have that
$$Q^*_C(x) = \bsC(x) q(x|s(x)), \qquad x \in I^*_C,$$
it follows that $Q^*_C(x)$ is continuous on $I^*_C$ if we can show that the functions $\bsC(x)$ and $q(x|s(x)$ are continuous on $I^*_C$. By Lemma~\ref{lemma:bsC2} we have that the function  $\bsC(x)$ is continuous on $I^*_C$. By Assumption~\ref{ass:fg}, we have that the function $g(x)$, $x \in [0,L]$, is continuous in $x$. From Lemma~\ref{lemma:xs} which states that the function $\xs(y)$, $y \in I_C$, is continuous in $y$, we have that the function $s(x)$, $x \in I^*_C$, is continuous in $x$. Finally, by Assumption~\ref{ass:fg} we have that the function $g(x)$, $x \in [0,L]$, is contours in $x$, and it follows that the function
$$q(x|s(x)) = g(|x - s(x)|), \qquad x \in I^*_C,$$
is continuous in $x$. This establishes that the function $Q^*_C(x)$ in continuous on $I^*_C$.

It remains to show that  the function $Q^*_C(x)$ is symmetric with respect to $mid(I_C)$ on $I^*_C$. By Lemma~\ref{lemma:bsC2}, we have that $\bsC(x)$ is symmetric with respect to $mid(I_C)$ on $I^*_C)$, therefore in order to show that $Q^*_C(x)$ is symmetric with respect to $mid(I_C)$ on $I^*_C$, it suffices to show that $q(x|s(x))$ is symmetric with respect to $mid(I_C)$ on $I^*_C$. To do this, without loss of generality we can assume that the interval $I_C$ is given by
$$ I_C = [-L_C,L_C]$$
and 
 the interval $I^*_C$ is given by
$$ I^*_C = [-L^*_C,L^*_C],$$
where
$$L_C^* < L_C.$$
Recall that
$$ q(x|s(x)) = g ( \Ds(s(x))).$$
Therefore in order so show that $q(x|s(x))$ is symmetric with respect to $mid(I_C) = 0$ on $I^*_C$, it suffices to show that
$$ g ( \Ds(s(-x))) =  g ( \Ds(s(x))), \qquad x \in I^*_C.$$
By Lemma~\ref{lemma:Ds_symmetric} the function $\Ds(y)$ is symmetric with respect to $mid(I_C) = 0$ on $I_C$, and from the proof of Lemma~\ref{lemma:bsC2} we have that
$$s(-x) = - s(x), \qquad x \in I^*_C.$$
It then follows that
$$ g ( \Ds(s(-x))) =  g ( \Ds(-s(x))) =   g ( \Ds(s(x))),  \qquad x \in I^*_C,$$
and we obtain that  $Q^*_C(x)$ is symmetric with respect to $mid(I_C) = 0$ on $I^*_C$.\\

This completes the proof of the lemma.
\end{proof}

\newpage
\section{Proof of Proposition~\ref{prop:UdC}}\label{app:UdC}
In this appendix we prove Proposition~\ref{prop:UdC} which characterize the utility rates $U_C^{(d)}(y)$ for content consumption under a Nash Equilibrium $\commStrucNash$ as given by Proposition~\ref{prop:nash}. 

To do this  we consider the case where we are given  a set $C_s = I_C$ of content producers where $I_C \subset \setR$ is an interval in  $\setR$. Note that this situation applies to the Nash equilibrium given by Proposition~\ref{prop:nash} where all communities are given by interval communities, i.e. the set of agents that produce (and consume) content in a community $C$ is given by an interval  $I_C$ in $\setR$. For this situation, we assume that we are given the content supply function $Q_C(x)$, $x \in \setR$, for the content produced by agents $y \in C_s = I_C$, and then characterize then the maximal utility of content consumption that a agent $y \in \setR$ can obtain under the content supply function $Q_C(x)$. 

Our first result characterize the maximal content consumption utility function $U_C^{(d)}(y)$, $y \in \setR$, under the additional assumption that the content production rates $\bC(x|y)$ of the agents in the set $C_s = I_C$ have the property that
$$\int_{\setR} \bC(x|y) dx = E_q > 0.$$
For this case, we have the following result. 
\begin{lemma}\label{lemma:UdC_FdC}
Let $E_q$, $0<E_q$, be a given constant, and let  $C_s = I_C$ be a set of content producers where $I_C \subset \setR$ is an interval in  $\setR$.
Assume that
$$\int_{\setR} \bC(x|y) dx = E_q > 0, \qquad y \in I_C,$$
where $\bC(x|y)$, $y \in I_C$ and $x \in \setR$, is the rate allocation of an agent $y \in I_C$.
Then the maximal utility rate for content consumption that an agent $y \in \setR$ obtains from the producers $y \in I_C$ is given by
$$U_C^{(d)}(y) = \max \Bsl 0,\FdC(y) \Bsr, \qquad y \in \setR,$$
where 
$$\FdC(y) =  E_p  \int_{\setR} \Bsbl p(x|y) Q_C(x) dx - 2L_CE_q c   \Bsbr dx, \qquad y \in \setR,$$
and
$E_p$, $0< E_p \leq 1$, is the maximal fraction of time that agent $y$ can allocate to consuming content from the producers $y \in C_s = I_C$, and 
$$Q_C(x) = \int_{I_C} \bC(x|y) dy, \qquad  x \in \setR.$$
\end{lemma}

\begin{proof}
The result of this lemma follows directly from the definition of the utility rate for content consumption of an agent $y \in \setR$ given by
$$\UdC(y) = \alpha_{C}(y) \int_{\setR} \Big [ Q_C(x) p(x|y) - \beta_C(x) c \Big ] dx,$$
and the assumptions that
$$ 0 \leq \alpha_{C}(y) \leq E_p$$
and
$$\int_{\setR} \bC(x|y) dx = E_q > 0, \qquad y \in I_C.$$
In particular, under the assumption that
$$\int_{\setR} \bC(x|y) dx = E_q > 0, \qquad y \in I_C,$$
we have that
$$  \int_{\setR} \beta_C(x) = \int_{y \in I_C} \int_{\setR} \bC(x|y) dxdy = 2 L_C E_q.$$
\end{proof}

Using the result of this lemma, we first characterize in the next the properties of the function $\FdC(y)$ which we then use then to characterize the properties of the utility rates for content consumption $\UdC(y)$, and to prove  Proposition~\ref{prop:UdC}.

\newpage
\subsection{Properties of the function  $\FdC(y)$}
In this subsection, we characterize the properties of the function $\FdC(y)$, $y \in \setR$, for the special case where we have that
$$Q_C(x) = Q^*_C(x), \qquad x \in \setR,$$
and $Q^*_C(x)$ is as given in Proposition~\ref{prop:Q_C}, i.e. we assume that the content supply function $Q^*_C(x)$, $x \in \setR$, has the same properties as the optimal supply function for an interval community in a Nash equilibrium as given by Proposition~\ref{prop:nash}.

For this case, we have the following result.
\begin{lemma}\label{lemma:FdC}
Let $E_p$, $0<E_p\leq 1$, and  $E_q$, $0<E_q$, be given constants, and let $I_C$ be an interval in $\setR$.
For
$$ L_C = \frac{|I_C|}{2},$$
assume that
$$2 L_C < \min \{ b, L \}$$
where
$b$ is the constant of Assumption~\ref{ass:fg}.
Furthermore, assume for the function $Q^*_C(x)$, $x \in \setR$, that
\begin{enumerate}
\item $\QsC(x)$ is bounded on $\setR$. 
\item the support $supp(Q_C^*)$ of $Q^*_C(x)$  is given by
 $$supp(Q_C^*) = [mid(I_C) - L^*_C, mid(I_C) + L^*_C]$$
 where
 $$ 0 < L^*_C < L_C,$$
\item $Q^*_C(x)$ is non-negative, continuous, and symmetric with respect to $mid(I_C)$, on $(mid(I_C) - L^*_C, mid(I_C) + L^*_C)$. 
\end{enumerate}
Then  the function $F^{(d)}_C(y)$ given by
$$\FdC(y) =  E_p \int_{\setR} \Bsbl p(x|y) Q^*_C(x) dx - 2L_CE_q c   \Bsbr dx, \qquad y \in \setR,$$
has the properties that
\begin{enumerate}
\item[a)]  $\FdC(y)$ is symmetric with respect to $y_0 = mid(I_C)$ on $\setR$, and we have for $y=mid(I_C)$ that
$$\frac{d}{dy} \FdC(y) = 0.$$
\item[b)]  $\FdC(y)$ is strictly increasing on $(mid(I_C) - L,mid(I_C))$, and we have that
$$\frac{d}{dy} \FdC(y) > 0, \qquad y \in (mid(I_C) - L,mid(I_C)).$$
\item[c)]  $\FdC(y)$ is strictly decreasing on $(mid(I_C), mid(I_C) + L)$, and we have that
$$\frac{d}{dy} \FdC(y) < 0, \qquad y \in  (mid(I_C), mid(I_C) + L).$$
\end{enumerate}
\end{lemma}

\begin{proof}
Let
$$I^*_C =  [mid(I_C) - L^*_C, mid(I_C) + L^*_C].$$
Note that the assumptions on the function $Q^*_C(x)$, $x \in \setR$, made in the lemma imply that
$$ 0 < \int_{I^*_C} Q^*_C(x) dx. $$

Without loss of  generality, we assume for our proof that the function $Q^*_C(x)$ is non-negative, continuous, and symmetric with respect to $mid(I_C)$, on $I^*_C = [mid(I_C) - L^*_C, mid(I_C) + L^*_C]$

We first show that  $\FdC(y)$ is symmetric with respect to $y_0 = mid(I_C)$ on $\setR$. Note that we have that
$$ p(mid(I_C) - x|mid(I_C) -y) =  p(mid(I_C) + x|mid(I_C)  +y) = f(||x-y||), \qquad x,y \in \setR.$$
Combining this result with the assumption that $Q^*_C(x)$ is symmetric with respect to $mid(I_C)$ on $(mid(I_C) - L^*_C, mid(I_C) + L^*_C)$,  and letting
$$y_0 = mid(I_C),$$
we obtain 
\begin{eqnarray*}
\FdC(y_0 + y) &=&  E_p \int_{\setR} \Bsbl p(x|y_0+y) Q^*_C(x) dx - 2L_CE_q c \Bsbr dx \\
&=&  E_p \int_{-L^*_C}^{L^*_C} \Bsbl p(y_0 + x|y_0+y) Q^*_C(y_0+x) dx - 2L_CE_q c  \Bsbr dx   \\
&=&  E_p \int_{-L^*_C}^{L^*_C} \Bsbl p(y_0 - x|y_0-y) Q^*_C(y_0-x) dx - 2L_CE_q c  \Bsbr dx  \\
 &=&  E_p \int_{\setR} \Bsbl p(x|y_0-y) Q^*_C(x) dx - 2L_CE_q c  \Bsbr dx    \\
 &=& \FdC(y_0 - y), \qquad y \in [0,L].
\end{eqnarray*}
It then follows that  $\FdC(y)$ is symmetric with respect to $y_0 = mid(I_C)$ on $\setR$.

Next we prove that the function $\FdC(y)$ is strictly increasing on $(mid(I_C) - L,mid(I_C))$, and strictly decreasing on $(mid(I_C), mid(I_C) + L)$.
As $\FdC(y)$ is  symmetric with respect to $y_0 = mid(I_C)$ on $\setR$, it suffices to show that  $\FdC(y)$ is strictly increasing on $(mid(I_C)-L, mid(I_C))$, and we have that
$$\frac{d}{dy} \FdC(y) > 0, \qquad y \in  (mid(I_C)-L, mid(I_C)).$$

Therefore, we assume for the remainder of the proof that
$$ y < mid(I_C).$$

We have that 
$$\frac{d}{dy} \FdC(y) = \int_{I^*_C} Q^*_C(x) \frac{d}{dy} p(x|y) dx,$$
where $I^*_C$ is the support of $Q^*_C(x)$, $x \in \setR$. Recall that by assumption we have that
$$I^*_C = [mid(I_C) - L^*_C, mid(I_C) + L^*_C]$$
 where
 $$ 0 < L^*_C < L_C < \frac{L}{2}.$$
In the following, we consider three separate cases.

\newpage
The first case that we consider is where we have that
$$ y \in [mid(I_C) + L^*_C - L, mid(I_C) -  L^*_C).$$
As we have that
 $$ 0 < L^*_C < L_C < \frac{L}{2},$$
 we obtain 
 $$ mid(I_C) + L^*_C - L < mid(I_C) -  L^*_C.$$
For this case we have that
$$ \frac{d}{dy} p(x|y) > 0$$
as by  Assumption~\ref{ass:fg}  the function $f$ is strictly decreasing on $[0,L]$. Combining this result we the fact that
$$ \int_{I^*_C} Q^*_C(x) dx  > 0, $$
it follows that
$$ \frac{d}{dy}   \FdC(y) =   \int_{I^*_C} Q^*_C(x) \frac{d}{dy} p(x|y) dx < 0, \qquad y \in [mid(I_C) + L^*_C - L, mid(I_C) -  L^*_C).$$

\newpage
Next we consider the case where 
$$ y \in [mid(I_C) - L^*_C,mid(I_C)) = [y_0 - L^*_C,y_0).$$
For this case we obtain that
\begin{eqnarray*}
\frac{d}{dy} \FdC(y)
&=& \int_{I^*_C} Q^*_C(x) \frac{d}{dy} p(x|y) dx \\
&=& \int_{y_0-L^*_C}^{y} Q^*_C(x) f'(y-x) dx - \cdots \\
&& - \int_0^{y_0-y} Q^*_C(y_0+s) \Big (f'(y_0 -y +s)) + f'(y_0 - y - s) \Big ) ds - \cdots \\
&& -  \int_{-y}^{y_0+L^*_C} Q^*_C(x) f'(x-y) dx \\
&=& - \int_{0}^{y_0-y} Q^*_C(y_0+s) \Big (f'(y-y_0 -s)) + f'(y-y_0 + s) \Big ) ds + \cdots \\
&&  + \int_0^{y-y_0+L^*_C} Q^*_C(y-s) \Big ( f'(s) - f'(2(y_0-y)+s) \Big ) ds.
\end{eqnarray*}
By Assumption~\ref{ass:fg} the function $f$ is strictly decreasing on $[0,L]$, and it follows that
$$f'(y-y_0 +s)) + f'(y-y_0 - s) < 0, \qquad s \in [0,y - y_0],$$
and
$$ - \int_0^{y - y_0} Q^*_C(y_0+s) \Big (f'(y-y_0 +s)) + f'(y-y_0 - s) \Big ) ds > 0.$$
Next we show that
$$ \int_0^{y-y_0+L^*_C} Q^*_C(s-y) \Big ( f'(s) - f'(2(y_0-y)+s) \Big ) ds > 0.$$
To do this, we note that
$$ 2(y_0-y) + s  \leq 2L^*_C < 2 L_C, \qquad  s \in [0,y-y_0 + L^*_C]. $$
By  Assumption~\ref{ass:fg}  the function $f$ is strictly concave on $[0,b]$. As by construction we have that
$$ 2 L_C < b,$$
it follows that
$$f'(2(y_0-y)+s) < f'(s), \qquad s \in [0,y-y_0 + L^*_C]$$
and
$$ \int_0^{y-y_0+L^*_C} Q^*_C(y-s) \Big ( f'(s) - f'(2(y_0-y)+s) \Big ) ds > 0.$$

\newpage
Finally, we consider the case where
$$ y \in (mid(I_C) - L,mid(I_C) + L^*_C - L) = [y_0 - L,y_0 + L^*_C - L).$$
Let
$$\Dl_L = y - (y_0 - L).$$
Recall that
$$ mid(I_C) + L^*_C - L < mid(I_C) -  L^*_C,$$
as by assumption we have that
$$ 0 < L^*_C < L_C < \frac{L}{2}.$$
We then obtain that
\begin{eqnarray*}
\frac{d}{dy} \FdC(y) &=& \int_{I^*_C} Q^*_C(x) \frac{d}{dy} p(x|y) dx \\
&=& - \int_{y_0-L^*_C}^{y_0 - \Dl_L} Q^*_C(x) f'(x-y) dx - \cdots \\
&& - \int_{0}^{\Dl_L} Q^*_C(y_0 + s) \Bbl f'(y_0-y-s) + f'(y_0-y+s)  \Bbr  ds + \cdots\\
&& + \int_{y_0+\Dl_L}^{y_0 + L^*_C} Q^*_C(x) f'(x-y) dx \\
&=&  - \int_{0}^{L^*_C - \Dl_L} Q^*_C(y_0-L^*_C + s) \Bbl f'(L - L^*_C + \Dl_L +s) - f'(L - L^*_C + \Dl_L + s)   \Bbr ds- \cdots \\
&& - \int_{0}^{\Dl_L} Q^*_C(y_0 + s) \Bbl f'(y_0-y-s) + f'(y_0-y+s)  \Bbr  ds \\
&=& - \int_{0}^{\Dl_L} Q^*_C(y_0 + s) \Bbl f'(y_0-y-s) + f'(y_0-y+s)  \Bbr  ds.
\end{eqnarray*}
By Assumption~\ref{ass:fg} the function $f(\cdot)$ is strictly decreasing on $[0,L]$, and it follows that
$$ - \int_{0}^{\Dl_L} Q^*_C(y_0 + s) \Bbl f'(y_0-y-s) + f'(y_0-y+s)  \Bbr  ds > 0, \qquad y \in  (y_0 - L,y_0 + L^*_C - L).$$
It then follows that  we have that
$$\frac{d}{dy} \FdC(y) > 0, \qquad y \in (mid(I_C) - L,mid(I_C)).$$

\newpage
As by assumption the function $\QsC(x)$ is symmetric with respect to $mid(I_C)$,
and we have that
$$p(y_0 - s|y) = p(y_0 + s|y) = f(s), \qquad s \in [0,L],$$
we obtain that derivative $\frac{d}{dy} \FdC(y)$ at $y=y_0$ is equal to 
$$\int_0^{L^*_C} Q^*_C(y_0-s) \big [ -f'(s) + f'(s) \big ]ds  = 0.$$
The result of the lemma then follows.
\end{proof}

\newpage
\subsection{Properties of the Function $\UdC(y)$}
We next characterize the properties of the function
$$U_C^{(d)}(y) = \max \Bsl 0,\FdC(y)\Bsr, \qquad y \in \setR,$$
where
$$\FdC(y) =  E_p \int_{\setR} \Bsbl p(x|y) Q^*_C(x) dx - 2L_CE_q c \Bsbr dx, \qquad y \in \setR,$$
and $Q^*_C(x)$ is as given in Proposition~\ref{prop:Q_C}.

Compared with Lemma~\ref{lemma:FdC}, we add for this analysis the assumption that
$$  \FdC(y) = E_p \int_{\setR} \Bsbl p(x|y) Q^*_C(x) dx - 2L_CE_q c \Bsbr dx >0, \qquad y \in I_C.$$
The assumption 
$$ \UdC(y) > 0, \qquad y \in I_C,$$
reflects the result of Proposition~\ref{prop:nash} which states that under a Nash equilibrium all agents have strictly positive utility rates. We then have the following result.
\begin{lemma}\label{lemma:UdC}
Let $E_p$, $0<E_p\leq 1$, and  $E_q$, $0<E_q$, be given constants, and let $I_C$ be an interval in $\setR$.
For
$$ L_C = \frac{|I_C|}{2},$$
assume that
$$2 L_C < \min \{ b, L \}$$
where
$b$ is the constant of Assumption~\ref{ass:fg}.
Furthermore assume for the function $Q^*_C(x)$, $x \in \setR$, that
\begin{enumerate}
\item $\QsC(x)$, $x \in \setR$, is bounded on $\setR$. 
\item the support $supp(Q_C^*)$ of $Q^*_C(x)$  is given by
 $$supp(Q_C^*) = [mid(I_C) - L^*_C, mid(I_C) + L^*_C]$$
 where
 $$ 0 < L^*_C < L_C,$$
\item $Q^*_C(x)$ is non-negative, continuous, and symmetric with respect to $mid(I_C)$, on $(mid(I_C) - L^*_C, mid(I_C) + L^*_C)$. 
\end{enumerate}
Let 
$$U_C^{(d)}(y) = \max \Bsl 0,\FdC(y)\Bsr, \qquad y \in \setR,$$
where 
$$\FdC(y) =  E_p \left [ \int_{\setR} p(x|y) Q^*_C(x) dx - 2L_CE_q c   \right ], \qquad y \in \setR.$$
If we have that
$$\UdC(y) > 0, \qquad y \in I_C,$$
then the function $\UdC(y)$, $y \in \setR$, has the properties that
\begin{enumerate}
\item[a)]  $\UdC(y)$ is symmetric with respect to $y_0 = mid(I_C)$ on $\setR$, and we  have for $y = mid(I_C)$   that
$$\frac{d}{dy} \UdC(mid(I_C)) = 0.$$
\item[b)]  $\UdC(y)$ is strictly increasing on $[mid(I_C) - L_C,mid(I_C))$, and we have that
$$\frac{d}{dy} \UdC(y) > 0, \qquad y \in [mid(I_C) - L_C,mid(I_C)).$$
\item[c)]  $\UdC(y)$ is strictly decreasing on $(mid(I_C), mid(I_C) + L_C]$, and we have that
$$\frac{d}{dy} \UdC(y) < 0, \qquad y \in  (mid(I_C), mid(I_C) + L_C].$$
\item[d)]  $\UdC(y)$ is non-decreasing on $[mid(I_C) - L,mid(I_C))$.
\item[e)]  $\UdC(y)$ is non-increasing on $(mid(I_C), mid(I_C) + L)$.
\end{enumerate}
\end{lemma}

\begin{proof}
The result of the lemma follows directly from the result of Lemma~\ref{lemma:FdC} and the the fact that
$$U_C^{(d)}(y) = \max \Bsl 0,\FdC(y)\Bsr, \qquad y \in \setR.$$
\end{proof}

\newpage
\subsection{Proof of Proposition~\ref{prop:UdC}}
The result of Proposition~\ref{prop:UdC} follows from Lemma~\ref{lemma:UdC} as the setting of Proposition~\ref{prop:UdC} satisfies all the assumptions made in the statement of Lemma~\ref{lemma:UdC}.

In particular, Proposition~\ref{prop:Q_C} states the the content supply function $Q^*_C(x)$ satisfies the assumptions made on $Q^*_C(x)$ in Lemma~\ref{lemma:UdC}.
Moreover, Proposition~\ref{prop:nash} states that under a Nash equilibrium all agents receive a strictly positive utility, and we have that for a community $C = (I_C,I_C) \in \setC^*$ in a Nash equilibrium  $(\setC^*, \{\alpha^*_{\setC}(y)\}_{y \in \setR},  \{\beta^*_{\setC}(\cdot|y)\}_{y \in \setR})$ as given by Proposition~\ref{prop:nash} that
$$\UdC(y) > 0, \qquad y \in I_C.$$
Finally, by Proposition~\ref{prop:xs} we have for a community $C=(I_C,I_C) \in \setC^*$ under a Nash equilibrium  $(\setC^*, \{\alpha^*_{\setC}(y)\}_{y \in \setR},  \{\beta^*_{\setC}(\cdot|y)\}_{y \in \setR})$ as given by Proposition~\ref{prop:nash} we have that
$$ \int_{\setR} \beta^*_C(x|y)dx = E_q, \qquad y \in C_s = I_C,$$
and 
$$\asC(y) = E_p, \qquad y \in C_d = I_C.$$

\newpage
\section{Proof of Proposition~\ref{prop:UsC}}\label{app:UsC}
In this appendix we prove Proposition~\ref{prop:UsC} which characterize the utility rates for content production under a Nash Equilibrium $\commStrucNash$ as given by Proposition~\ref{prop:nash}.

To do that, we use the same approach as in the previous subsection where we characterized the utility rates for content consumption. That is,  we consider the case where we are given  a set $C_d = I_C$ of content consumers where $I_C \subset \setR$ is an interval in  $\setR$. In addition, we assume that we are given the content demand function $P_C(x)$, $x \in \setR$, for the content consumed by the agents $y \in C_d = I_C$, and then characterize then the maximal utility of content consumption that a agent $y \in \setR$ can obtain under the content demand function $P_C(x)$.

Our first result characterize the utility function under the additional assumption that the for the fraction of time $\aC(y)$ that agents in the set $C_s = I_C$ allocate for content consumption has the property that
$$\aC(y) = E_p > 0, \qquad y \in I_C,$$
For this case, we have the following result. 
\begin{lemma}\label{lemma:UsC_FsC}
Let   $E_p$, $0<E_q \leq 1$, be a given constant, and let $C_d = I_C$ be a set of content consumers where $I_C \subset \setR$ is an interval in  $\setR$.
Assume that
$$\aC(y)  = E_p > 0, \qquad y \in I_C,$$
where $\aC(y)$, $y \in C_d = I_C$, is the fraction of time an agent $y$ consumes content. 
Then the maximal utility rate for content producer that an agent $y \in \setR$ obtains from the consumers $y \in I_C$ is given by
$$\UsC(y) = \max \Bsl 0,\FsC(y) \Bsr, \qquad y \in \setR,$$
where
$$\FsC(y) = E_q \Bsbl q(x^*(y)|y) P_C(\xs(y)) - 2 L_C E_p c \Bsbr, \qquad y \in \setR,$$
and  $E_q$, $0 < E_q$, is the maximal total content production rate that agents $y \in \setR$ can allocated to the set of agents $C_d = I_C$, and
$$ \xs(y)= \arg \max_{x \in \setR} q(x|y) P_C(x).$$
\end{lemma}

\begin{proof}
The result of this lemma follows directly from the definition of the utility rate for content production  of an agent $y \in \setR$ given by
$$\UdC(y) = \int_{\setR} \bsC(x|y) \Big [ q(x|y) P_C(x) p(x|y) - \aC c \Big ] dx,$$
where $\bsC(\cdot|y)$ is the optimal content production rate given by
$$\bsC(\cdot|y) = \underset{\beta(\cdot|y): || \beta(\cdot|y)|| \leq \beta_C(y)}{\arg\max} \int_{\setR} \beta_C(x|y) \Bsbl q(x|y) P_C(x) - \alpha_C c \Bsbr dx,$$
and the assumptions that
$$ 0 \leq \bC(y) \leq E_q$$
and
$$\aC(y) = E_p > 0, \qquad y \in I_C.$$
In particular, under the assumption that
$$\aC(y) = E_p > 0, \qquad y \in I_C,$$
we obtain that
$$\aC = \int_{I_C} \aC(y) dy = 2 L_C E_p.$$
Furthermore, by Corollary~\ref{cor:optimal_value_production} we have that
$$ \beta^*_C(\cdot|y) = E_q \delta(\xs(y) - x), \qquad y \in I_C,$$
where
$$ \xs(y) = \arg \max_{x \in \setR} q(x|y)P_C(x),$$
is a solution to the optimization problem
$$\underset{\beta(\cdot|y): || \beta(\cdot|y)|| \leq E_q}{\max} \int_{\setR} \beta_C(x|y) \Bsbl q(x|y) P_C(x) - \alpha_C c \Bsbr dx.$$
\end{proof}

\newpage
\subsection{Properties of the Function $\FsC(y)$}
Next we characterize the properties of the function $\FsC(y)$, $y \in \setR$, for the special case where we have that the function $P_C(x)$ is as given in Proposition~\ref{prop:P_C}, i.e. we assume that the content demand  function $P_C(x)$, $x \in \setR$, has the same properties as the content demand function for an interval community in a Nash equilibrium as given by Proposition~\ref{prop:nash}.

For this case, we have the following two results.
\begin{lemma}\label{lemma:FsC1}
Let $E_p$, $0<E_p\leq 1$, and  $E_q$, $0<E_q$, be given constants, and let $I_C$ be an interval in $\setR$.
For
$$ L_C = \frac{|I_C|}{2},$$
assume that
$$L_C < \frac{L}{2}$$
Furthermore, assume for the function $P_C(x)$, $x \in \setR$, that
\begin{enumerate}
\item[(a)] $P_C(x)$ is positive and  twice continuously differentiable on $\setR$.
\item[(b)] $P_C(x)$ is symmetric with respect to $mid(I_C)$. 
\item[(c)] $P_C(x)$ is strictly increasing on the interval $( mid(I_C) - L, mid(I_C))$, and strictly decreasing on the interval $(mid(I_C), mid(I_C) + L)$.
\item[(d)] $P_C(x)$ is strictly concave in $x$ on the interval $[mid(I_C) - L_C, mid(I_C)+ L_C]$.
\end{enumerate}
Then  the function $\FsC(y)$ given by
$$\FsC(y) = E_q [q(x^*(y)|y) P_C(\xs(y)) -  2L_CE_p c ], \qquad y \in \setR,$$
where
$$ \xs(y)= \arg \max_{x \in \setR} q(x|y) P_C(x),$$
has the properties that
\begin{enumerate}
\item[a)]  $\FsC(y)$ is symmetric with respect to $y_0 = mid(I_C)$ on $[mid(I_C) - L_C,mid(I_C) + L_C]$, and we have that
$$\frac{d}{dy} \FsC(mid(I_C)) = 0.$$
\item[b)]  $\FsC(y)$ is strictly increasing on $[mid(I_C) - L_C,mid(I_C))$, and we have that
$$\frac{d}{dy} \FsC(y) > 0, \qquad y \in [mid(I_C) - L_C,mid(I_C)).$$
\item[c)]  $\FsC(y)$ is strictly decreasing on $(mid(I_C), mid(I_C) + L_C]$, and we have that
$$\frac{d}{dy} \FsC(y) < 0, \qquad y \in  (mid(I_C), mid(I_C) + L_C].$$
\end{enumerate}
\end{lemma}

\begin{proof}
Let 
$$y_0 = mid(I_C).$$
We first show that $\FsC(y)$ is symmetric with respect to $y_0 = y_0$ on $[y_0 - L_C,y_0 + L_C]$. For this we rewrite $\FsC(y)$ as
$$\FsC(y) =   E_q \Bsbl g(\Ds(y)) P_C(y - \Ds(y)) -  2L_CE_p c \Bsbr, \qquad y \in [y_0, y_0 + L_C],$$
and
$$\FsC(y) =   E_q \Bsbl g(\Ds(y)) P_C(y + \Ds(y)) -  2L_CE_p c \Bsbr, \qquad y \in [y_0 - L_C, y_0],$$
where
$$\Ds(y) = || y - \xs(y)||, \qquad y \in [y_0 - L_C,y_0 + L_C].$$
By the same argument as given to prove Lemma~\ref{lemma:Ds}, we have that the function $\Ds(y)$ is symmetric with respect to $y_0$ on $[y_0 - L_C,y_0 + L_C]$. Furthermore, by assumption we have that the function $P_C(x)$ is symmetric with respect to $y_0$ on $\setR$.
It then follows that for $s \in [0,L_C]$ we have 
\begin{eqnarray*}
\FsC(y_0+s) &=&   E_q [g(\Ds(y_0+s)) P_C(y_0+s - \Ds(y_0+s)) -  2L_CE_p c ] \\
&=&  E_q [g(\Ds(y_0-s)) P_C(y_0+s - \Ds(y_0-s)) -  2L_CE_p c ] \\
&=&  E_q [g(\Ds(y_0-s)) P_C(y_0 - s + \Ds(y_0-s)) -  2L_CE_p c ] \\
&=& \FsC(y_0-s).
\end{eqnarray*}
It then follows that the function $\FsC(y)$ symmetric with respect to $y = y_0$ on  $[y_0 - L_C,y_0 + L_C]$.

Next we show that  $\FsC(y)$ is strictly increasing on $[y_0 - L_C,y_0)$,  and strictly decreasing on $(y_0, y_0 + L_C]$. 
As the function $\FsC(y)$ is symmetric with respect to  $y_0$ on  $[y_0 - L_C,y_0 + L_C]$,  it suffices to prove the result for
$$ y \in (y_0,  y_0 + L_C],$$
i.e. to show that
$$\frac{d}{dy} \FsC(y) < 0, \qquad y \in (y_0, y_0+ L_C].$$
We then obtain that
$$ \frac{d}{dy} \FsC(y) = \frac{d}{d y}  q(\xs(y)|y)   P_C(\xs(y)) = \frac{d}{dy}  g(\Ds(y)) P_C(y - \Ds(y)), \qquad y \in  (y_0,y_0+L_C].$$
As by assumption the function $P_C(x)$ is strictly concave on $[y_0,y_0+L_C]$, as well as  is strictly increasing on the interval $( y_0 - L, y_0)$, and strictly decreasing on the interval $(y_0, y_0 + L)$, it follows by the same argument as given to prove Lemma~\ref{lemma:xs} that the optimization problem
$$ \xs(y)= \arg \max_{x \in \setR} q(x|y) P_C(x), \qquad  y \in  (y_0,y_0+L_C],$$
has a unique solution. Furthermore, by the same argument as given to prove Lemma~\ref{lemma:Ds}, we obtain that the function $\Ds(y)$ is strictly increasing on $  [y_0,y_0+L_C]$.

Let
$$\Dsy(y) = \frac{d}{dy} \Ds(y), \qquad y \in  (y_0,y_0+L_C],$$
and  let $P'_C(x)$ be the derivative of $P_C(x)$ with respect to $x$. By the same argument as given in the proof of Lemma~\ref{lemma:Ds}, we have that the derivative $\Dsy(y)$ is well-defined on $(y_0,y_0+L_C]$. Furthermore, by assumption the derivative $P'_C(x)$ is well defined on $\setR$. 

Using these definitions, we obtain that
\begin{eqnarray*}
 \frac{d}{d y}  g(\Ds(y)) P_C(y - \Ds(y))
 &=&  g'(\Ds(y)) P_C(y - \Ds(y))\Dsy(y) - \cdots \\
 && -  g(\Ds(y)) P'_C(y - \Ds(y))\Dsy(y) + \cdots \\
 && + g(\Ds(y)) P'_C(y - \Ds(y)) \\
 &=&  g'(\Ds(y)) P_C(y - \Ds(y))\Dsy(y) +  \cdots\\
 && + g(\Ds(y))  P'_C(y - \Ds(y)) \Big [ 1 -  \Dsy(y) \Big ].
 \end{eqnarray*}
By Assumption~\ref{ass:fg} we have that
$$ g'(\Ds(y)) < 0,  \qquad y \in  (y_0,y_0+L_C] \cap supp(g).$$
By the same argument as given to prove Lemma~\ref{lemma:Ds}, we obtain that the function $\Ds(y)$ is strictly increasing on $  (y_0,y_0+L_C]$ and we have that
$$\Dsy(y) > 0, \qquad y \in  (y_0,y_0+L_C].$$
and 
$$ |\Dsy(y)| < 1, \qquad  \qquad y \in  (y_0,y_0+L_C].$$
Furthermore, by assumption we have that
$$ P'_C(y) < 0, \qquad  y \in  (y_0,y_0+L_C].$$
Combining these results, it follows that
$$  1 -  \Dsy(y) > 0, \qquad y \in  (y_0,y_0+L_C],$$
and
\begin{eqnarray*}
 \frac{d}{d y}  g(\Ds(y)) P_C(y - \Ds(y))
 &=&  g'(\Ds(y)) P_C(y - \Ds(y))\Dsy(y) + \cdots \\
 && +  g(\Ds(y))  P'_C(y - \Ds(y)) \Big [ 1 -  \Dsy(y) \Big ] \\
 &<& 0, \qquad y \in  (y_0,y_0+L_C].
 \end{eqnarray*}
Finally, by Assumption~\ref{ass:fg} we have that
$$g'(0) = 0.$$
Furthermore, as by assumption we have the $P_C(x)$ is symmetric with respect to $mid(I_C)$ and $P_C(x)$ is strictly concave on $[mid(I_C) - L_C, mid(I_C)+L_C]$, it follows that
$$P'_C(mid(I_C)) = 0,$$
and by the same argument as given to prove Lemma~\ref{lemma:single_content_production} we have that
$$\xs(y_0) = y_0$$
and
$$\Ds(y_0) = 0.$$
Combining these results, it follows from the above expression of the derivative of $\FsC(y)$ that for $y = y_0$ we have that
$$\frac{d}{dy} \FsC(y) = 0.$$
The result of the lemma then follows.
\end{proof}

\newpage
\begin{lemma}\label{lemma:FsC2}
Let $E_p$, $0<E_p\leq 1$, and  $E_q$, $0<E_q$, be given constants, and let $I_C$ be an interval in $\setR$.
For
$$ L_C = \frac{|I_C|}{2},$$
assume that
$$L_C < \frac{L}{2}$$
Furthermore, assume for the function $P_C(x)$, $x \in \setR$, that
\begin{enumerate}
\item[(a)] $P_C(x)$ is positive and  twice continuously differentiable on $\setR$.
\item[(b)] $P_C(x)$ is symmetric with respect to $mid(I_C)$. 
\item[(c)] $P_C(x)$ is strictly increasing on the interval $( mid(I_C) - L, mid(I_C))$, and strictly decreasing on the interval $(mid(I_C), mid(I_C) + L)$.
\end{enumerate}
Then  the function $\FsC(y)$ given by
$$\FsC(y) = E_q \Bsbl q(x^*(y)|y) P_C(\xs(y)) -  2L_CE_p c \Bsbr, \qquad y \in \setR,$$
where
$$ \xs(y)= \arg \max_{x \in \setR} q(x|y) P_C(x),$$
has the properties that
\begin{enumerate}
\item[a)]  $\FsC(y)$ is non-decreasing on $[mid(I_C) - L,mid(I_C) - L_C]$.
\item[b)]  $\FsC(y)$ is non-increasing on $[mid(I_C)+L_C, mid(I_C) + L)$.
\end{enumerate}
\end{lemma}

\begin{proof}
We prove the result for $ y \in  [mid(I_C)+L_C, mid(I_C) + L)$, i.e. to prove that  $\FsC(y)$ is non-increasing on $[mid(I_C)+L_C, mid(I_C) + L)$.
The result that $\FsC(y)$ is non-decreasing on $[mid(I_C) - L,mid(I_C) - L_C]$ can be obtained by the same argument.

Consider two agents $y,y'$ such that
$$y,y' \geq y_0+L_C$$
and
$$ y' > y,$$
and show that for 
$$ \xs(y) = \arg \max_{x \in \setR} q(x|y) P_C(x)$$
and
$$  \xs(y') = \arg \max_{x \in \setR} q(x|y') P_C(x)  ,$$
we have that
$$q(x^*(y)|y) P_C(x^*(y)) >  q(\xs(y')|y')   P_C(\xs(y')).$$
We prove this result by contradiction as follows. 
Suppose that the result is not true and we have that
$$q(x^*(y)|y) P_C(x^*(y)) \leq  q(\xs(y')|y')   P_C(\xs(y'));$$
in this case we show that there exists a value $\hat x_y$ such that
$$ q(\hat x_y|y) P_C(\hat x_y) >  q(\xs(y')|y')   P_C(\xs(y'))$$
which contradicts our assertion that
$$ \xs(y) = \arg \max_{x \in \setR} q(x|y) P_C(x).$$
In particular, let $\hat x_y$ be given by
$$ \hat x_y = y -  (y' - \xs(y')).$$
We then have that
$$  q(\xs(y')|y')   P_C(\xs(y')) < q(\hat x_y|y) P_C(\hat x_y).$$
To see that this is true note that
$$  q(\xs(y')|y') =  q(\hat x_y|y) = g( | y' - \xs(y') |)$$
and
$$  P_C(\xs(y')) <  P_C(\hat x_y),$$
as by assumption the function $P_C(x)$ is strictly decreasing for $x \geq mid(I_C)$.

The result of the lemma then follows.
\end{proof}

\newpage
We next characterize the properties of the function
$$\UsC(y) = \max \Bsl 0,\FsC(y)\Bsr, \qquad y \in \setR,$$
where
$$\FsC(y) = E_q [q(x^*(y)|y) P_C(x) - E_p c ], \qquad y \in \setR,$$
where
$$ \xs(y)= \arg \max_{x \in \setR} q(x|y) P_C(x).$$
Similar to the  analysis in the previous subsection, we add for this analysis the assumption that
$$  \UsC(y) >0, \qquad y \in I_C.$$

We have the following result.
\begin{lemma}\label{lemma:UsC}
Let $E_p$, $0<E_p\leq 1$, and  $E_q$, $0<E_q$, be given constants, and let $I_C$ be an interval in $\setR$.
For
$$ L_C = \frac{|I_C|}{2},$$
assume that
$$L_C < \frac{L}{2}$$
Furthermore, assume for the function $P_C(x)$, $x \in \setR$, that
\begin{enumerate}
\item[(a)] $P_C(x)$ is positive and  twice continuously differentiable on $\setR$.
\item[(b)] $P_C(x)$ is symmetric with respect to $mid(I_C)$. 
\item[(c)] $P_C(x)$ is strictly increasing on the interval $( mid(I_C) - L, mid(I_C))$, and strictly decreasing on the interval $(mid(I_C), mid(I_C) + L)$.
\item[(d)] $P_C(x)$ is strictly concave in $x$ on the interval $[mid(I_C) - L_C,mid(I_C) + L_C]$.
\end{enumerate}
Let
$$\UsC(y) = \max \Bsl 0,\FsC(y)\Bsr, \qquad y \in \setR,$$
where
$$\FsC(y) = E_q [q(x^*(y)|y) P_C(\xs(y)) - E_p c ], \qquad y \in \setR,$$
and
$$ \xs(y)= \arg \max_{x \in \setR} q(x|y) P_C(x).$$
If we have that
$$\UsC(y) > 0, \qquad y \in I_C,$$
then the function $\UsC(y)$, $y \in \setR$, has the properties that
\begin{enumerate}
\item[a)]  $\UsC(y)$ is symmetric with respect to $y_0 = mid(I_C)$ on $[mid(I_C) - L_C,mid(I_C) + L_C]$, and we have that
$$\frac{d}{dy} \UsC(mid(I_C)) = 0.$$
\item[b)]  $\UsC(y)$ is strictly increasing on $[mid(I_C) - L_C,mid(I_C))$, and we have that
$$\frac{d}{dy} \UsC(y) > 0, \qquad y \in [mid(I_C) - L_C,mid(I_C)).$$
\item[c)]  $\UsC(y)$ is strictly decreasing on $(mid(I_C), mid(I_C) + L_C]$, and we have that
$$\frac{d}{dy} \UsC(y) < 0, \qquad y \in  (mid(I_C), mid(I_C) + L_C].$$
\item[b)]  $\UsC(y)$ is non-decreasing on $[mid(I_C) - L,mid(I_C) - L_C]$.
\item[c)]  $\UsC(y)$ is non-increasing on $[mid(I_C), mid(I_C) + L)$.
\end{enumerate}
\end{lemma}

\begin{proof}
The result of the lemma follows directly from the result of Lemma~\ref{lemma:FsC1}~and~\ref{lemma:FsC2},
combined with the assumption that 
$$\UsC(y) > 0, \qquad y \in I_C.$$
\end{proof}

\newpage
\subsection{Proof of Proposition~\ref{prop:UsC}}
The result of Proposition~\ref{prop:UsC} follows from Lemma~\ref{lemma:UsC} as the setting of Proposition~\ref{prop:UsC} satisfies all the assumptions made in the statement of Lemma~\ref{lemma:UsC}.

In particular, Proposition~\ref{prop:P_C} states the the content demand function $P_C(x)$ satisfies the assumptions made on $P_C(x)$ in Lemma~\ref{lemma:UsC}.
Moreover, Proposition~\ref{prop:nash} states that under a Nash equilibrium all agents receive a strictly positive utility, and we have that for a community $C = (I_C,I_C) \in \setC^*$ in a Nash equilibrium  $(\setC^*, \{\alpha^*_{\setC}(y)\}_{y \in \setR},  \{\beta^*_{\setC}(\cdot|y)\}_{y \in \setR})$ as given by Proposition~\ref{prop:nash} that
$$\UsC(y) > 0, \qquad y \in I_C.$$
Finally, by Proposition~\ref{prop:xs} we have for a community $C=(I_C,I_C) \in \setC^*$ under a Nash equilibrium  $(\setC^*, \{\alpha^*_{\setC}(y)\}_{y \in \setR},  \{\beta^*_{\setC}(\cdot|y)\}_{y \in \setR})$ as given by Proposition~\ref{prop:nash}
 we have that
$$ \int_{\setR} \beta^*_C(x|y)dx = E_q, \qquad y \in C_s = I_C,$$
and 
$$\asC(y) = E_p, \qquad y \in C_d = I_C.$$

\newpage

\section{Proof of Proposition~\ref{prop:nash}}\label{app:nash}
In this appendix we prove Proposition~\ref{prop:nash}. To do this, we first derive three additional results. First we characterize the optimal community selection for a given agent, and show that it is optimal for agents to consume, or produce, content in a single community. Next, we provide sufficient conditions for an interval community to be a feasible community, i.e. a community where all agents obtain strictly positive utility rates.

\newpage
\subsection{Optimal Community Selection for Content Consumers and Producers}
In this subsection we characterize the optimal allocation for content consumption an and production by an agent $y \in \setR$ in a given community structure.

First we characterize the optimal allocation for content consumption by a given agent $y \in \setR$. 
More precisely, for a given community structure $(\setC, \{\alpha_{\setC}(y)\}_{y \in \setR},  \{\beta_{\setC}(\cdot|y)\}_{y \in \setR})$, as well as a given agent $y \in \setR$, we characterize the optimal allocation for content consumption
$$\as_{\setC}(y) = \{ \asC(y) \}_{C \in \setC}$$
given by
$$\as_{\setC}(y) = \alpha^*_{\setC} = 
\underset{\alpha_{\setC}(y): || \alpha_{\setC}(y)  || \leq E_p}{\arg\max} \sum_{C \in \setC}  \alpha_{C}(y) \int_{\setR} \Bsbl Q_C(x) p(x|y) - \beta_C(x) c \Bsbr dx,$$
where $E_p$ is the maximal fraction of time that agent can allocate to content consumption. 

We have the following result.
\begin{lemma}\label{lemma:single_community_consumption}
Let $(\setC, \{\alpha_{\setC}(y)\}_{y \in \setR},  \{\beta_{\setC}(\cdot|y)\}_{y \in \setR})$ be a given community structure. Assume that  for a given agent $y \in \setR$ we have that
$$ \max_{C \in \setC} \int_{\setR} \Bsbl p(x|y)Q_C(x) - \beta_C(x) c \Bsbr dx > 0,$$
where for the community $C = (C_d,C_s) \in \setC$ we have 
$$Q_C(x) = \int_{y \in C_s}  \beta_C(x|y) q(x|y) dy$$
and
$$\bC(x) =  \int_{y \in C_s}  \beta_C(x|y) dy.$$
Then an optimal content consumption rate allocation for this agent $y$ is given by
$$ \alpha^*_C(y) = \left \{
\begin{array}{rl}
E_p, & C = C^*(y) \\
0, & \mbox{otherwise,}
\end{array} \right .$$
where
$$C^*(y) = \arg \max_{C \in \setC} \int_{\setR} \Bsbl p(x|y)Q_C(x) -  \beta_C(x) c\Bsbr dx,$$
and $E_p$, $0 < E_p \leq 1$, is the maximal fraction of time that agent can allocate to content consumption, 
i.e. we have that
$$\alpha^*_{\setC}(y) = \\
\underset{\alpha_{\setC}(y): || \alpha_{\setC}(y)  || \leq E_p}{\arg\max} \sum_{C \in \setC}  \alpha_{C}(y) \int_{\setR} \Bsbl Q_C(x) p(x|y) - \beta_C(x) c \Bsbr dx.$$
\end{lemma}
Lemma~\ref{lemma:single_community_consumption} states that is optimal for an agent $ y \in \setR$ to allocate all its content consumption to a single community in a community structure$(\setC, \{\alpha_{\setC}(y)\}_{y \in \setR},  \{\beta_{\setC}(\cdot|y)\}_{y \in \setR})$, namely the community $C^*(y)$ given by 
$$C^*(y) = \arg \max_{C \in \setC} \int_{\setR} \Bsbl p(x|y)Q_C(x) -  \beta_C(x) c\Bsbr dx.$$

\begin{proof}
Recall the definition of the utility rate for content consumption given by 
$$ U^{(d)}_C(y) = \alpha_{C}(y) \int_{\setR} \Bsbl [ Q_C(x) p(x|y) - \beta_C(x) c \Bsbr dx.$$
The result of the lemma follows then immediately from the fact that the utility rate for content consumption is linear in $\alpha_{C}(y)$.
\end{proof}

\newpage
Next we characterize the optimal allocation for content production of a given agent $y \in \setR$. More precisely, for a given community structure $(\setC, \{\alpha_{\setC}(y)\}_{y \in \setR},  \{\beta_{\setC}(\cdot|y)\}_{y \in \setR})$, and a given agent $y \in \setR$, we characterize the optimal allocation for content production 
$$\bs_{\setC}(y) = \{ \bsC(\cdot |y) \}_{C \in \setC}$$
given by
$$\bs_{\setC}(\cdot|y) = 
\underset{\beta_{\setC}(\cdot|y): || \beta_{\setC}(\cdot|y)  || \leq E_q}{\arg\max} \sum_{C \in \setC} \int_{\setR} \beta_{C}(x|y) [ q(x|y) P_C(x) - \alpha_C c] dx,$$
where $E_q$ is the maximal rate that agent can allocate to content production.

The next lemma shows that it is optimal for agent $y$ to allocate all its content production rate in a given community structure  $(\setC, \{\alpha_{\setC}(y)\}_{y \in \setR},  \{\beta_{\setC}(\cdot|y)\}_{y \in \setR})$ to a single community $C \in \setC$. 

\begin{lemma}\label{lemma:single_community_production}
Let $(\setC, \{\alpha_{\setC}(y)\}_{y \in \setR},  \{\beta_{\setC}(\cdot|y)\}_{y \in \setR})$ be a given community structure. Assume that for a given agent  $y \in \setR$ we have that
$$ \max_{C \in \setC} \Big ( q(\xsC(y)|y)P_C(\xsC(y)) - \alpha_C c \Big )  > 0,$$
where
$$\xsC(y) = \arg \max_{x \in \setR} q(x|y) P_C(x),$$
$$P_{C}(x) = \int_{y \in C_d}  \alpha_C(y) p(x|y) dy,$$
and
$$\aC = \int_{y \in I_C} \aC(y) dy.$$
Then the optimal content production rate allocation for this agent $y$ is given by
$$ \beta^*_C(x|y) = \left \{
\begin{array}{rl}
E_q \delta(\xsC(y) -x) , & C = C^*(y) \\
0, & \mbox{otherwise,}
\end{array} \right .$$
where
$$C^*(y) = \arg \max_{C \in \setC} q(x^*_{C,y}|y)P_C(x^*_{C,y})$$
and $E_q$ is the maximal rate that agent can allocate to content production,
i.e. we have that
$$\beta^*_{\setC}(\cdot|y) = \\
\underset{\beta_{\setC}(\cdot|y): || \beta_{\setC}(\cdot|y)  || \leq E_q}{\arg\max} \sum_{C \in \setC} \int_{\setR} \beta_{C}(x|y) \Bsbl q(x|y) P_C(x) - \alpha_C c \Bsbr dx.$$
\end{lemma}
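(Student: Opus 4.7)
The plan is to reduce the multi-community allocation problem to (i) an inner, per-community optimization that has already been solved, and (ii) an outer, linear allocation of the total budget $E_q$ across communities. Concretely, for each $C \in \setC$, write the per-community production budget as
$$b_C = \|\beta_C(\cdot|y)\| = \int_{x \in \setR} \beta_C(x|y)\, dx,$$
so that the overall budget constraint becomes $\sum_{C \in \setC} b_C \leq E_q$, while the total utility rate decomposes as
$$\sum_{C \in \setC} \int_{x \in \setR} \beta_C(x|y)\,[\,q(x|y)P_C(x) - \alpha_C c\,]\, dx.$$

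First I would hold the per-community budgets $\{b_C\}$ fixed and optimize the within-community allocation. By Lemma~\ref{lemma:optimal_value_production} applied to each community $C$ separately (treating $b_C$ as the local production cap), the optimal choice is
$$\beta^*_C(x|y) = b_C\, \delta(x - x^*_{C,y}), \qquad x^*_{C,y} = \arg\max_{x \in \setR} q(x|y)P_C(x),$$
which yields per-community contribution $b_C\,[\,q(x^*_{C,y}|y)P_C(x^*_{C,y}) - \alpha_C c\,]$ whenever this bracket is non-negative, and $b_C = 0$ is optimal otherwise.

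Next, with the inner optimization resolved, the outer problem reduces to
$$\max_{\{b_C\} \geq 0,\; \sum_C b_C \leq E_q} \sum_{C \in \setC} b_C \cdot w_C, \qquad w_C = q(x^*_{C,y}|y)P_C(x^*_{C,y}) - \alpha_C c,$$
which is a linear program on the simplex $\{\sum b_C \leq E_q, b_C \geq 0\}$. Its solution concentrates all mass on a community achieving $\max_C w_C$, i.e. on $C^*_y$. The hypothesis $\max_C w_C > 0$ ensures that saturating the constraint with $b_{C^*_y} = E_q$ strictly dominates any allocation that leaves budget unused or spreads it onto communities with smaller $w_C$.

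The only mildly delicate point is the Dirac representation: since $\beta_C(\cdot|y)$ is a rate (a measure) rather than a bounded density, one must interpret $\delta(x - x^*_{C,y})$ as an atom of mass $E_q$ at $x^*_{C,y}$, and verify that this limit of narrow admissible densities is optimal — but this is precisely the content of Lemma~\ref{lemma:optimal_value_production} and of Proposition~\ref{prop:optimal_production}, so no new argument is needed. Everything else is either a linearity argument or a direct invocation of the already-established single-community optimum, so I do not expect a genuine obstacle here; the lemma is essentially a two-level reduction glued together by linearity of $U^{(s)}_C(y)$ in $\beta_C(\cdot|y)$ and by the fact that the objective decomposes as a sum of independent per-community terms.
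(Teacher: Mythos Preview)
Your proposal is correct and follows essentially the same approach as the paper: first reduce to the per-community optimum via the single-community results (the paper invokes Proposition~\ref{prop:optimal_production}; you invoke Lemma~\ref{lemma:optimal_value_production} and Proposition~\ref{prop:optimal_production}), and then exploit linearity in $\beta_C(\cdot|y)$ to push all mass onto the best community $C^*_y$. The only cosmetic difference is that you frame the outer step as a linear program on the simplex, whereas the paper writes out the difference $E_q\max_x[\cdots]_{C^*_y} - \sum_C \beta_C(y)\max_x[\cdots]_C$ explicitly and checks it is non-negative term by term; the content is identical.
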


\begin{proof}
Note that if for a community $C \in \setC$ we have that
$$ \max_{x \in \setR} \Big ( q(x|y)P_C(x) - \alpha_C c \Big ) \leq 0,$$
then the optimal utility that agent $y$ can achieve in this community is equal to $0$, and hence it is optimal for agent $y$ to set $\aC(y) = 0$ for this community. As a result, to prove the lemma it suffices to consider communities $ C \in \setC$ such that
$$ \max_{x \in \setR} \Big ( q(x|y)P_C(x) - \alpha_C c \Big ) > 0.$$
Let $\setCp$ be this set, i.e. we have that
$$\setCp = \Bsl C \in \setC |  \max_{x \in \setR} \Bbl q(x|y)P_C(x) - \aC c  \Bbr > 0 \Bsr.$$

By Proposition~\ref{prop:optimal_production}, we have that if agent $y$ allocates the content production rate
$$\beta_C(y) = \int_{\setR} \beta_C(x|y)  dx$$ 
to community $C$, and we have that
$$ \max_{x \in \setR} \Big ( q(x|y)P_C(x) - \alpha_C c \Big ) > 0,$$
then the optimal utility rate for content production for agent $y$ in community $C$ is given by
$$\beta_C(y)\max_{x \in \setR} \Big ( q(x|y) P_C(x) - \alpha_{C} c \Big ).$$
Let $\beta^*_{\setC}(\cdot|y)$ be the rate allocation for content production as given in the lemma, and let $\beta_{\setC}(\cdot|y)$ be another rate allocation function such that
$$\sum_{C \in \setCp} \beta_C(y)  \leq E_q$$
and
$$\bC(y) = 0, \qquad C \notin \setCp,$$
where
$$\beta_C(y) = \int_{\setR} \beta_C(x|y)  dx.$$ 
To prove the lemma, we have to show that the utility rate  for content production of agent $y$ is not higher under the allocation $\beta_{\setC}(\cdot|y)$ compared with allocation  $\beta^*_{\setC}(\cdot|y)$. 
Note that the optimal utility rate for agent $y$ under $\beta^*(\cdot|y)$ is given by
$$ E_q \max_{x \in \setR} \Big ( q(x|y) P_{C^*(y)}(x) - \alpha_{C^*(y)} c \Big)$$
where
$$C^*(y) = \arg \max_{x \in \setR} q(x|y) P_C(x),$$
and the  utility rate for agent $y$ under $\beta_{\setC}(\cdot|y)$ is given by
$$ \sum_{C \in \setC} \beta_C(y) \max_{x \in \setR}  \Big ( q(x|y) P_C(x) - \alpha_C c \Big ).$$
Therefore in order to show that the utility rate  for content production of agent $y$ is not higher under the allocation $\beta_{\setC}(\cdot|y)$ compared with allocation  $\beta^*_{\setC}(\cdot|y)$, we have to show that
$$ E_q \max_{x \in \setR} \Big ( q(x|y) P_{C^*(y)}(x) - \alpha_{C^*(y)} c \Big) -
\sum_{C \in \setC} \beta_C(y)\max_{x \in \setR}  \Big ( q(x|y) P_C(x) - \alpha_C c \Big ) \geq 0.$$
Note that for this expression we have that
\begin{eqnarray*}
  E_q\max_{x \in \setR} \Big ( q(x|y) P_{C^*(y)}(x) - \alpha_{C^*(y)} c \Big )
-   \sum_{C \in \setC} \beta_C(y)\max_{x \in \setR} \Big ( q(x|y) P_C(x) - \alpha_C c \Big ) = \\
\left ( E_q - \sum_{C \in \setC} \beta_C(y) \right ) \max_{x \in \setR} \Big ( q(x|y) P_{C^*(y)}(x) - \alpha_{C^*(y)} c \Big ) + \dots \\
+ \sum_{C \in \setC} \beta_C(y) \Big [ \max_{x \in \setR} \Big ( q(x|y) P_{C^*(y)}(x) - \alpha_{C^*(y)} c \Big ) -  \max_{x \in \setR} \Big ( q(x|y) P_C(x) - \alpha_C c \Big )  \Big ].
\end{eqnarray*}
By assumption we have that
$$\max_{C \in \setC} \Big ( q(\xsC(y)|y)P_C(\xsC(y)) - \alpha_C c \Big ) > 0,$$
and by construction we have that
$$\sum_{C \in \setC} \beta_C(y) \leq E_q.$$
This implies that
$$\left ( E_q - \sum_{C \in \setC} \beta_C(y) \right ) \max_{x \in \setR} \Big ( q(x|y) P_{C^*(y)}(x) - \alpha_{C^*(y)} c \Big ) \geq 0.$$
Furthermore, by assumption we have that
$$C^*(y) = \arg \max_{C \in \setC} q(\xsC(y)|y)P_C(\xsC(y)),$$
and we obtain that
$$\sum_{C \in \setC} \beta_C(y) \Big [ \max_{x \in \setR} \Big ( q(x|y) P_{C^*(y)}(x) - \alpha_{C^*(y)} c \Big ) -  \max_{x \in \setR} \Big ( q(x|y) P_C(x) - \alpha_C c \Big )  \Big ] \geq 0.$$
Combining these two results, we obtain that
$$ E_q\max_{x \in \setR} \Big ( q(x|y) P_{C^*(y)}(x) - \alpha_{C^*(y)} c \Big )
-   \sum_{C \in \setC} \beta_C(y)\max_{x \in \setR} \Big ( q(x|y) P_C(x) - \alpha_C c \Big )\geq 0.$$
and the result of the lemma follows. 
\end{proof}

\newpage
\subsection{Existence of Feasible Communities}
Consider a Nash equilibrium  $(\setC^*, \{\alpha^*_{\setC}(y)\}_{y \in \setR},  \{\beta^*_{\setC}(\cdot|y)\}_{y \in \setR})$ as given by Proposition~\ref{prop:nash}. We then have that each interval community $C=(I_C,I_C) \in \setC^*$ is a feasible community as defined in Definition~\ref{def:feasible}, i.e. we have that
$$\UdC(u) > 0, \qquad y \in I_C$$
and
$$\UsC(u) > 0, \qquad y \in I_C,$$
and the utility rates of all agents in the community are strictly positive. Therefore, in order to prove  Proposition~\ref{prop:nash}, we have to show that feasible interval communities exist.  The next result provides sufficient conditions for an interval community $C=(I_C,I_C)$ to be a feasible community. 
\begin{lemma}\label{lemma:feasible_community}
Let  $C=(C_d,C_s)$ be an interval community, i.e. we have that  $ C_d = C_s = I_C,$ where $I_C \subset \setR$ is an interval in $\setR$, 
and let 
$$l_0 = \sup \{ l \in [0,L] | f(l) g(l) - c > 0 \}.$$
If we have that  
$$0 < 2 L_C < l_0,$$ 
then for all rate allocations $\{\alpha_C(y)\}_{y \in C_s}$ and $\{\beta_C(\cdot|y)\}_{y \in C_s}$ such that
$$\alpha_C(y) > 0, \qquad y \in C_d$$
and
$$ \int_{\setR} \beta_C(x|y) dx > 0, \qquad y \in C_s,$$
the community $C$ is a feasible community under  $\{\alpha_C(y)\}_{y \in C_s}$ and $\{\beta_C(\cdot|y)\}_{y \in C_d}$. 
\end{lemma}

\begin{proof}
Without loss of generality, we can assume that the interval $I_C$ is given by 
$$I_C= [-L_C, L_C].$$ 
Recall that the utility rate for content producers $y \in C_s = I_C$ is given by
$$U^{(s)}_C(y)  = \int_{\setR} \beta_{C}(x|y) \int_{z\in C_d = I_C} \alpha_C(z) \Bsbl q(x|y) p(x|z)- c \Bsbr dz dx.$$
By assumption we have that
$$g(2L_C) > 0,$$
and it follows 
$$I_C \subseteq supp(q(\cdot|y)), \qquad y \in C_s = I_C.$$
Furthermore  by Assumption~\ref{ass:fg} we have that 
$$  q(x|y) p(x|z) >   g(2L_C) f(2L_C), \qquad x,y,z \in I_C.$$
Combining these two results, we obtain for
$$ y \in C_d = I_C$$
that
\begin{eqnarray*}
U^{(s)}_C(y) &=& 
\int_{x \in I_C} \beta_{C}(x|y) \int_{z\in I_C} \alpha_C(z) \Bsbl  q(x|y) p(x|z)- c \Bsbr dz dx \\
&\geq& \int_{x \in I_C} \beta_{C}(x|y) \int_{z\in I_C} \alpha_C(z) \Bsbl  g(2 L_C) f(2 L_C)- c \Bsbr dz dx \\
&=&  \Bsbl  g(2L_C) f(2L_C)- c \Bsbr \int_{x \in I_C} \beta_{C}(x|y) \int_{z\in I_C} \alpha_C(z) dzdx.
\end{eqnarray*}
By assumption we have that
$$\alpha_C(y) > 0, \qquad y \in C_d = I_C$$
and
$$ \int_{\setR} \beta_C(x|y) dx > 0, \qquad y \in C_s = I_C,$$
and we obtain that
$$ \int_{z\in I_C} \alpha_C(z) dz > 0$$
and
$$\int_{x \in I_C} \beta_{C}(x|y) \int_{z\in I_C} \alpha_C(z) dzdx > 0, \qquad y \in C_s = I_C.$$
Furthermore by assumption we have that
$$ 0 < 2L_C < l_0,$$
and it follows that
$$f(2L_C) g(2L_C) - c > 0.$$
Combing the above results, we obtain that
$$ \Bsbl  g(2L_C) f(2L_C)- c \Bsbr \int_{x \in I_C} \beta_{C}(x|y) \int_{z\in I_C} \alpha_C(z) dzdx > 0$$
and
$$U^{(s)}_C(y) > 0, \qquad y \in C_s = I_C.$$

Similarly, for the utility rate for content consumers $y \in C_d = I_C$ we have that
\begin{eqnarray*}
U^{(d)}_C(y) &=& 
\alpha_C(y) \int_{x \in I_C} \int_{z\in I_C} \beta_{C}(x|z)  \Bsbl  q(x|z) p(x|y)- c \Bsbr dz dx \\
&\geq&  \alpha_C(y) \int_{x \in I_C}  \int_{z\in I_C}  \beta_{C}(x|z) \Bsbl  g(2 L_C) f(2 L_C)- c \Bsbr dz dx \\
&=&  \alpha_C(y) \Bsbl  g(2 L_C) f(2 L_C- c \Bsbr \int_{x \in I_C} \int_{z\in I_C} \beta_{C}(x|z)  dzdx.
\end{eqnarray*}
By the same argument as given above we obtain that
$$U^{(d)}_C(y) > 0, \qquad y \in C_d = I_C,$$
and the result of lemma follows. 
\end{proof}

\newpage
\subsection{Proof of Proposition~\ref{prop:nash}}
In this subsection we prove Proposition~\ref{prop:nash}. We do this by construction, i.e. we construct a  Nash equilibrium $(\setC^*, \{\alpha^*_{\setC}(y)\}_{y \in \setR},  \{\beta^*_{\setC}(\cdot|y)\}_{y \in \setR})$ where
$$\setC^* = \{ C^1,...,C^K\},$$
and the communities
$$C^k=(C^k_d,C^k_s), \qquad k=1,...,K,$$
are given by a set of mutually non-overlapping intervals $\{I_k\}_{k=1,...,K}$ of equal length, i.e. we have that
$$C^k=(C^k_d,C^k_s) = (I_k,I_k), \qquad k=1,...,K,$$
with
$$|I_k| = |I_{k'}|, \qquad k,k'=1,...,K.$$

We construct such a Nash equilibrium as follows.
By Assumption~\ref{ass:fg} the functions $f$ and $g$ are continuous, and it follows that under the condition
$$f(0)g(0) - c>0$$
there exists a $l_0 > 0$ such that
$$f(2l_0) g(2l_0) - c > 0.$$
Furthermore, there exists a $l^*$, $0< l^* < l_0$,  and a integer $K$ such that
$$ K l^* = L,$$
where $L$ is the half-length of the set $\setR = [-L,L)$. 
Let 
$$b_k = 2 k l^*, \qquad k=0,...,K$$
and let
$$I_k = [b_{k-1},b_{k}), \qquad k=1,...K.$$
Using these definitions, we construct a community structure $(\setC^*, \{\alpha^*_{\setC}(y)_{y \in \setR},  \{\beta^*_{\setC}(\cdot|y)_{y \in \setR}\}$ where
$$\setC^* = \{C^1,...,C^K\}$$
and
$$C^k= (C^k_d,C^k_d) = (I_k,I_k), \qquad k=1,...K.$$
Furthermore, for $k=1,...,K$, we let
$$\alpha^*_{C^k}(y) = E_p, \qquad y \in I_k,$$
and
$$\beta^*_{C^k}(x|y) = E_q \delta(x^*_{C^k}(y)-x), \qquad y \in I_k,$$
where
$$x^*_{C^k}(y) = \arg\max_{x \in \setR} q(x|y)P_{C^k}(x)$$
and
$$P_{C^k}(x) =  E_p \int_{y \in I_k} p(x|y) dy, \qquad x \in \setR.$$
Note that $Q^*_{C^k}(x)$, $k=1,...,K$,  is then given by
$$Q^*_{C^k}(x) =  E_q \int_{y \in I_k} E_q \delta(x^*_{C^k}(y)-x) q(x|y) dy, \qquad x \in \setR.$$

As we have that
$$ 0 < l^* \leq l_0,$$
it follows from Lemma~\ref{lemma:feasible_community} the the communities $C^k$, $k=1,..,K$, are feasible communities under the rate allocations  $\{\alpha^*_{\setC}(y)\}_{y \in \setR}$ and   $\{\beta^*_{\setC}(\cdot|y)\}_{y \in \setR})$, and we have that
$$ U^{(d)}_{C^k}(y) > 0, \qquad y \in I_k, k=1,...,K,$$
and
$$ U^{(s)}_{C^k}(y) > 0, \qquad y \in I_k, k=1,...K.$$

Furthermore,  by Lemma~\ref{lemma:UdC}~and~\ref{lemma:single_community_consumption}, it follows that the above rate functions $\alpha^*_{C^k}(y)$ and $\beta^*_{C^k}(\cdot|y$ are indeed the optimal rate functions for $y \in I_k$, $k=1,...,K$, i.e. we have that
\begin{align*}
&\alpha^*_{\setC}(y) = 
&\underset{\alpha_{\setC}(y): || \alpha_{\setC}(y) || \leq E_p}{\arg\max} \sum_{C \in \setC}  \alpha_{C}(y) \int_{\setR} \Bsbl Q^*_C(x) p(x|y) - E_q c \Bsbr dx
\end{align*}
and
\begin{align*}
&\beta^*_{\setC}(\cdot|y) = 
&\underset{\beta_{\setC}(\cdot|y): || \beta_{\setC}(\cdot|y)  || \leq E_q}{\arg\max} \sum_{C \in \setC} \int_{\setR} \beta_{C}(x|y) \Bsbl q(x|y) P_C(x) - 2l^*E_p c \Bsbr dx.
\end{align*}
To see this, note the following.
As we have that
$$ U^{(d)}_{C^k}(y) > 0, \qquad y \in I_k, k=1,...,K,$$
it follows from Lemma~\ref{lemma:UdC} that
\begin{eqnarray*}
C^k
&=& \arg\max_{C \in \setC^*}   \int_{\setR} \Bsbl p(x|y)Q^*_C(x) - \beta_C(x) c \Bsbr  dx \\
&=&    \arg\max_{C \in \setC^*} \int_{\setR} \Bsbl p(x|y)Q^*_C(x) - E_q c \Bsbr dx, \qquad y \in I_k, k=1,...,K.
\end{eqnarray*}
From Lemma~\ref{lemma:single_community_consumption}, we then obtain that is it optimal for agents $y \in C^k_d = I^k$ to consume content exclusively in community $C^k$ and set
$$ \alpha_{C^k}(y) = E_p, \qquad y \in C^k_d = I^k.$$
Similarly, as we have that
$$ U^{(s)}_{C^k}(y) > 0, \qquad y \in I_k, k=1,...K,$$
it follows from  Lemma ~\ref{lemma:UsC} that
\begin{eqnarray*}
C^k
&=& \arg\max_{C \in \setC^*}\Big ( q(\xsC(y)|y)P_C(\xsC(y)) - \alpha_C c \Big )\\
&=& \arg\max_{C \in \setC^*}\Big ( q(\xsC(y)|y)P_C(\xsC(y)) - 2l^*E_p c \Big ), \qquad y \in C^k_s = I^k, k =1,...K,
\end{eqnarray*}
where
$$\xsC(y) =  \arg \max_{x \in \setR} q(x|y) P_C(x), \qquad C \in \setC^*, y \in \setR.$$
From Lemma~\ref{lemma:single_community_production} we then obtain  that is it optimal for agents $y \in C^k_s = I^k$ to produce content exclusively in community $C^k$, and set
$$ \beta^*_C(x|y) = \left \{
\begin{array}{rl}
E_q \delta(x^*_{C^k}(y) -x) , & C = C^k \\
0, & \mbox{otherwise,}
\end{array} \right .$$
where by Lemma~\ref{lemma:xs} $x^*_{C^k}(y)$ is the unique solution to the optimization problem
$$ x^*_{C^k}(y) =  \arg \max_{x \in \setR} q(x|y) P_{C^k}(x), \qquad y \in \setR, k=1,...,K.$$

It then follows that the community structure  $(\setC^*, \{\alpha^*_{\setC}(y)_{y \in \setR},  \{\beta^*_{\setC}(\cdot|y)_{y \in \setR}\}$ is a Nash equilibrium.
Furthermore, by construction we have that
$$\cup_{k=1,...,K} I_k = \setR,$$
and  $(\setC^*, \{\alpha^*_{\setC}(y)_{y \in \setR},  \{\beta^*_{\setC}(\cdot|y)_{y \in \setR}\}$ is a covering community structure.

The result of the proposition then follows.

\newpage
\section{Proof for Proposition~\ref{prop:size}}\label{app:size}
In this appendix we prove  Proposition~\ref{prop:size}.

 Let $(\setC^*, \{\alpha^*_{\setC}(y)\}_{y \in \setR},  \{\beta^*_{\setC}(\cdot|y)\}_{y \in \setR})$ be a Nash equilibrium as given by Proposition~\ref{prop:nash}, and
consider a given community $C \in \setC^*$.  We then prove the proposition by characterizing the utility for content consumption $\UsC(y_0)$
for agent
$$y_0 = mid(I_C)$$
in this community. Note that we have that
$$\UsC(y_0) = E_pE_q \int_{z \in I_C} \Bsbl q(\xs(y_0)|y_0)p(\xs(y_0)|z) - c \Bsbr dz.$$

By Proposition~\ref{prop:xs} we have that
$$ \xs(y_0) = y_0 = \arg\max_{x \in \setR} q(x|y_0) P_C(x).$$
It then follows
$$\UsC(y_0) 
= E_pE_q \int_{z \in I_C} \Bsbl g(0)f(||\xs(y_0)-z||) - c \Bsbr dz
=  2 E_pE_q \int_0^{L_C} [g(0)f(x) - c] dx,$$
where $L_C$ is the length of the interval $I_C$.
By Proposition~\ref{prop:nash}  we have that
$$ \UsC(y_0) > 0,$$
and we obtain that
$$\int_0^{L_C} \Bsbl g(0)f(x) - c \Bsbr dx > 0.$$
The result then follows combining the above result with the fact that by Assumption~\ref{ass:fg} the function $f(x)$, $x \in [0,L]$, is strictly decreasing in $x$.

\end{document}